\newcommand*{\cp}[2]{#1 \,\square\, #2}    
\newcommand{\house}{\,\widehat{\square}\,}  
\newcommand*{\cphat}[2]{#1 \house #2}
\newtheorem{theorem}{Theorem}
\newtheorem{proposition}[theorem]{Proposition}
\newtheorem{corollary}[theorem]{Corollary}
\newtheorem{definition}[theorem]{Definition}
\newtheorem{example}[theorem]{Example}
\title{A lattice structure for ancestral configurations arising from the relationship between gene trees and species trees}
\author{Egor Lappo$^*$ and Noah A.\ Rosenberg\thanks{Department of Biology, Stanford University, Stanford, CA 94305, USA}}
\date{\today}
\begin{document}

\maketitle

\noindent {\bf Abstract.} To a given gene tree topology $G$ and species tree topology $S$ with leaves labeled bijectively from a fixed set $X$, one can associate a set of \emph{ancestral configurations}, each of which encodes a set of gene lineages that can be found at a given node of the species tree. We introduce a lattice structure on ancestral configurations, studying the directed graphs that provide graphical representations of lattices of ancestral configurations. For a matching gene tree topology and species tree topology $G=S$, we present a method for defining the digraph of ancestral configurations from the tree topology by using iterated cartesian products of graphs. We show that a specific set of paths on the digraph of ancestral configurations is in bijection with the set of \emph{labeled histories} --- a well-known phylogenetic object that enumerates possible temporal orderings of the coalescences of a tree. For each of a series of tree families, we obtain closed-form expressions for the number of labeled histories by using this bijection to count paths on associated digraphs. Finally, we prove that our lattice construction extends to nonmatching tree pairs, and we use it to characterize pairs $(G,S)$ having the maximal number of ancestral configurations for a fixed $G$. We discuss how the construction provides new methods for performing enumerations of combinatorial aspects of gene and species trees.

\vskip .3cm

{\small
\noindent {\bf Key words:} ancestral configurations, gene trees, labeled histories, lattices, partial orders, species trees
}

\vskip .15cm

{\small
\noindent {\bf Mathematics subject classification (2020):} 05A15, 05A19, 05C05, 05C30, 92D15
}

\clearpage
\section{Introduction}

A central problem in phylogenetics is the inference of a species tree, representing evolutionary descent of species, from gene trees that describe the evolution of genealogical lineages within species. To perform this inference, procedures for estimating species trees often make use of computations of a likelihood function for species trees given information on gene trees. The  computation is performed by evaluating a probabilistic expression over each element of a set encoding purely combinatorial relations between gene tree and species tree topologies. These sets can in turn be studied mathematically to advance the theoretical understanding of species tree estimation, and of the relationships of gene trees and species trees more generally.

The computation of probabilities of gene tree topologies conditional on species trees can make use of any of a rich variety of combinatorial structures that describe pairwise relationships between tree topologies~\citep{degnan2005gene, rosenberg2007counting, degnanRosenbergStadler2012, degnanStadler2012polynomial, wu2012coalescent, wu2016cpct}. One type of structure that has been shown to be promising for such computations is the \emph{ancestral configuration}~\citep{wu2012coalescent}. Consider a species tree and a gene tree that evolves along the branches of that species tree. For each node of the species tree topology, an ancestral configuration lists the lineages of the gene tree present at a point right before that node, when viewed backward in time.

An algorithm of \cite{wu2012coalescent} performs a probability calculation over sets of ancestral configurations possible for a given gene tree topology and species tree topology. Mathematical properties of ancestral configurations then inform the calculation. For a matching gene tree and species tree topology, \cite{disanto2017enumeration} obtained enumerative results on the number of root ancestral configurations, or ancestral configurations present at the species tree root. They showed that the number of ancestral configurations present at a species tree node can be obtained recursively from corresponding values at its immediate descendant nodes. They also obtained a variety of results concerning exponential growth in the tree size of the number of ancestral configurations. Further features of this exponential growth have been obtained by \cite{disanto2022distributions} and \cite{disanto2024distributions}.

In this paper, we introduce a new way to view ancestral configurations that is based on introducing a lattice structure on the set of ancestral confgurations at a given node. We show that the set of root ancestral configurations for a given pair of trees possesses a rich algebraic structure that mirrors the recursive nature of binary trees. The lattice formulation enables us to prove novel enumerative results connecting ancestral configurations with another set of structures, the labeled histories. 

In Section~\ref{sec:def}, we give the necessary biological and mathematical preliminaries. In Section~\ref{sec:construction}, we define a lattice structure on the set of ancestral configurations for matching gene trees and species trees, and in Section~\ref{sec:diagrams}, we introduce a visual representation of lattices of ancestral configurations in the form of Hasse diagrams. In Section~\ref{sec:chains}, we prove a bijection between the set of maximal chains in the lattice of ancestral configurations for a given tree and its set of labeled histories. Section~\ref{sec:examples} shows examples of Hasse diagrams associated with ancestral configurations for all unlabeled tree topologies with eight or fewer leaves. In Section~\ref{sec:structure_description}, we prove a recurrence relation for Hasse diagrams analogous to that for sets of ancestral configurations. Applications of the recurrence to specific families of trees appear in Section~\ref{sec:families}. Finally, in Section~\ref{sec:nonmatching}, we extend the lattice structure to include nonmatching tree pairs. We conclude with a discussion in Section \ref{sec:discussion}.

\section{Definitions}\label{sec:def}

We focus on leaf-labeled binary rooted trees. In this paper, a \emph{tree} is assumed to mean a binary rooted tree with leaves labeled uniquely by elements of a set of labels $X = \{a, b, c, \ldots\}$. 

In our definitions of ancestral configurations, we closely follow \cite{disanto2017enumeration}; see \cite{steel2016phylogeny} for general terminology. Consider a gene tree $G$ and species tree $S$, both of which represent labeled topologies. For a given gene tree $G$ and species tree $S$ with labels in bijective correspondence, a \emph{realization} $R$ of $G$ in $S$ is one of the evolutionary possibilities of $G$ appearing on $S$. If we consider the edges of $S$ to be populations, then $R$ specifies the lineages, or edges, of $G$ that are present in each of the edges of $S$.

For a given node $v$ of $S$, we let $C(v, G, S, R)$ denote a set of lineages of $G$ present in $S$ at a point right before $v$, when viewed backward in time. We say that $C(v,G,S,R)$ is an \emph{ancestral configuration of $G$ at a node $v$ of $S$}. Denote by $\mathcal R(G,S)$ the set of \emph{all realizations} of $G$ in $S$. We can define a set 
\[
    C(v, G, S) = \{C(v, G, S, R) \mid R \in \mathcal R(G,S)\}
\]
of \emph{ancestral configurations} at node $v$. The elements of $C(v, G, S)$ enumerate the ways that lineages of $G$ can reach the point right before node $v$ in $S$, viewed backward in time. Figure~\ref{fig:realizations} presents example realizations of $G$ in $S$ when trees have matching topologies.

\begin{figure}
    \centering
    \begin{minipage}{0.44\textwidth}
        \centering
        \includegraphics[scale=0.38]{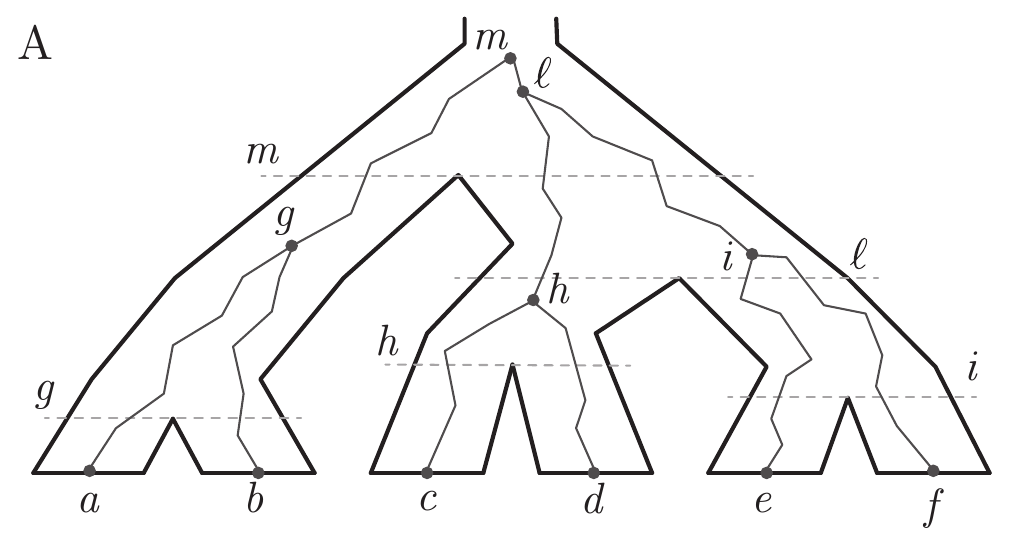}
    \end{minipage}\hfill
    \begin{minipage}{0.44\textwidth}
        \centering
        \includegraphics[scale=0.38]{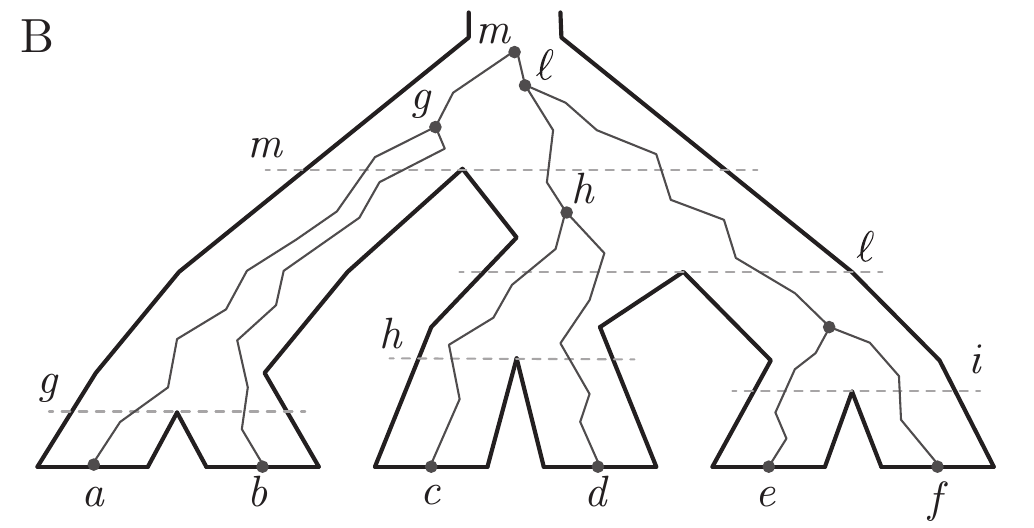}
    \end{minipage}
    \caption{Two example realizations of a gene tree $G$ in a species tree $S$ with matching topology. We identify each lineage of $G$ by its immediate descendant node. The internal nodes of the species tree $S$ are represented by horizontal dashed lines. (A) In this realization, the ancestral configuration is $\{c,d\}$ at node $h$ of $S$; $\{e,f\}$ at node $i$ of $S$; $\{a,b\}$ at node $g$ of $S$; $\{h,e,f\}$ at node $\ell$ of $S$; $\{g,h,i\}$ at node $m$ of $S$. (B) In this realization, the ancestral configuration is $\{c,d\}$ at node $h$ of $S$; $\{e,f\}$ at node $i$ of $S$; $\{a,b\}$ at node $g$ of $S$; $\{c,d,i\}$ at node $\ell$ of $S$; $\{a,b,h,i\}$ at node $m$ of $S$.}
    \label{fig:realizations} 
\end{figure}

Of particular interest are \emph{root ancestral configurations}, for which $v$ is the root node of $S$. \cite{disanto2017enumeration} showed that knowing the number of root ancestral configurations of a gene tree $G$ enables a calculation of an upper bound on the sum across nodes $v$ of $S$ of its numbers of ancestral configurations at $v$. We denote the set of root ancestral configurations for a pair $(G,S)$ by $C(G,S)$, and its cardinality by $c(G, S) \equiv |C(G,S)|$. We often refer to root ancestral configurations as \emph{root configurations} for short, and to ancestral configurations simply as \emph{configurations}.

Finally, in most of the paper,  we restrict ourselves to the case of equal topologies for a gene and species tree, $G=S$; in this case, we denote $C(S,S)$ and $c(S,S)$ by $C(S)$ and $c(S)$, respectively. After we present our results for tree pairs with matching topologies, we extend them to nonmatching pairs ($G\neq S$) in Section~\ref{sec:nonmatching}.

\section{Lattice of ancestral configurations}\label{sec:construction}

Let $S$ be a tree with $n$ leaves. We introduce the structure of a partially-ordered set, or poset, on the set of root ancestral configurations $C(S)$ for a pair consisting of an identical, or matching, gene tree topology and species tree topology. 

Recall that a \emph{poset} $P$ is a set together with a binary relation $\prec$ that satisfies, for all $a,b,c \in P$:
	\begin{enumerate}[label=(\roman*)]
		\item $a \prec a$;
		\item If $a \prec b$ and $b \prec a$ then $a = b$;
		\item If $a \prec b$ and $b \prec c$ then $a \prec c$.
	\end{enumerate}
The relation $\prec$ is a \emph{partial order}. A poset $(P,\prec)$ is termed  a \emph{lattice} if each pair of elements of $P$ has both a greatest lower bound and a least upper bound with respect to $\prec$. A lattice is \emph{bounded} if it has both a maximal and a minimal element \citep{gratzer}.

In the case of ancestral configurations, for $a,b \in C(S)$ we introduce a relation $\prec$. We say $a \prec b$ if by coalescing some lineages (possibly none) from the ancestral configuration $a$, we can obtain the ancestral configuration $b$, as illustrated in Figure~\ref{fig:poset_relation}. Verification of the three properties (i), (ii), and (iii)  is immediate, so that $(C(S),\prec)$ is a poset.

\begin{figure}
    \centering
    \hspace{1cm}\includegraphics[width=0.65\textwidth]{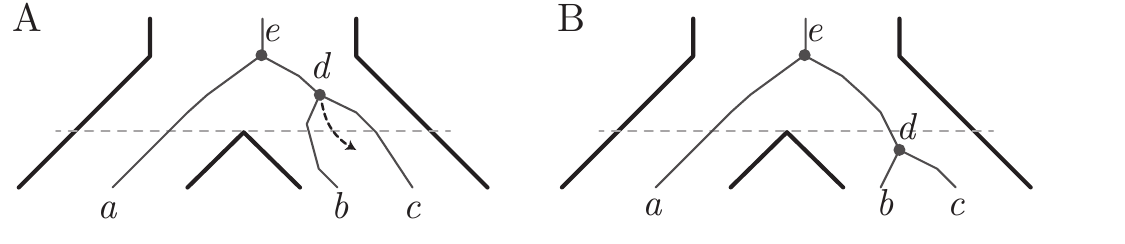}
    \caption{A visual representation of the partial order relation on ancestral configurations. (A) At the internal node of the species tree represented by the dashed line, the ancestral configuration is $\{a,b,c\}$. (B) At the internal node of the species tree, the ancestral configuration is $\{a,d\}$. By the partial order relation, $\{a,b,c\} \prec \{a,d\}$. Comparing (A) and (B), coalescing lineages of the ancestral configuration corresponds to moving a gene tree node, $d$ in this case, across a species tree node.}
    \label{fig:poset_relation}
\end{figure}

In fact, $C(S)$ has an even richer structure: it is a \emph{bounded lattice}, possessing both a least element and a greatest element. A least upper bound for $a,b \in C$ is obtained by separately coalescing lineages in $a$ and lineages in $b$ until both procedures reach the same configuration, as shown in Figure~\ref{fig:poset_relation}. This process of repeated coalescence necessarily terminates, giving a root ancestral configuration containing two lineages---the immediate descendant lineages of the root. The resulting configuration, which we can denote $c_{\max}$, satisfies $a \prec c_{\max}$ and $b \prec c_{\max}$, and hence is an upper bound for $a,b$. Because the sets $A = \{c \in C\mid a \prec c\}$ and $B = \{c \in C\mid b \prec c\}$ are finite, $A \cap B$ has a minimal element, which is the required least upper bound. The proof of existence of an infimum follows a similar argument, noting that the configuration $c_{\min}$ containing all leaf nodes satisfies $c_{\min} \prec c$ for all $c \in C$.

We have therefore demonstrated the following proposition.

\begin{proposition}\label{prop:acs_are_lattices}
	The set $C(S)$ of root ancestral configurations for the matching pair of labeled topologies $(G,S)$ is a bounded lattice.
\end{proposition}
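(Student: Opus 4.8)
The plan is to realize $C(S)$ concretely as a family of partitions of the leaf set $X$ and to exploit the laminar (nested-or-disjoint) structure of the clades of $S$. First I would identify each root configuration $a \in C(S)$ with the partition $\pi(a)$ of $X$ whose blocks are the leaf sets of the lineages in $a$. Since distinct lineages of a configuration have disjoint descendant leaf sets that together exhaust $X$, this is a genuine partition, and each block is a clade of $S$ (the descendant leaf set of some node). Under this identification the relation $a \prec b$ (coalesce lineages of $a$ to reach $b$) becomes exactly the refinement order: $a \prec b$ iff every block of $\pi(a)$ is contained in a block of $\pi(b)$. I would record as a preliminary step that, in the matching case, the configurations correspond precisely to the \emph{tree-compatible} partitions refining the root split, i.e.\ those whose blocks are all clades of $S$ and none of which is all of $X$; the finest, all singletons, is $c_{\min}$, and the coarsest, the root split into the two maximal proper clades, is $c_{\max}$, which already gives boundedness.

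Next I would establish that meets exist. Given $a,b \in C(S)$, form the common refinement $\pi(a) \wedge \pi(b)$, whose blocks are the nonempty intersections of a block of $\pi(a)$ with a block of $\pi(b)$. The crucial point is that any two clades of a tree are either nested or disjoint, so each such intersection is again a clade of $S$; hence the common refinement is tree-compatible, and it still refines $c_{\max}$ because $\pi(a)$ does. Thus it lies in $C(S)$, and by construction it is the greatest lower bound of $a$ and $b$. This shows every pair in $C(S)$ has a meet.

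Finally I would invoke the standard fact that a finite poset possessing a greatest element in which every pair has a meet is automatically a lattice. For $a,b \in C(S)$ the set $U$ of common upper bounds is nonempty (it contains $c_{\max}$) and finite, so $m = \bigwedge U$ exists; since $a$ and $b$ are lower bounds of $U$, we get $a \prec m$ and $b \prec m$, so $m \in U$, whence $m$ is the least element of $U$ and equals $a \vee b$. Combined with the bottom element $c_{\min}$, this makes $C(S)$ a bounded lattice.

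The main obstacle, and the reason the informal description of coalescing $a$ and $b$ until they agree needs extra care, is that exhibiting some minimal common upper bound is not enough: a finite poset can have a pair with several incomparable minimal upper bounds and hence no join at all. The real work is to guarantee a \emph{unique} least upper bound, and the cleanest route is to verify meets together with the presence of a top and then appeal to the lemma above, rather than to reason about upper bounds directly. I would also note, as a structural bonus beyond what the proposition requires, that the same partition description identifies $C(S)$ with the lattice of order ideals of the poset of non-root internal nodes of $S$, so $C(S)$ is in fact distributive.
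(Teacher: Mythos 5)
Your proof is correct, and it takes a genuinely different route from the paper's. The paper stays entirely inside the coalescence order: it exhibits the two-lineage configuration $c_{\max}$ as a common upper bound of $a$ and $b$, observes that the set of common upper bounds $A \cap B$ is finite and nonempty, and declares a minimal element of $A \cap B$ to be ``the required least upper bound,'' with the infimum handled symmetrically via $c_{\min}$. That last step is exactly the weakness you flag: in a finite poset with a top element a pair can have several incomparable minimal upper bounds and hence no join at all, so minimality alone does not give leastness---your critique of the naive argument applies verbatim to the paper's own proof. Your route closes this gap: encoding each configuration as the partition of $X$ into the proper clades subtended by its lineages, laminarity of clades closes this family under common refinement, giving binary meets; the standard lemma (finite poset, top element, binary meets imply a lattice) then yields joins, and $c_{\min}$, $c_{\max}$ give boundedness. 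What the paper's argument buys is brevity and direct contact with the coalescence picture; what yours buys is a complete proof plus extra structure, since the order-ideal description shows $C(S)$ is distributive and, on taking cardinalities, recovers the recurrence of eq.~\ref{eq:recur_count}. Two steps of your plan should be made explicit in a full write-up: (i) that every partition of $X$ into proper clades actually arises from a realization (surjectivity of your encoding; a construction like the one in the paper's proof of Proposition~\ref{prop:nonmatching_are_subsets} does this), and (ii) that refinement coincides with the coalescence order, i.e., any refinement relation can be implemented by a sequence of sibling-pair coalescences (coalesce bottom-up: a deepest lineage not yet merged into its target clade always has its sibling lineage present). Both are routine, but they are where the tree structure enters and should not be left implicit.
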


We illustrate the proposition with an example. Consider the labeled topology $S = ((a,b)_g,((c,d)_h,(e,f)_i)_\ell)_m$ on $6$ leaves in Figure~\ref{fig:realizations}, where the subscripts label the nodes. The set $C(S)$ of root ancestral configurations is
\[
	C = \left\{\begin{array}{c}
		\{g,\ell\}, \{a,b,\ell\}, \{g,c,d,e,f\}, \{g,c,d,i\}, \\
		\{a,b,c,d,e,f\}, \{g,h,e,f\}, \{a,b,h,e,f\}, \\
		\{a,b,c,d,i\}, \{g,h,i\}, \{a,b,h,i\}
	\end{array}\right\}.
\]
In particular, $\{a,b,c,d,e,f\}$ is the minimal element and $\{g,\ell\}$ is the maximal element. By coalescing lineage pairs, we see that we have relations, $\{a,b,h,i\}\prec\{a,b,\ell\}$, $\{g,c,d,e,f\}\prec\{g,c,d,i\}$, and $\{g,c,d,e,f\}\prec\{g,h,e,f\}$, among others.

Now that we have defined a lattice algebraic structure on $C(S)$, we can use ideas of lattice theory to prove results about ancestral configurations. We next focus on representations of lattices by diagrams or graphs.

\section{Diagrams associated to lattices of ancestral configurations}\label{sec:diagrams}

For a lattice $L$, one can associate a directed graph, or digraph, $D(L)$. This digraph $D(L)$ has a node for each element $x$ in $L$, and an edge from $x$ to $y$ if and only if $y$ \emph{covers} $x$---that is, if and only if (i) $x \prec y$, (ii) no $z$ exists such that $x \prec z \prec y$. A visual representation of $D(L)$ with all of the directed edges oriented in the same direction (say, upward) is the \emph{Hasse diagram} of $L$ \citep{gratzer}. 

For convenience, we denote simply by $D(S)$ the digraph $D\big( C(S) \big)$ associated with the lattice of root ancestral configurations of the species tree $S$. To illustrate this concept, we consider two examples. Figure~\ref{fig:ex1} shows the labeled topology $S = ((a,b),c),(d,e))$ and a corresponding Hasse diagram for its set $C(S)$ of root configurations. We see that $S$ has six root configurations, with maximal element $\{g,h\}$ and minimal element $\{a,b,c,d,e\}$. An edge is drawn only between pairs of ancestral configurations that differ by substituting two lineages with one coalesced lineage. We sometimes use $D(S)$ to refer to both a digraph and its visual representation as a Hasse diagram. 

Figure~\ref{fig:ex2} shows the balanced labeled topology $S = (((a,b),(c,d)),((e,f),(g,h)))$ and its corresponding Hasse diagram. We see that $S$ has $25$ root configurations; like $S$, the Hasse diagram is symmetric. The subsets of $C(S)$ with 8, 7, 6, 5, 4, 3, and 2 elements have sizes 
1, 4, 6, 6, 5, 2, and 1, respectively.

We note that the Hasse diagram for the set $C(S)$ of root ancestral configurations subsumes the ancestral configurations at each of the nodes of $S$. Let $v \in S$ be an internal node of $S$, and let $S|_v$ be the subtree of $S$ rooted at $v$. We have a natural inclusion $C(S|_v) \hookrightarrow C(S)$ that takes an ancestral configuration $x$ in $C(S|_v)$ and associates to it an ancestral configuration $x \cup \{\text{leaves of $S$ that are not in $S|_v$}\}$. The Hasse diagram for $C(S|_v)$ can then be viewed as contained in the Hasse diagram for $C(S)$. 

\begin{figure}[tb]\centering
    \hspace{1.6cm}\begin{minipage}[b]{0.45\textwidth} A 
            
            {\centering
        \includegraphics[height=1.5in]{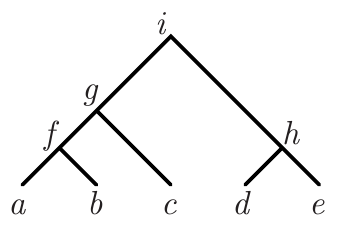}}
    \end{minipage}\hfill
    \begin{minipage}[b]{0.45\textwidth} B 
            
            {\centering
        \includegraphics[height=1.5in]{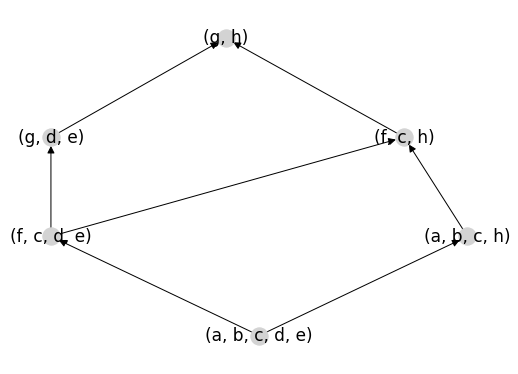}}
    \end{minipage}
\caption{A labeled topology on 5 leaves and its associated Hasse diagram. (A) A labeled topology $S$ on $5$ leaves. (B) The corresponding Hasse diagram for $D(S)$.}
\label{fig:ex1}
\end{figure}

\begin{figure}[tb]\centering
    \hspace{-0.5cm}\begin{minipage}[b]{0.45\textwidth} A 
            
            {\centering
        \includegraphics[height=1.5in]{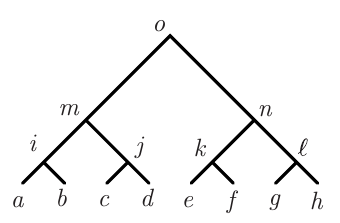}}
    \end{minipage}\hspace{-1cm}
    \begin{minipage}[b]{0.45\textwidth} B 
            
            {\centering
        \includegraphics[height=1.5in]{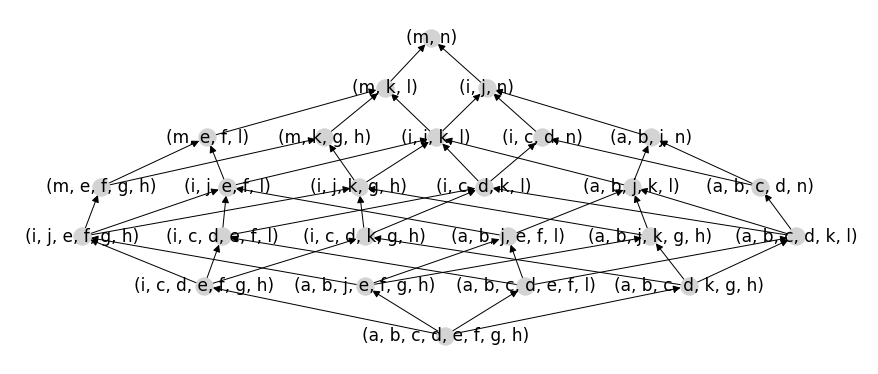}}
    \end{minipage}
\caption{A balanced labeled topology on 8 leaves and its associated Hasse diagram. (A) A balanced labeled topology $S$ on $8$ leaves. (B) The corresponding Hasse diagram for $D(S)$.}
\label{fig:ex2}
\end{figure}

\section{Maximal chains and labeled histories}\label{sec:chains}

The bounded lattice structure for set $C(S)$ suggests an enumerative question: what are the \emph{maximal chains} in $C(S)$? For elements $x$, $y$, $z$ of a partial order with $x \neq y$, we say that $y$ \emph{covers} $x$ if $x \prec z \prec y$ implies $z = x \text{ or } z = y$. The maximal chains in lattice $C(S)$ are the sequences $x_1, x_2, \ldots, x_k$ of elements of $C(S)$, such that $x_1$ is the minimal element, $x_k$ is the maximal element, and for each $i$ from 1 to $k-1$, $x_{i+1}$ covers $x_i$.

Each successive entry in sequence $x_1, x_2, \ldots, x_k$ represents a coalescence of two nodes in the previous entry. For a labeled topology with $n$ leaves, $n-1$ coalescences take place. The minimal element of a maximal chain in $C(S)$ has $n$ nodes, and the maximal element has 2 nodes, so that maximal chains have length $n-1$.

Geometrically, we can reinterpret this question in terms of diagrams, noting that the maximal chains of a lattice correspond to the paths from the minimal to the maximal element of its associated Hasse diagram. We trivially obtain the following proposition.
\begin{proposition}\label{prop:equivalence}
	The enumeration of maximal chains in the lattice $C(S)$ is equivalent to the enumeration of paths in the digraph $D(S)$ that start at the minimal element and end at the maximal element. 
\end{proposition}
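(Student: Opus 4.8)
The plan is to show that the two enumeration problems count exactly the same objects by exhibiting a bijection that is essentially the identity on sequences. First I would place the two definitions side by side. A maximal chain in $C(S)$ is a sequence $x_1, x_2, \ldots, x_k$ with $x_1 = c_{\min}$, $x_k = c_{\max}$, and $x_{i+1}$ covering $x_i$ for each $i$ from $1$ to $k-1$. A path in $D(S)$ from the minimal to the maximal element is a sequence of vertices $y_1, y_2, \ldots, y_m$ with $y_1 = c_{\min}$, $y_m = c_{\max}$, and a directed edge from $y_j$ to $y_{j+1}$ for each $j$.

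The key step is to invoke the definition of the digraph $D(S) = D(C(S))$ from Section~\ref{sec:diagrams}: there is a directed edge from $x$ to $y$ precisely when $y$ covers $x$. Substituting this equivalence into the definition of a path shows that the condition ``there is an edge from $y_j$ to $y_{j+1}$'' is verbatim the condition ``$y_{j+1}$ covers $y_j$.'' Hence a path from $c_{\min}$ to $c_{\max}$ and a maximal chain are described by identical constraints, and the map sending a chain to the corresponding sequence of vertices, together with its inverse, is a bijection between the two sets.

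I would then record the one point that merits a sentence of care, namely that a directed path in $D(S)$ automatically consists of distinct vertices. Because an edge exists only when $x \prec y$ with $x \neq y$, traversing edges strictly increases the configurations in the partial order $\prec$; antisymmetry (property (ii) in the definition of a poset) then forbids any repeated vertex, so no directed cycle can occur and every directed walk from $c_{\min}$ to $c_{\max}$ is a genuine path, hence a chain. This also confirms that such paths all have the common length $n-1$ computed above, matching the established length of the maximal chains.

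There is no real obstacle here; the statement is ``trivial'' precisely because the digraph $D(S)$ was constructed so that its edges encode the covering relation of the lattice. The only substantive content is the explicit identification of the two vocabularies---the order-theoretic notion of a \emph{maximal chain} and the graph-theoretic notion of a \emph{path between two distinguished vertices}---after which the asserted equivalence of the enumerations follows with no further computation.
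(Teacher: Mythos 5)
Your proposal is correct and takes essentially the same route as the paper, which treats the statement as immediate from the fact that edges of $D(S)$ are defined exactly by the covering relation of $C(S)$. Your extra observation that antisymmetry of $\prec$ rules out repeated vertices in a directed walk is a reasonable bit of added care, but it does not change the substance of the argument.
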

Consider the labeled topology $S$ from Figure~\ref{fig:ex1}A. We see that $C(S)$ has exactly three maximal chains: $\{a,b,c,d,e\} \prec \{f,c,d,e\} \prec \{g,d,e\} \prec \{g,h\}$,  $\{a,b,c,d,e\} \prec \{f,c,d,e\} \prec \{f,c,h\} \prec \{g,h\}$,  $\{a,b,c,d,e\} \prec \{a,b,c,h\} \prec \{f,c,h\} \prec \{g,h\}$. These maximal chains are in bijective correspondence with the three paths from the bottom to the top in the diagram in Figure~\ref{fig:ex1}B.

One consequence of Proposition \ref{prop:equivalence} is a connection of maximal chains of $C(S)$ with labeled histories for $S$. A \emph{labeled history} for a tree $S$ is a sequence of its coalescences, or, more precisely, a ranking of the internal nodes of $S$ such that if a node $u$ is a descendant of $v$, then $u$ is ranked higher than $v$ \citep[p.~45]{steel2016phylogeny}. 
The following theorem states the connection between the set of root ancestral configurations for a labeled topology $S$ and the set of labeled histories for $S$.
\begin{theorem}\label{thm:histories_paths}
	For a labeled topology $S$, the following three sets are in bijective correspondence: (i) the maximal chains of the lattice $C(S)$, (ii) the paths from the minimal element to the maximal element of the Hasse diagram of $D(S)$, and (iii) the labeled histories of $S$.
\end{theorem}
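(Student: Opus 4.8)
The plan is to lean on Proposition~\ref{prop:equivalence}, which already identifies the maximal chains (i) with the minimal-to-maximal paths (ii) in $D(S)$; it therefore suffices to construct a bijection between the maximal chains of $C(S)$ and the labeled histories of $S$. Throughout I would use the observation recorded just before the statement, namely that each covering step $x_i \prec x_{i+1}$ in a maximal chain is realized by coalescing exactly one pair of sibling lineages of the current configuration into their parent node of $S$. Thus a maximal chain $c_{\min} = x_1 \prec x_2 \prec \cdots \prec x_{n-1} = c_{\max}$ is precisely a sequence of $n-2$ coalescence events, each of which creates one internal node of $S$, namely every internal node except the root, since $c_{\max}$ consists of the two children of the root.

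First I would fix the combinatorial model of the right-hand side. A labeled history for $S$ is a total ordering of the coalescence events of $S$, equivalently a linear extension of the ancestral partial order on the $n-1$ internal nodes of $S$, in which a node may coalesce only after both of its children have coalesced. The root is the unique maximal element of this partial order, so it is forced to occur last; consequently a labeled history is determined by, and is equivalent to, a linear extension of the induced partial order on the $n-2$ non-root internal nodes. This is exactly the object I will match with a maximal chain.

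The forward map $\Phi$ sends a maximal chain $x_1 \prec \cdots \prec x_{n-1}$ to the sequence $(v_1, v_2, \ldots, v_{n-2})$ of internal nodes created at the successive covering steps, followed by the root as the final coalescence. The key verification is that this sequence is a valid labeled history: to coalesce the two children of $v_i$ at step $i$, both children must be present in $x_i$, and an internal child is present only if it was itself created at an earlier step $v_j$ with $j < i$. Hence each internal node is preceded by all of its internal descendants, which is precisely the defining constraint of a labeled history. For the inverse map $\Psi$ I would take a labeled history, discard its forced final (root) coalescence, and perform the remaining $n-2$ coalescences in the prescribed order starting from $c_{\min}$; the same presence condition guarantees that each coalescence is legal and that the resulting sequence of configurations is a genuine maximal chain from $c_{\min}$ to $c_{\max}$. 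I would then confirm that $\Phi$ and $\Psi$ are mutually inverse, which is immediate since each reconstructs the other's data step by step.

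The main obstacle is not the construction of the maps but the careful matching of the two legality constraints together with the off-by-one bookkeeping at the root: a maximal chain encodes only $n-2$ coalescences, since it terminates at the two-lineage configuration $c_{\max}$, whereas a labeled history records all $n-1$ coalescences. The resolution is the observation that the root event is order-forced to be last, so appending it (respectively deleting it) is a canonical operation that affects neither injectivity nor surjectivity. The crux I would make fully explicit is the equivalence ``both children of $v$ are present in the configuration'' $\Longleftrightarrow$ ``both children of $v$ have already coalesced,'' as this is the precise point where the lattice covering relation and the ancestral ordering of coalescences coincide.
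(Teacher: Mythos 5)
Your proposal is correct and takes essentially the same route as the paper: both invoke Proposition~\ref{prop:equivalence} for the equivalence of (i) and (ii), and then identify each covering step (edge of $D(S)$) with exactly one coalescence to match chains/paths with labeled histories. If anything, your write-up is more careful than the paper's own proof, which glosses over the off-by-one bookkeeping you make explicit---a maximal chain records only $n-2$ coalescences because the root coalescence corresponds to no edge of $D(S)$, and the correspondence with the $n-1$ events of a labeled history works precisely because the root event is order-forced to occur last.
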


This statement is almost trivial, as it is simply a reformulation of the definition of labeled histories. First, the equivalence of (i) and (ii) is a restatement of Proposition \ref{prop:equivalence}. For equivalence of (ii) and (iii), note that each labeled history is a sequence of coalescences, so to each labeled history we can associate a sequence of edges of the diagram of $D(S)$: as described in Section~\ref{sec:diagrams}, an edge from $x$ to $y$ corresponds to the covering of $x$ by $y$, which, according to our definition of the poset relation in $C(S)$, involves precisely one coalescence of lineages of $S$. These edges of $D(S)$ form a path from the minimal element to the maximal element of $C(S)$; to see this, observe that if the edges do not align to form a path, then the labeled history includes coalescences of lineages that are not produced by previous coalescences --- a contradiction. Conversely, given a path in $D(S)$ from the minimal to the maximal element, we can read off a labeled history from the sequence of edges in the path, as each edge is associated to some coalescence of lineages of $S$.

Henceforth, we let $H(S)$ denote the set of labeled histories of $S$, and we let $h(S)\equiv|H(S)|$ denote the cardinality of this set. Because the paths on a digraph $D(S)$ correspond to the labeled histories of $S$ by Theorem \ref{thm:histories_paths}, we use $h(S)$ interchangeably with $h\big(D(S)\big)$ to count paths of $D(S)$. We use the same conventions for $H(S)$ and $H\big(D(S)\big)$.

The number $h(S)$ of labeled histories for a given labeled topology $S$ on $n$ lineages is equal to
\begin{equation}
\label{eq:steel_exp}
	h(S) = \frac{(n-1)!}{\prod_{r = 3}^n (r-1)^{d_r(S)}},
\end{equation}
where $d_r(S)$ is the number of internal nodes of $S$ from which exactly $r$ leaves descend \citep[][page 46]{steel2016phylogeny}. With the help of Theorem~\ref{thm:histories_paths}, this formula can be given a visual interpretation as the number of paths from the minimal element to the maximal element of the Hasse diagram of $D(S)$. 

For example, suppose $S = \mathcal{C}_n$ is a caterpillar tree on $n$ leaves. Then $\mathcal C_n$ possesses a unique linear ordering of its internal nodes. The Hasse diagram for a caterpillar tree is a linear graph with $n-1$ nodes and $n-2$ edges connecting adjacent nodes. Hence, $h(\mathcal C_n) = 1$. In eq.~\ref{eq:steel_exp}, we also see that $h(\mathcal C_n)=1$, as the factorial in the numerator is canceled in the denominator.

More generally, for any seed tree $S$ we can form a \emph{caterpillar-like family} $S_k$ by successively adjoining $k \geq 1$ caterpillar branches to the root of $S$, where a caterpillar branch is an edge that connects to a single leaf. The caterpillar branches introduce a linear ordering of the additional internal nodes ancestral to the root of the seed tree $S$, so that $h(S_k)$ does not depend on $k$ and always equals $h(S)$. Geometrically, adding caterpillar branches to $S$ corresponds to adding one edge and one vertex to the top of the Hasse diagram for $S$, an operation that does not change the number of paths from the minimal element to the maximal element.

\section{Examples}\label{sec:examples}

To begin to understand the structure of lattices of ancestral configurations for labeled topologies, we provide Hasse diagrams (or digraphs $D(S)$) for all trees $S$ of size $n \leq 8$ (Tables~\ref{tbl:examples_n_3_to_7} and \ref{tbl:examples_n_8}). For each of these small trees, to construct the associated digraph, we enumerate all ancestral configurations and characterize their relationship according to $\prec$. For each tree, we list the number of ancestral configurations $c(S)$, representing the number of vertices in the digraph, and the number of labeled histories $h(S)$ (eq.~\ref{eq:steel_exp}), representing the number of paths from the minimal to the maximal element.

Informally, we can observe a number of features of the digraphs. For example, the digraphs of ancestral configurations for caterpillar trees are simply lines. For trees that possess an internal node $v$ all of whose ancestral nodes have a leaf as an immediate descendant, the digraph has a subgraph associated with $v$ and a line for the ancestors of $v$. For trees whose two immediate subtrees of the root are a caterpillar and a cherry, the digraph can be described as having two lines from the minimal to maximal element, with paths connecting these two lines. Trees for which the two immediate subtrees of the root are both caterpillars have a planar grid for their digraphs, and all monotonic paths on the grid from minimum to maximum are permissible.

We explore some of these features further in subsequent sections. 

\begin{adjustwidth}{-30pt}{30pt}

\begin{table}
\adjustbox{valign=t}{\begin{minipage}[t]{0.4\textwidth}
\begin{tabular}{ccccc}
\toprule
$n$ & Tree $S$ &$D(S)$ & $c(S)$ & $h(S)$ \\
\midrule
3 & \includegraphics[align = c,height = 0.9cm, align = c]{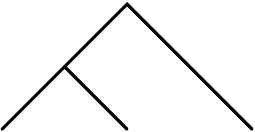} & \includegraphics[align = c,height = 1.6cm, align = c]{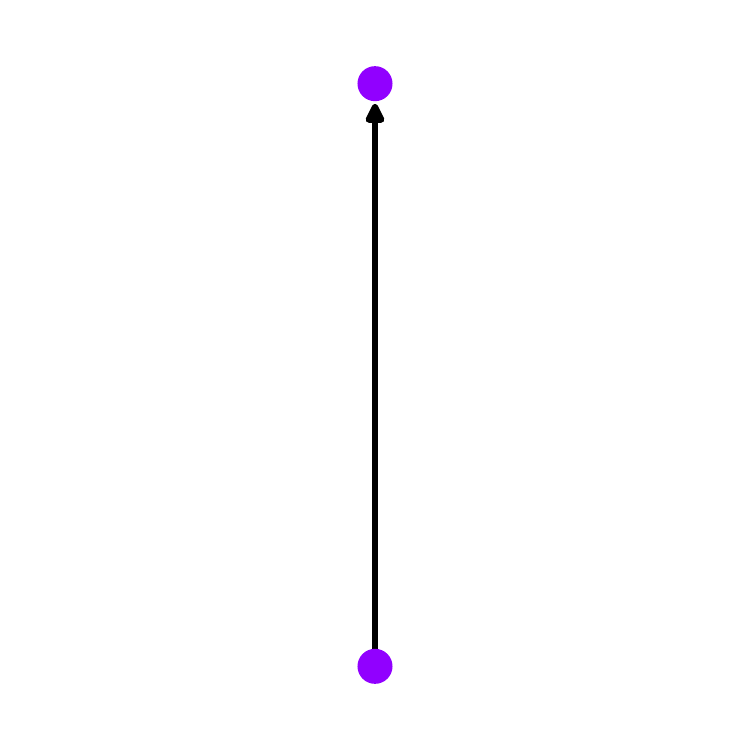} & 2 & 1 \\
4 &  \includegraphics[height = 0.9cm, align = c]{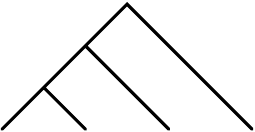} & \includegraphics[height = 1.6cm, align = c]{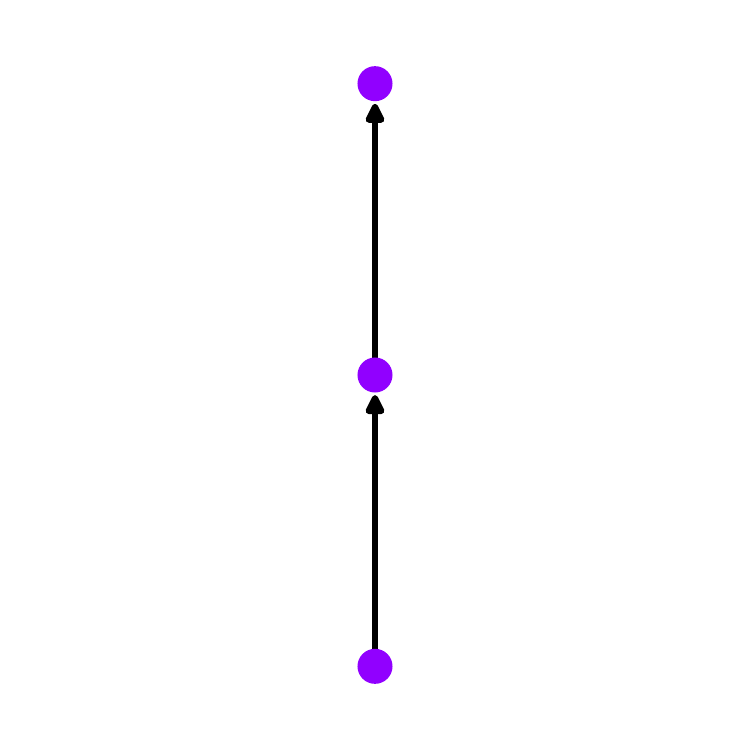} & 3 & 1 \\
 &  \includegraphics[height = 0.9cm, align = c]{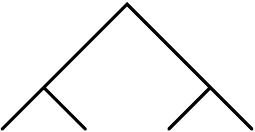} & \includegraphics[height = 1.6cm, align = c]{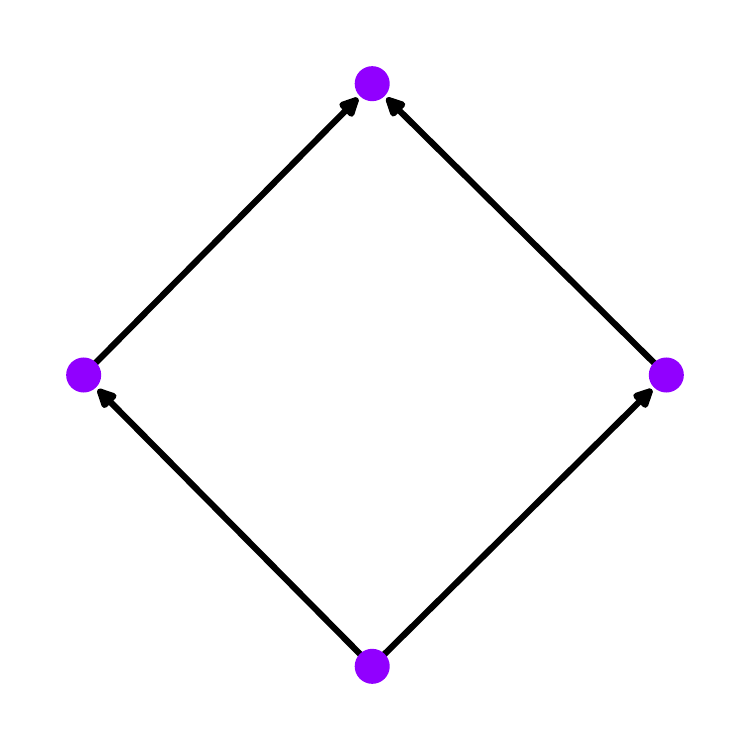} & 4 & 2 \\
5 &  \includegraphics[height = 0.9cm, align = c]{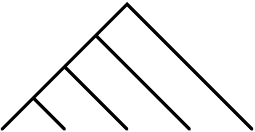} & \includegraphics[height = 1.6cm, align = c]{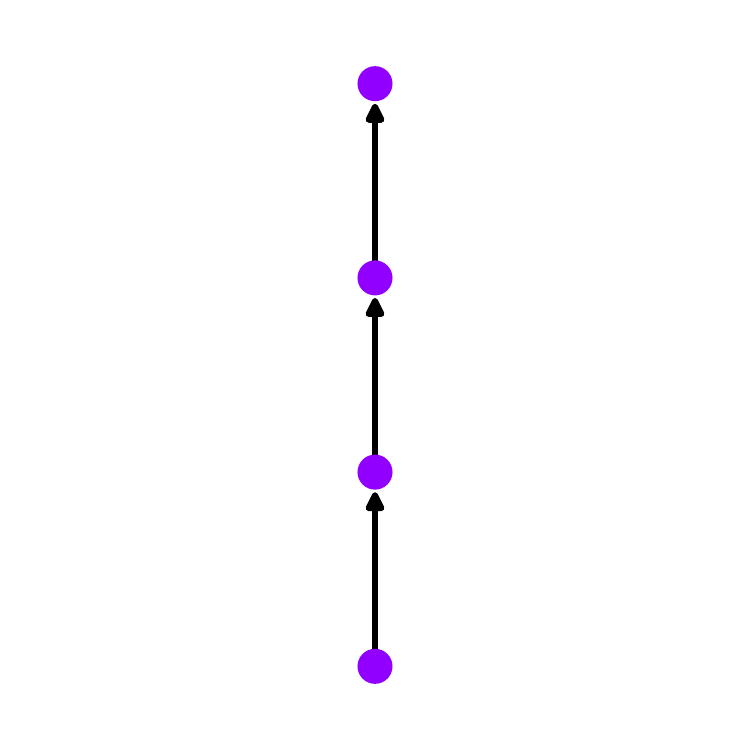} & 4 & 1 \\
 &  \includegraphics[height = 0.9cm, align = c]{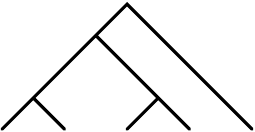} & \includegraphics[height = 1.6cm, align = c]{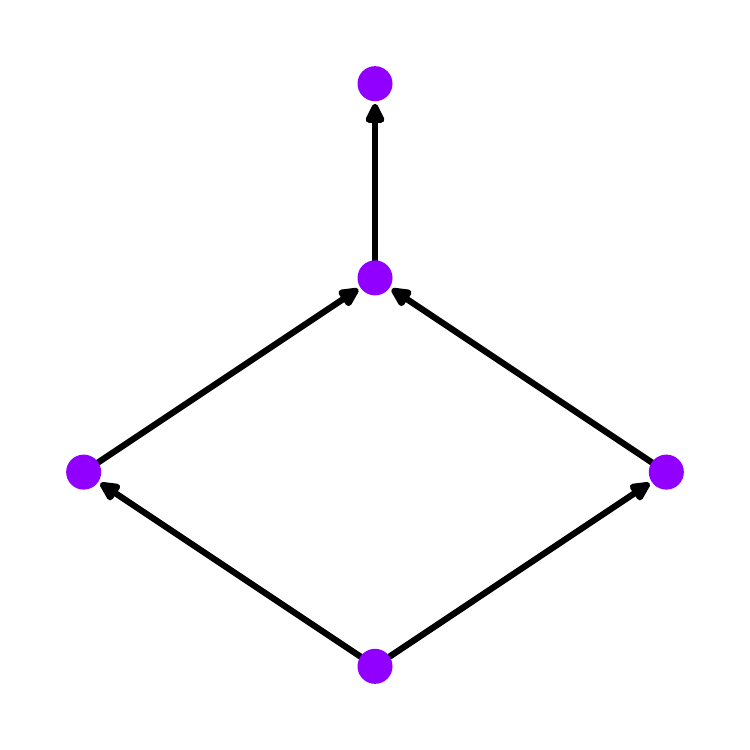} & 5 & 2 \\
 &  \includegraphics[height = 0.9cm, align = c]{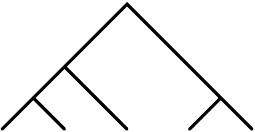} & \includegraphics[height = 1.6cm, align = c]{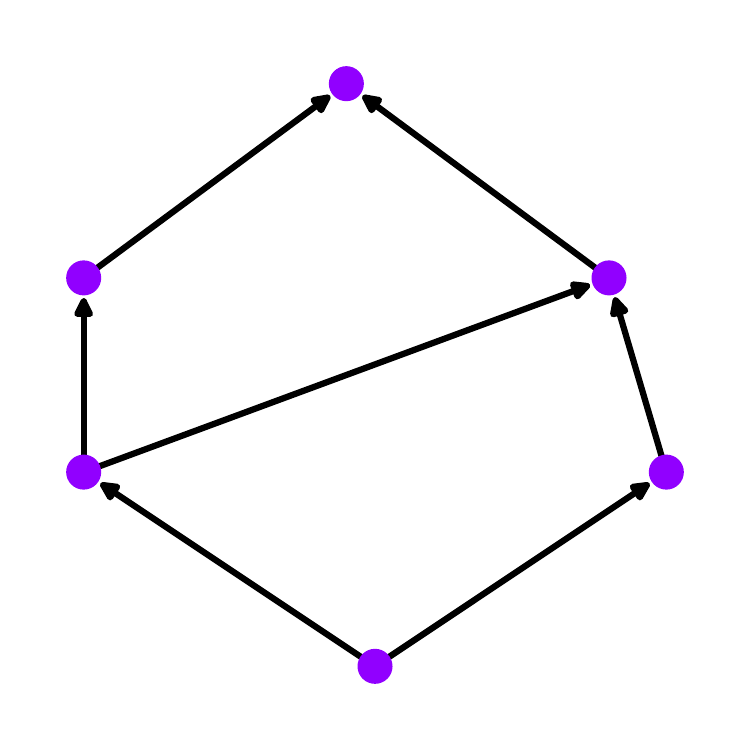} & 6 & 3 \\
6 &  \includegraphics[height = 0.9cm, align = c]{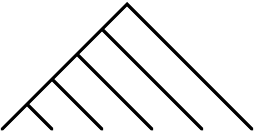} & \includegraphics[height = 1.6cm, align = c]{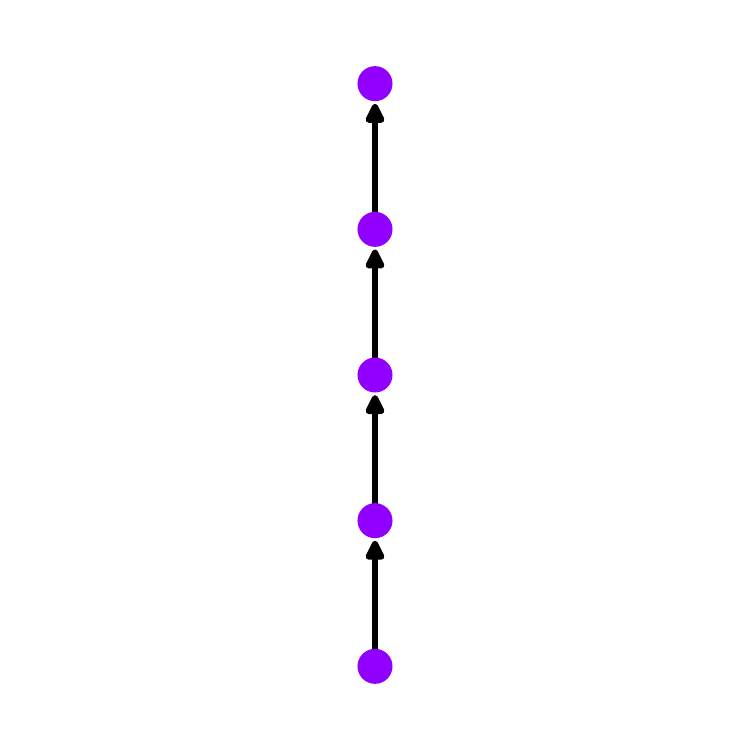} & 5 & 1 \\
 &  \includegraphics[height = 0.9cm, align = c]{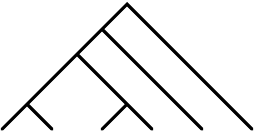} & \includegraphics[height = 1.6cm, align = c]{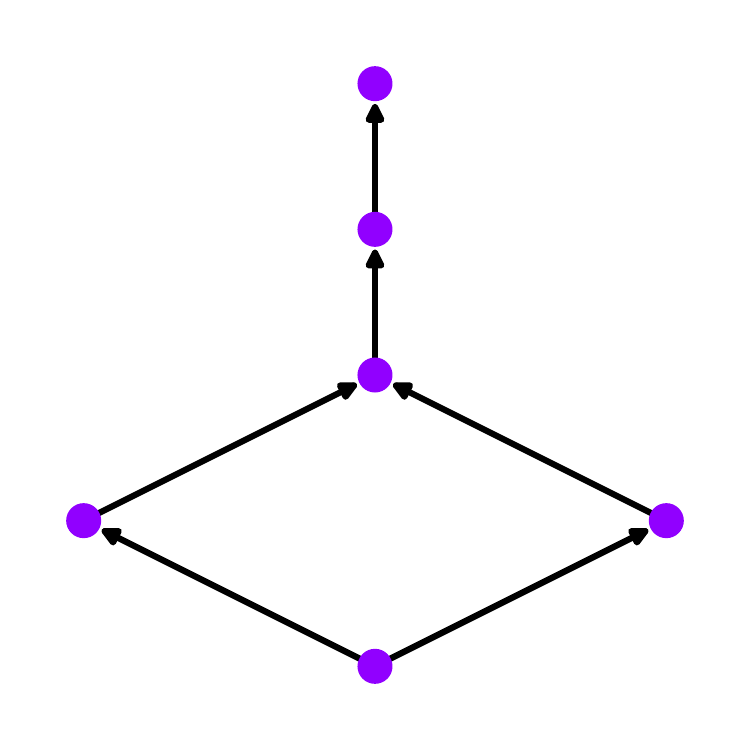} & 6 & 2 \\
 &  \includegraphics[height = 0.9cm, align = c]{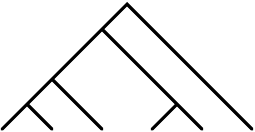} & \includegraphics[height = 1.6cm, align = c]{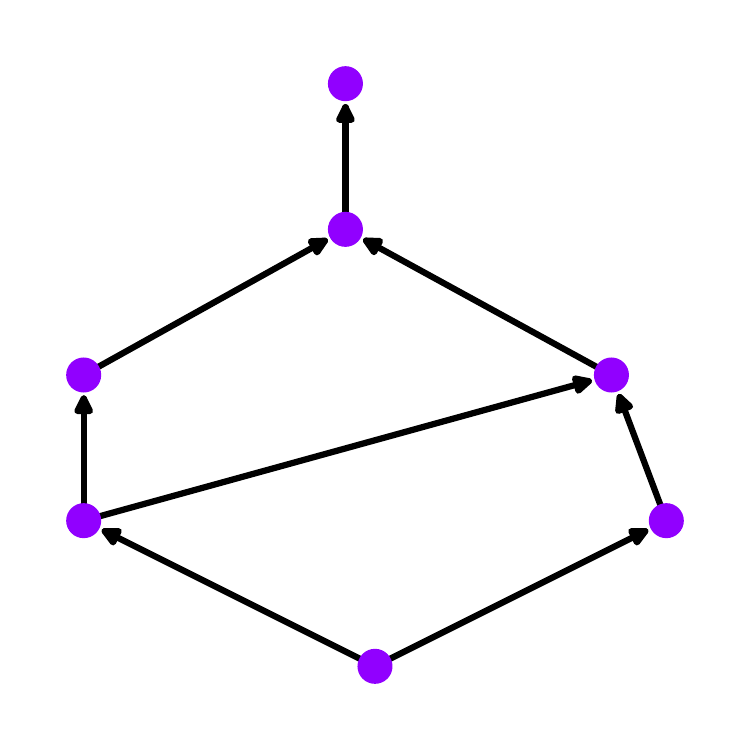} & 7 & 3 \\
  &  \includegraphics[height = 0.9cm, align = c]{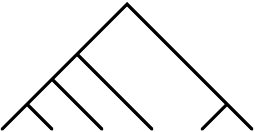} & \includegraphics[height = 1.6cm, align = c]{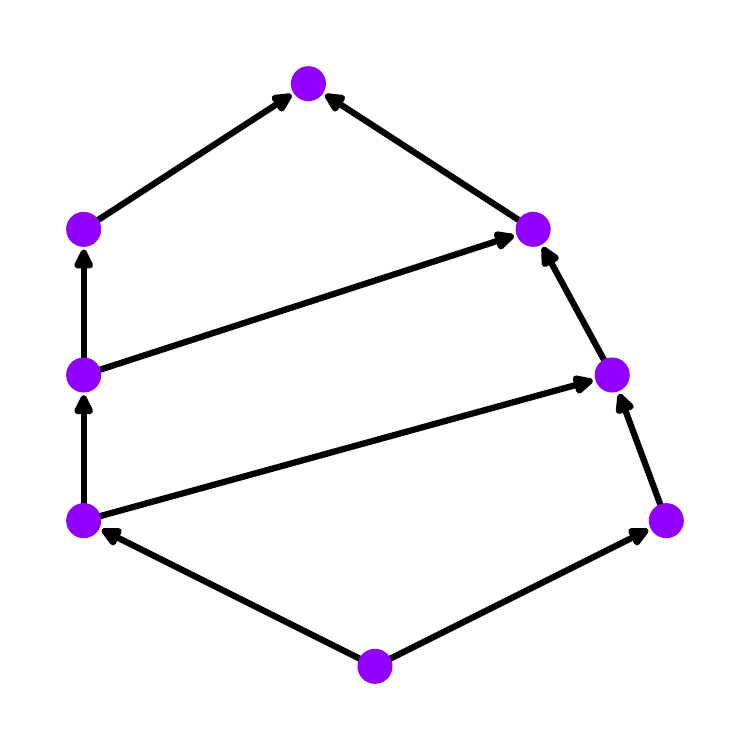} & 8 & 4 \\
 &  \includegraphics[height = 0.9cm, align = c]{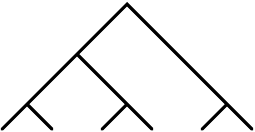} & \includegraphics[height = 1.6cm, align = c]{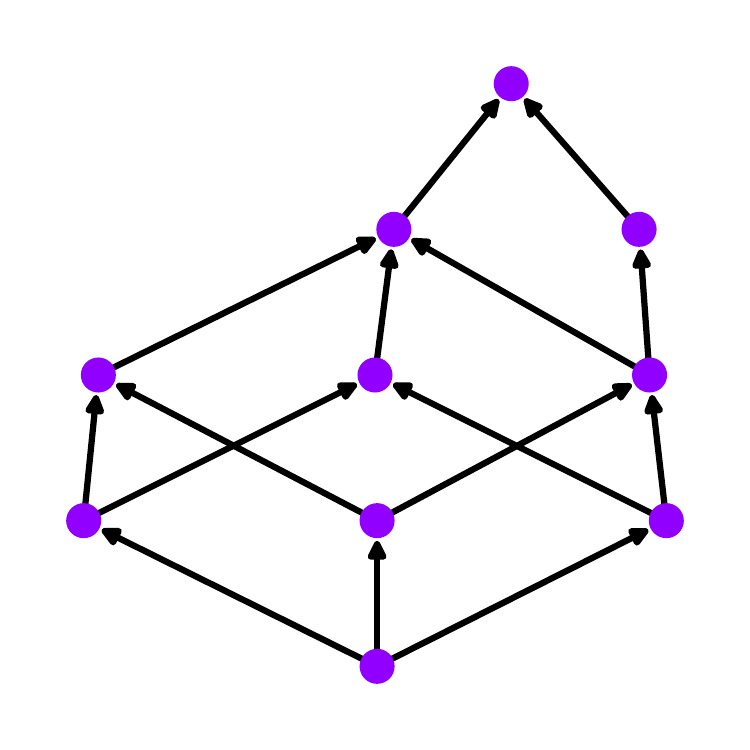} & 10 & 8 \\
 &  \includegraphics[height = 0.9cm, align = c]{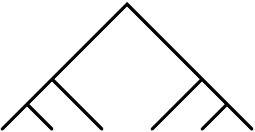} & \includegraphics[height = 1.6cm, align = c]{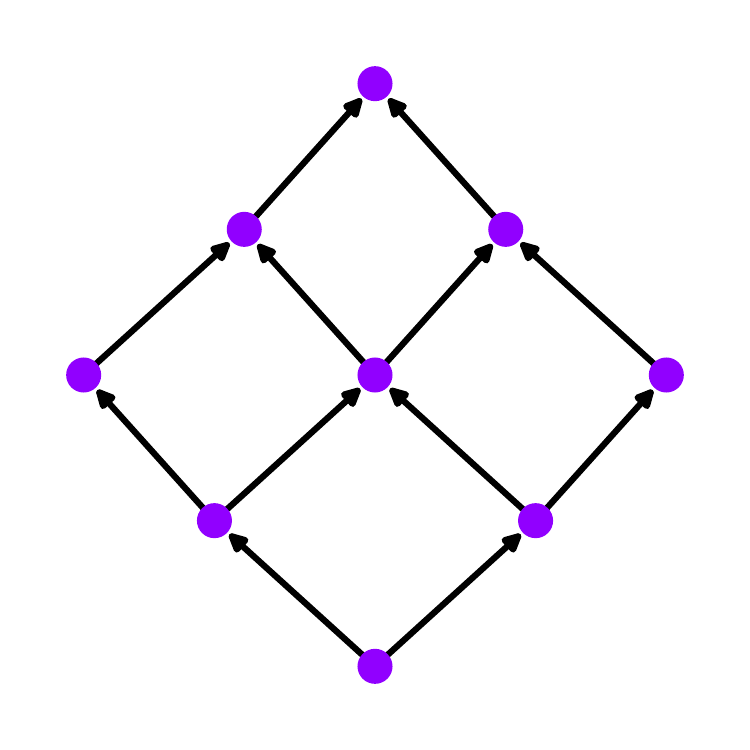} & 9 & 6 \\
\bottomrule
\end{tabular}
\end{minipage}} \hfill
\adjustbox{valign=t}{\begin{minipage}{0.4\textwidth}
\begin{tabular}{ccccc}
\toprule
$n$ & Tree $S$ &$D(S)$ & $c(S)$ & $h(S)$ \\
\midrule
7 &  \includegraphics[height = 0.9cm, align = c]{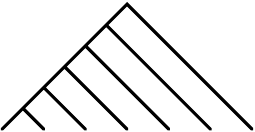} & \includegraphics[height = 1.6cm, align = c]{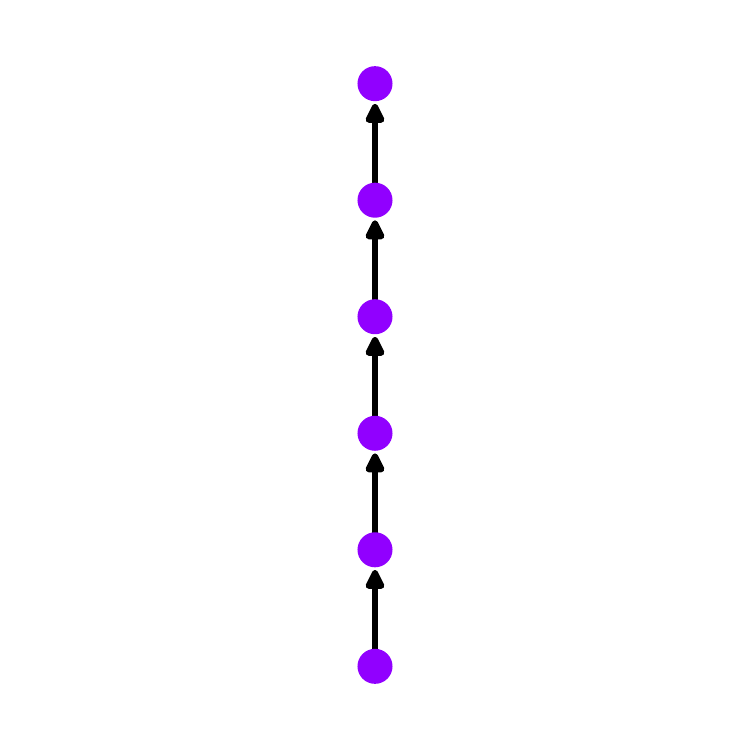} & 6 & 1 \\
 &  \includegraphics[height = 0.9cm, align = c]{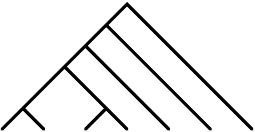} & \includegraphics[height = 1.6cm, align = c]{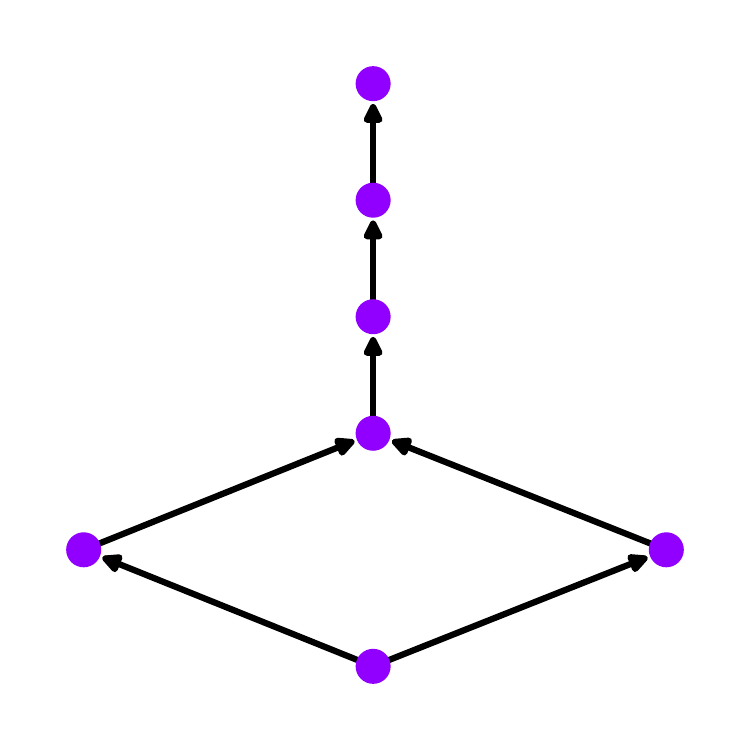} & 7 & 2 \\
 &  \includegraphics[height = 0.9cm, align = c]{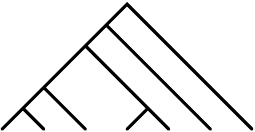} & \includegraphics[height = 1.6cm, align = c]{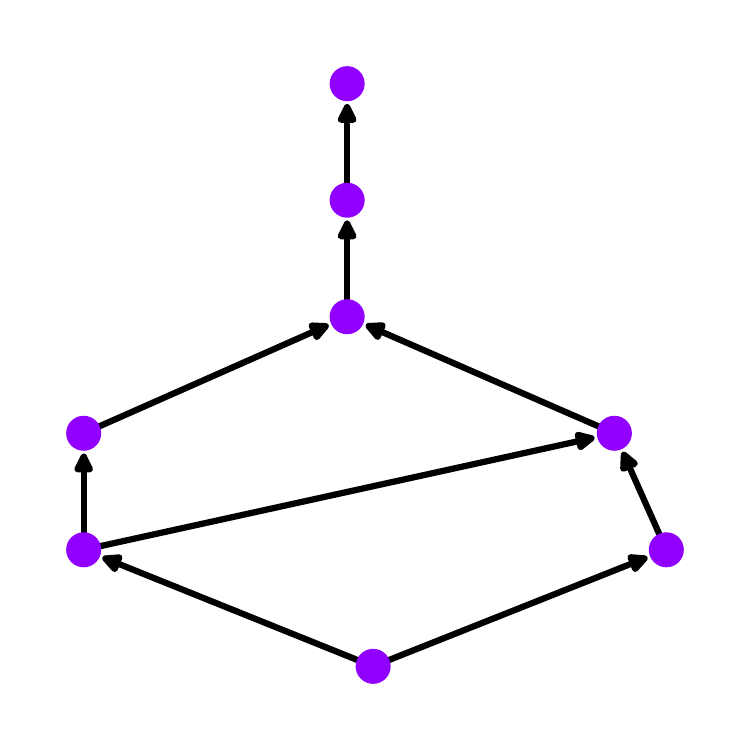} & 8 & 3 \\
 &  \includegraphics[height = 0.9cm, align = c]{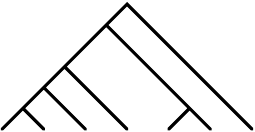} & \includegraphics[height = 1.6cm, align = c]{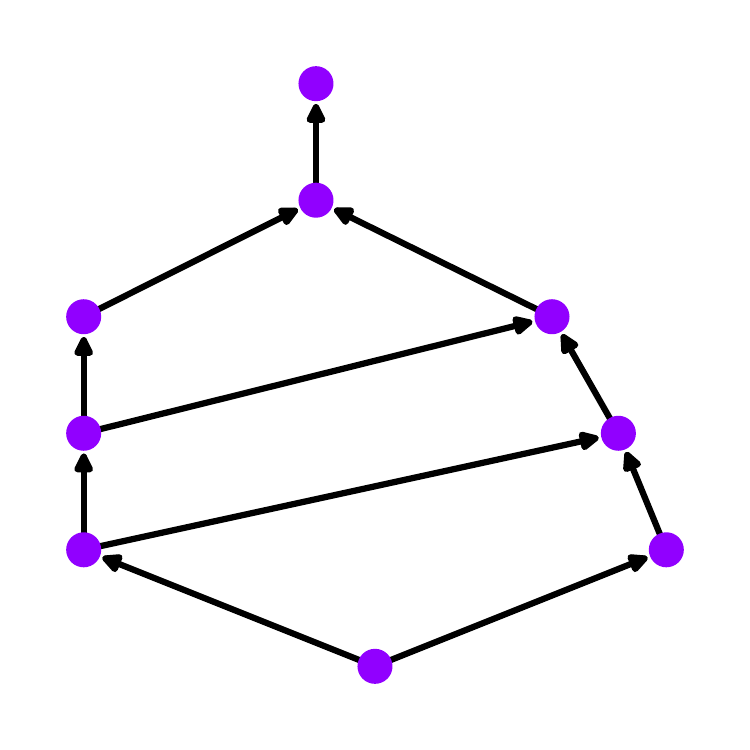} & 9 & 4 \\
 &  \includegraphics[height = 0.9cm, align = c]{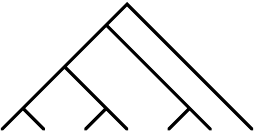} & \includegraphics[height = 1.6cm, align = c]{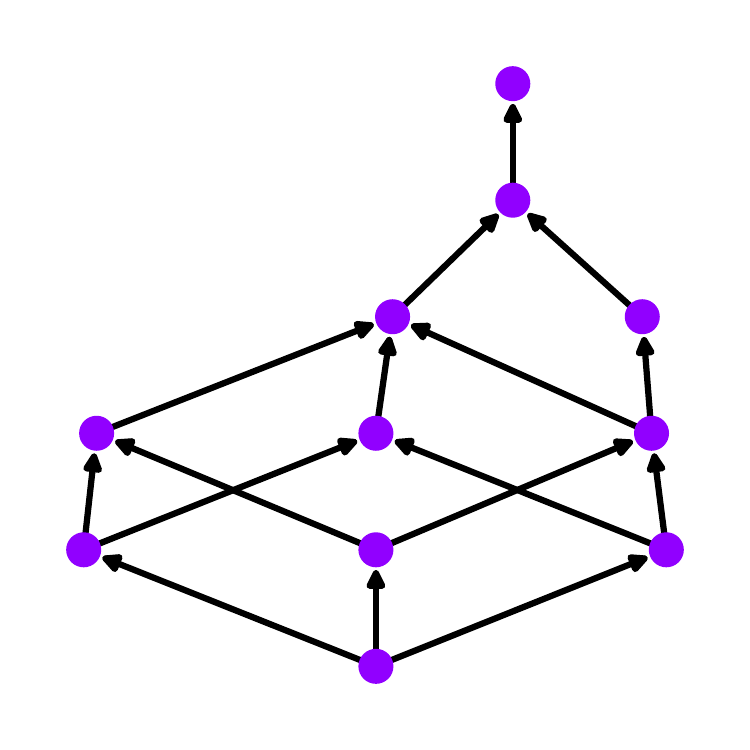} & 11 & 8 \\
 &  \includegraphics[height = 0.9cm, align = c]{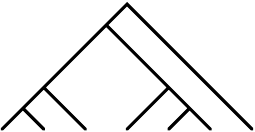} & \includegraphics[height = 1.6cm, align = c]{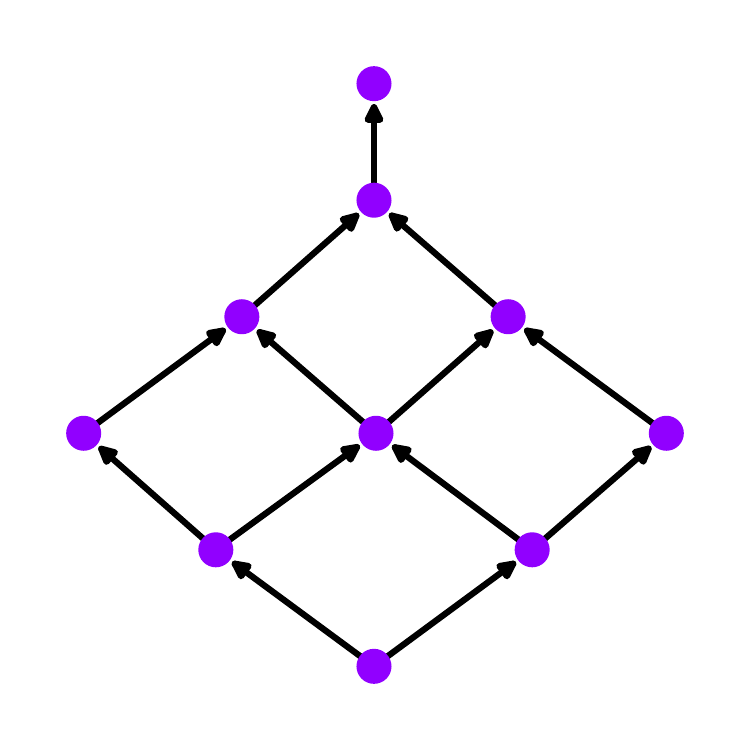} & 10 & 6 \\
 &  \includegraphics[height = 0.9cm, align = c]{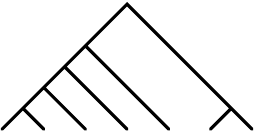} & \includegraphics[height = 1.6cm, align = c]{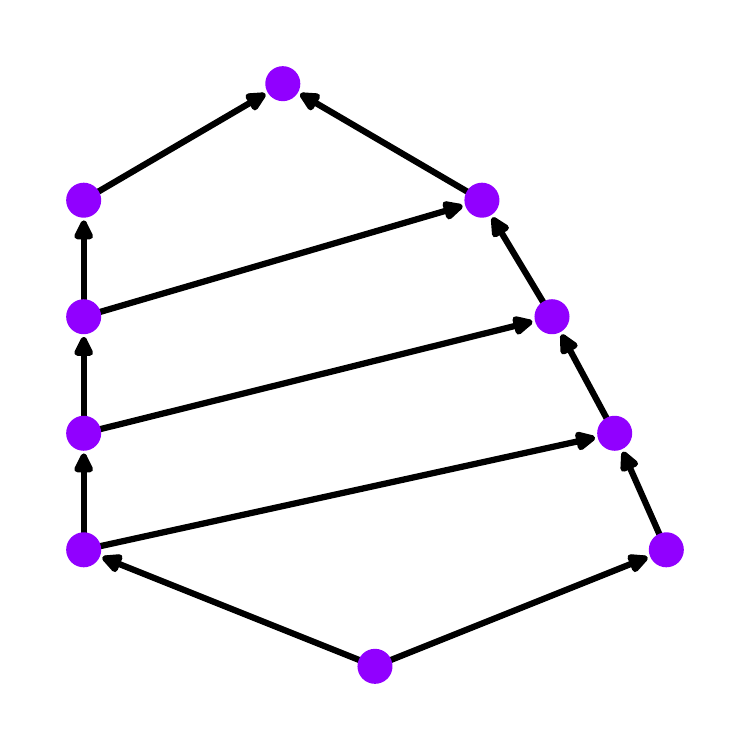} & 10 & 5 \\
 &  \includegraphics[height = 0.9cm, align = c]{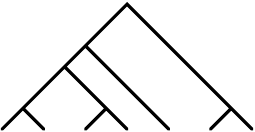} & \includegraphics[height = 1.6cm, align = c]{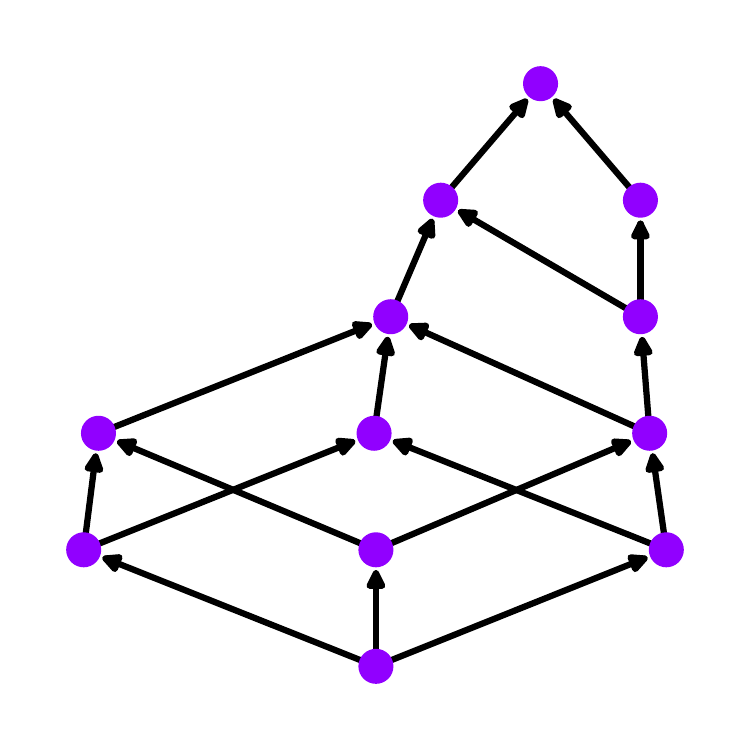} & 12 & 10 \\
 &  \includegraphics[height = 0.9cm, align = c]{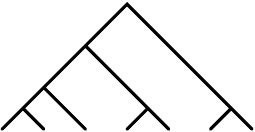} & \includegraphics[height = 1.6cm, align = c]{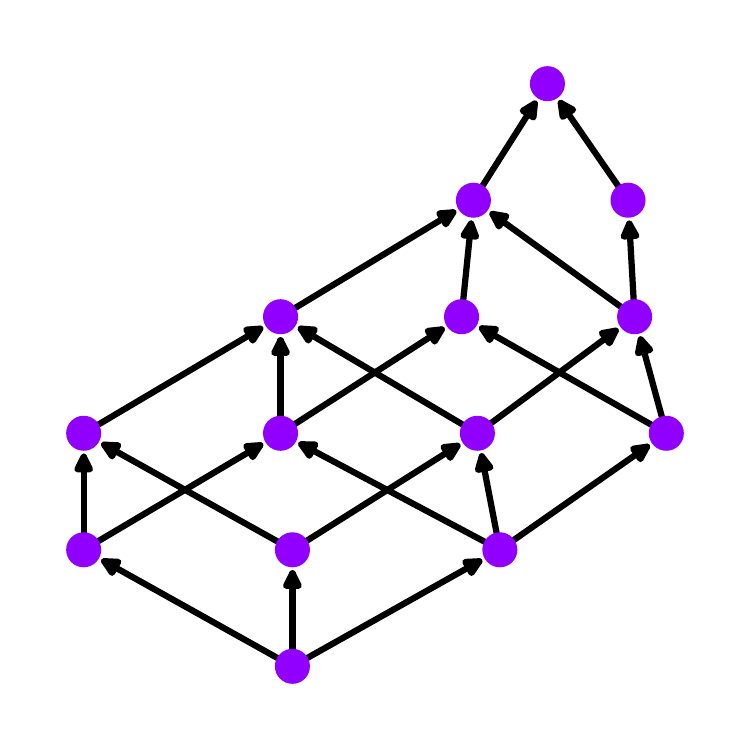} & 14 & 15 \\
 &  \includegraphics[height = 0.9cm, align = c]{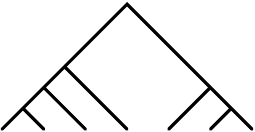} & \includegraphics[height = 1.6cm, align = c]{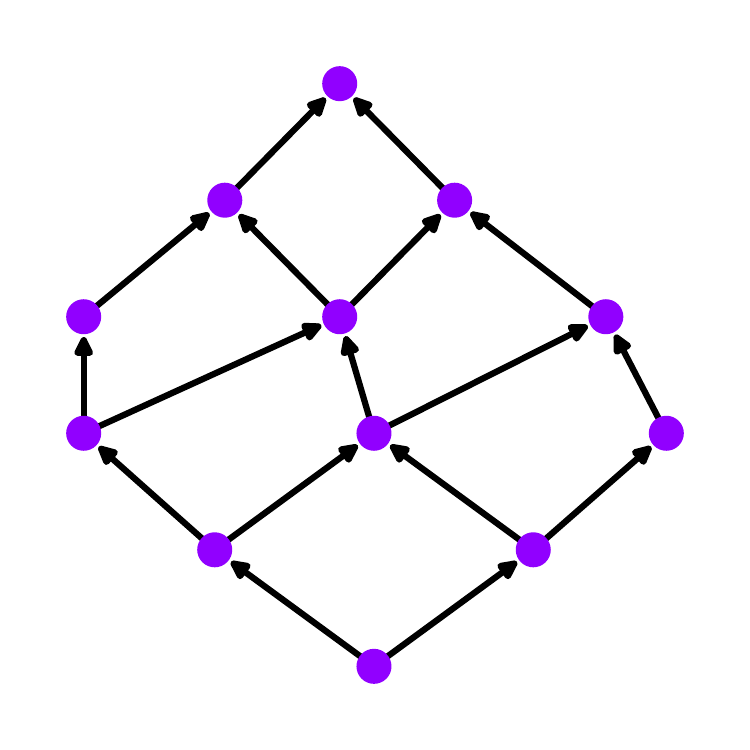} & 12 & 10 \\
 &  \includegraphics[height = 0.9cm, align = c]{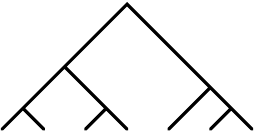} & \includegraphics[height = 1.6cm, align = c]{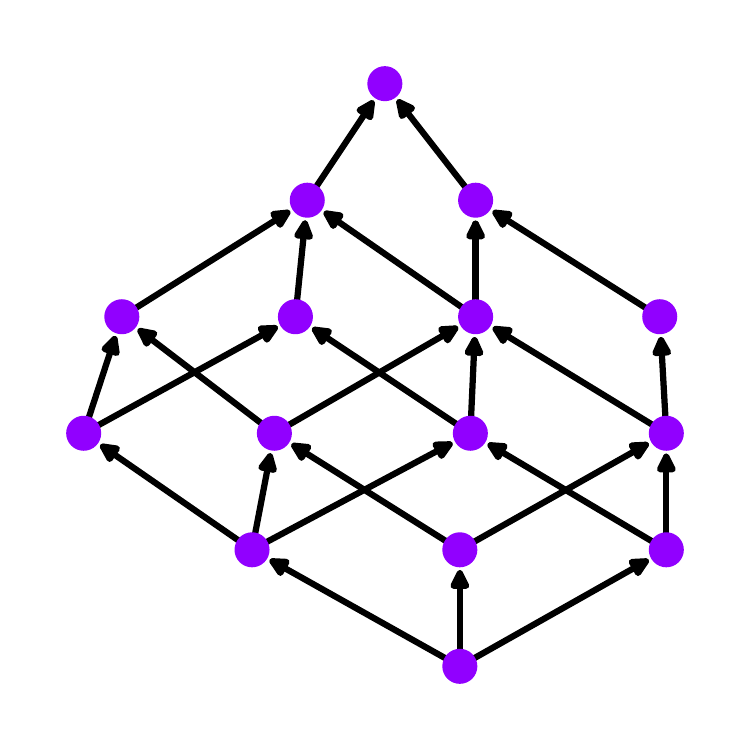} & 15 & 20 \\
\bottomrule
\end{tabular}
\end{minipage}}
\caption{Hasse diagrams $D(S)$ of the lattice of root ancestral configuations of $S$, for all trees $S$ of size $3 \leq n \leq 7$. A representative labeling for each unlabeled topology can be assumed, with the labels omitted for convenience. $c(S)$ is the number of root ancestral configurations of $S$, equal to the number of vertices in the Hasse diagram $D(S)$. $h(S)$ is the number of labeled histories of $S$, equal to the number of paths from the minimal element to the maximal element of $D(S)$.}
\label{tbl:examples_n_3_to_7}
\end{table}

\end{adjustwidth}


\begin{adjustwidth}{-30pt}{30pt}

\begin{table}
\adjustbox{valign=t}{\begin{minipage}{0.4\textwidth}
\begin{tabular}{ccccc}
\toprule
$n$ & Tree $S$ &$D(S)$ & $c(S)$ & $h(S)$ \\
\midrule
 
8 &  \includegraphics[height = 0.9cm, align = c]{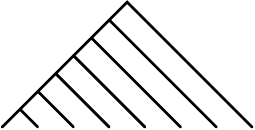} & \includegraphics[height = 1.6cm, align = c]{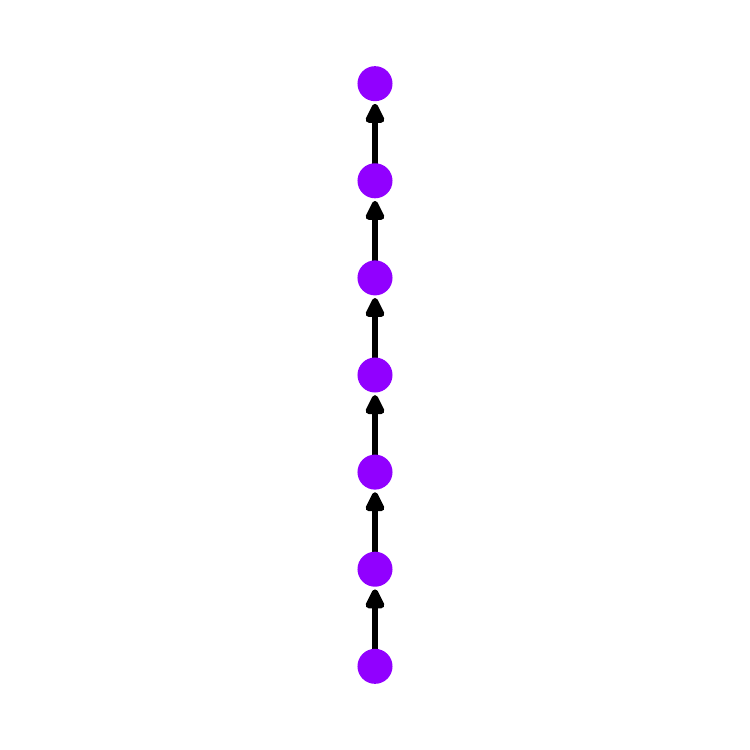} & 7 & 1 \\
 &  \includegraphics[height = 0.9cm, align = c]{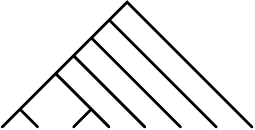} & \includegraphics[height = 1.6cm, align = c]{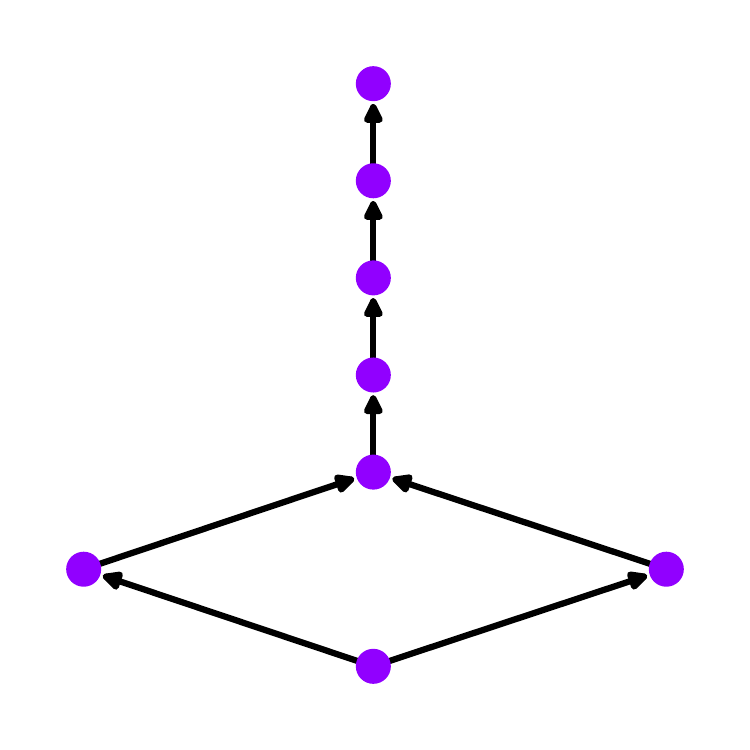} & 8 & 2 \\
 &  \includegraphics[height = 0.9cm, align = c]{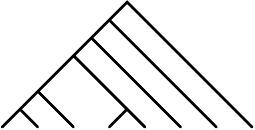} & \includegraphics[height = 1.6cm, align = c]{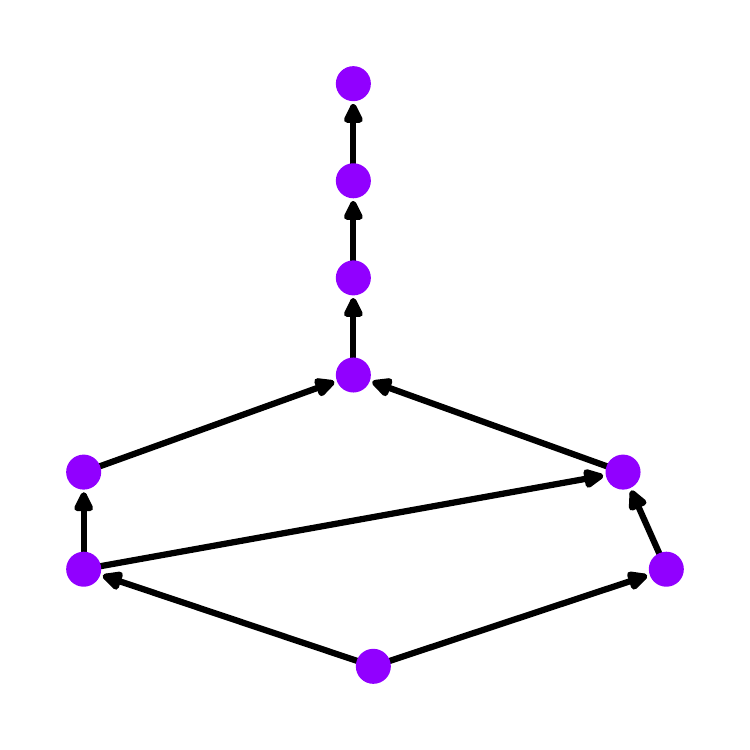} & 9 & 3 \\
 &  \includegraphics[height = 0.9cm, align = c]{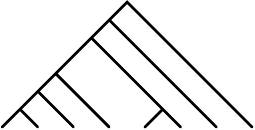} & \includegraphics[height = 1.6cm, align = c]{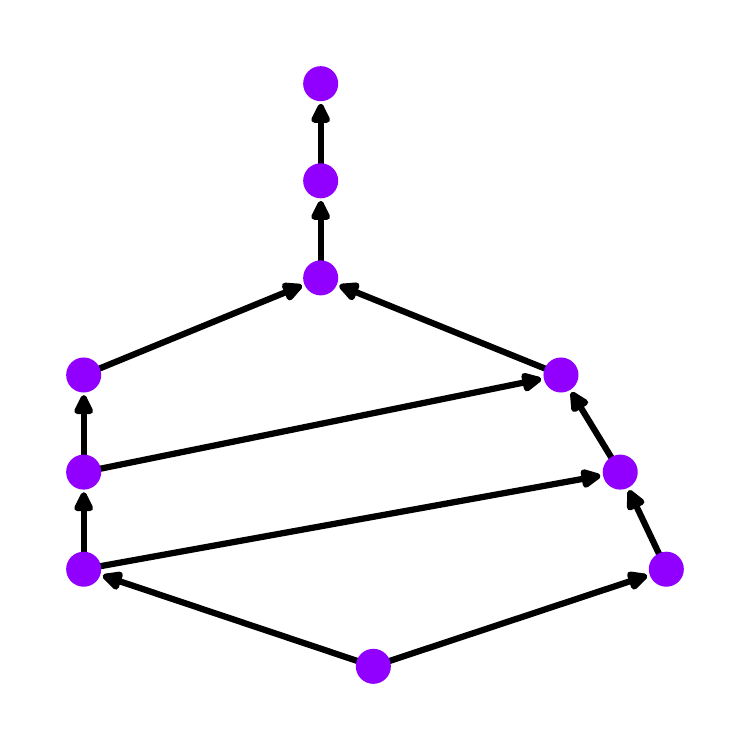} & 10 & 4 \\
 &  \includegraphics[height = 0.9cm, align = c]{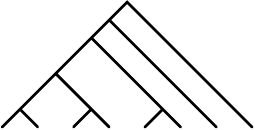} & \includegraphics[height = 1.6cm, align = c]{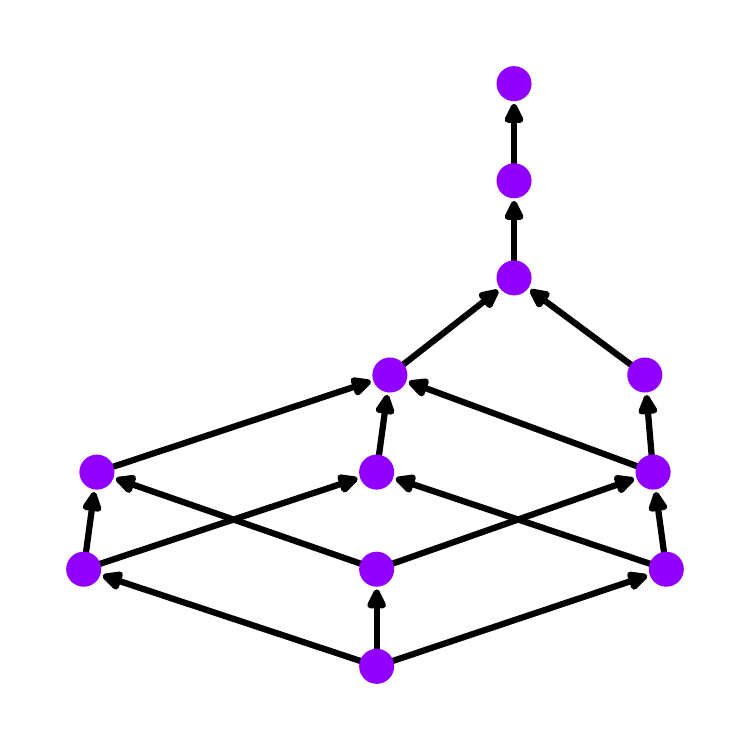} & 12 &  8 \\
 &  \includegraphics[height = 0.9cm, align = c]{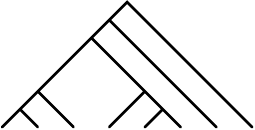} & \includegraphics[height = 1.6cm, align = c]{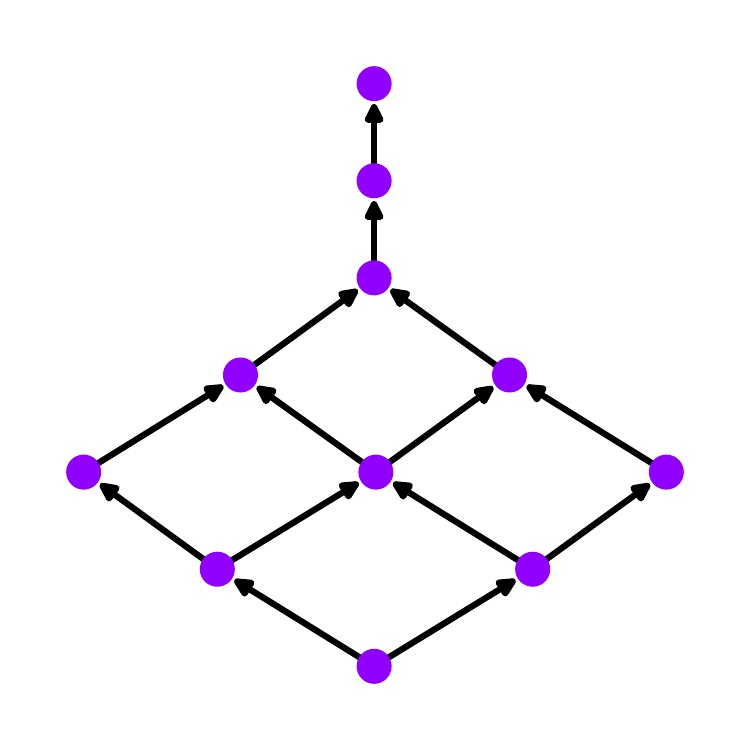} & 11 & 6 \\
 &  \includegraphics[height = 0.9cm, align = c]{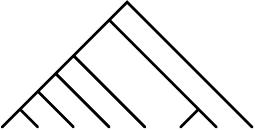} & \includegraphics[height = 1.6cm, align = c]{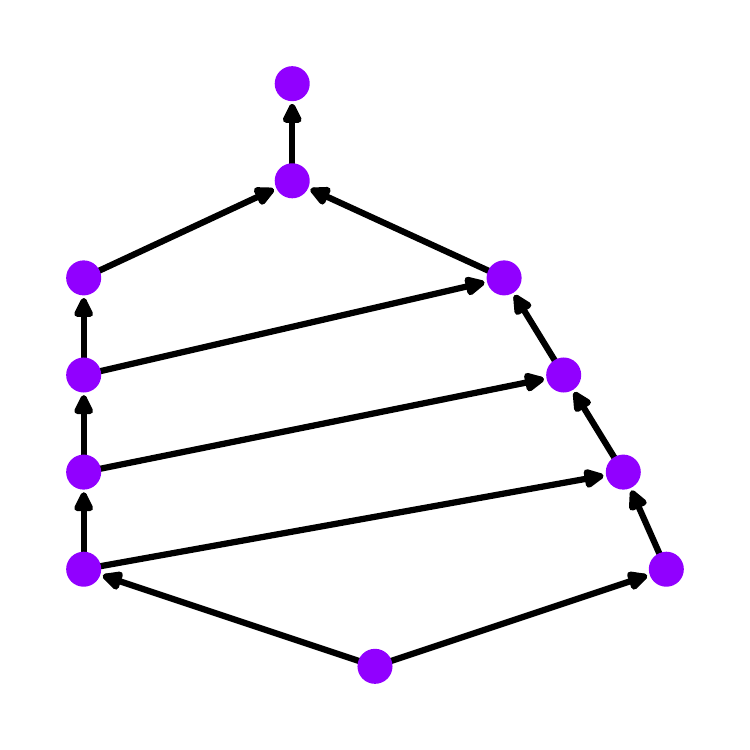} & 11 & 5 \\
 &  \includegraphics[height = 0.9cm, align = c]{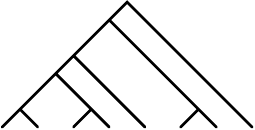} & \includegraphics[height = 1.6cm, align = c]{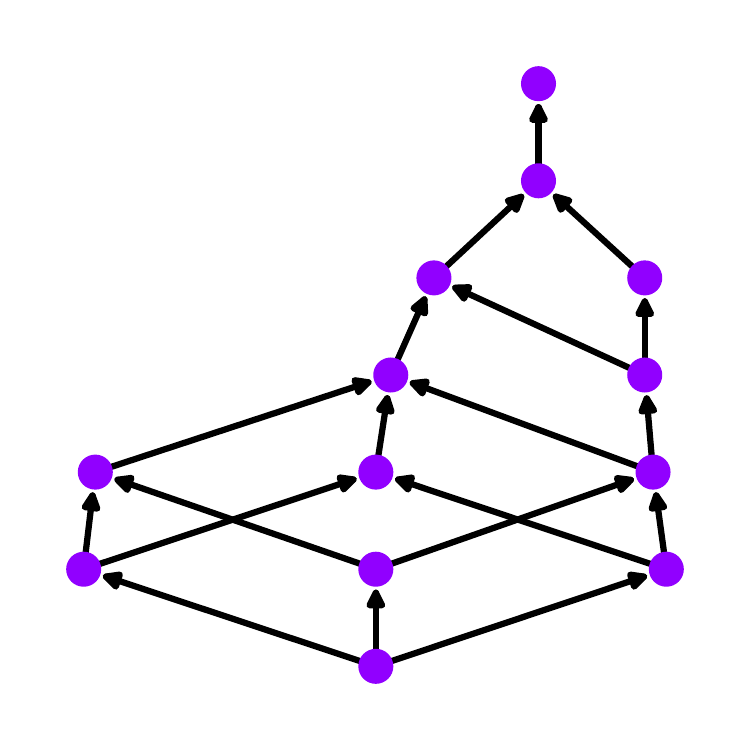} & 13 & 10 \\
 &  \includegraphics[height = 0.9cm, align = c]{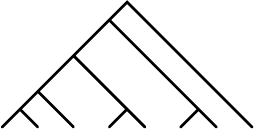} & \includegraphics[height = 1.6cm, align = c]{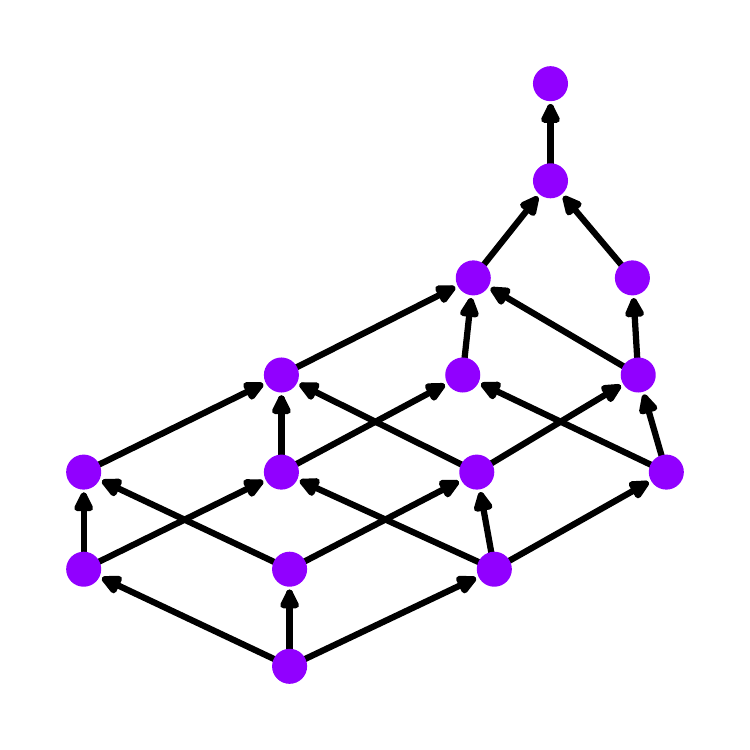} & 15 & 15 \\
 &  \includegraphics[height = 0.9cm, align = c]{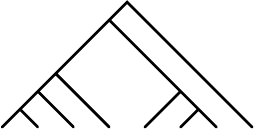} & \includegraphics[height = 1.6cm, align = c]{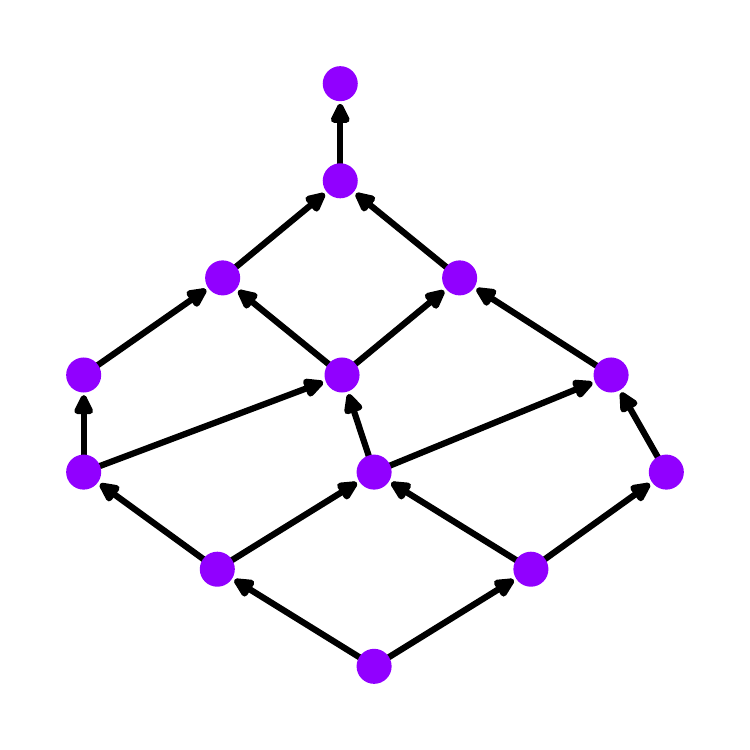} & 13 & 10 \\
 &  \includegraphics[height = 0.9cm, align = c]{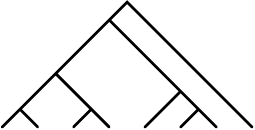} & \includegraphics[height = 1.6cm, align = c]{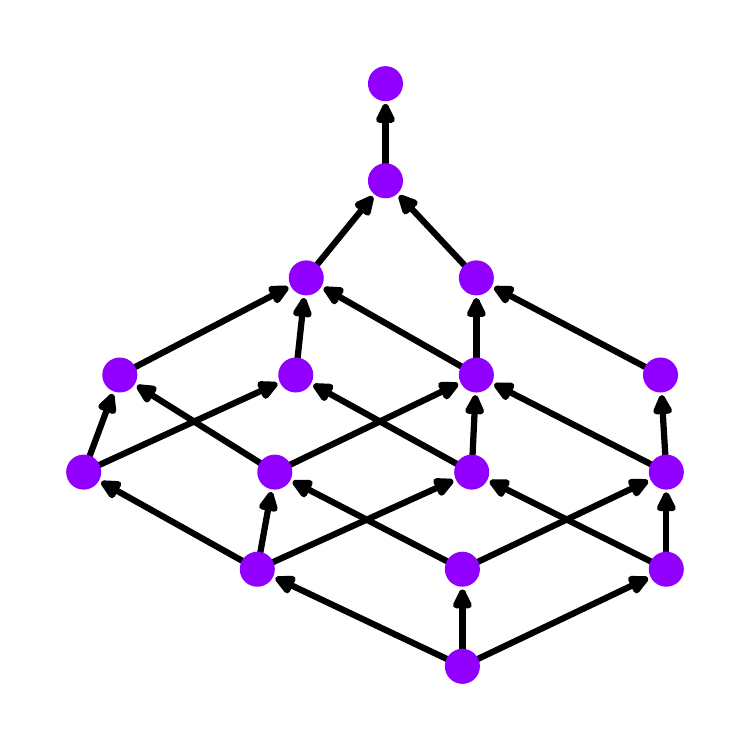} & 16 & 20 \\&  \includegraphics[height = 0.9cm, align = c]{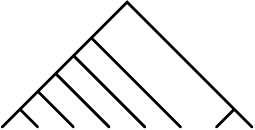} & \includegraphics[height = 1.6cm, align = c]{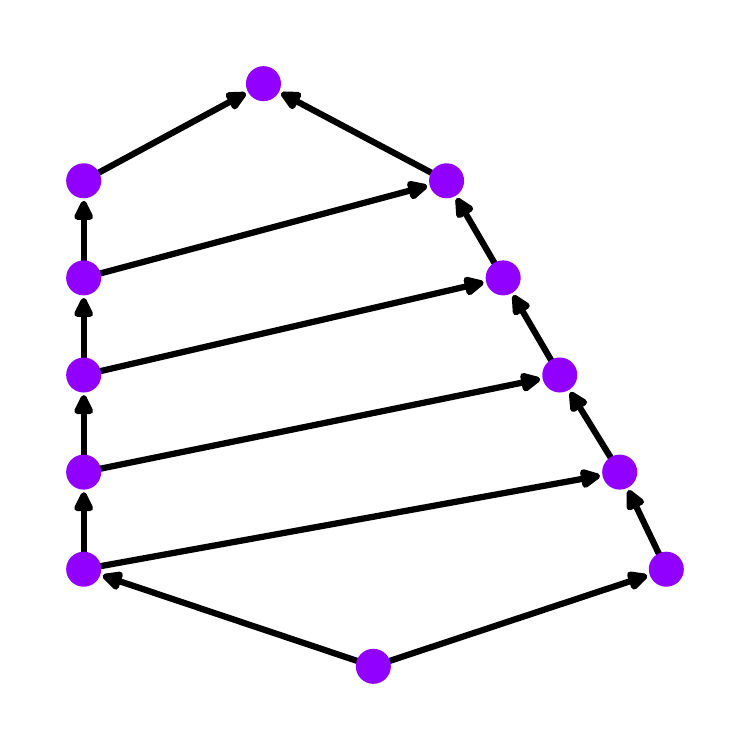} & 12 & 6 \\
\bottomrule
\end{tabular}
\end{minipage}} \hfill
\adjustbox{valign=t}{\begin{minipage}{0.4\textwidth}
\begin{tabular}{ccccc}
\toprule
$n$ & Tree $S$ &$D(S)$ & $c(S)$ & $h(S)$ \\
\midrule
 &  \includegraphics[height = 0.9cm, align = c]{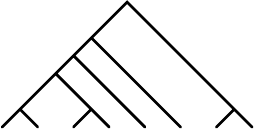} & \includegraphics[height = 1.6cm, align = c]{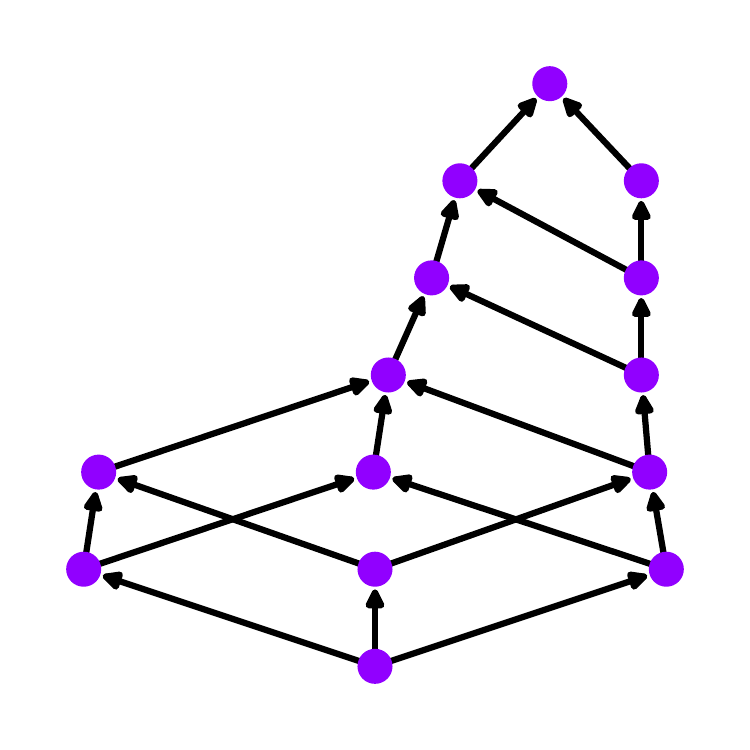} & 14 & 12 \\
 &  \includegraphics[height = 0.9cm, align = c]{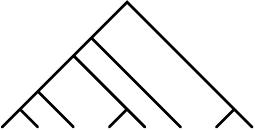} & \includegraphics[height = 1.6cm, align = c]{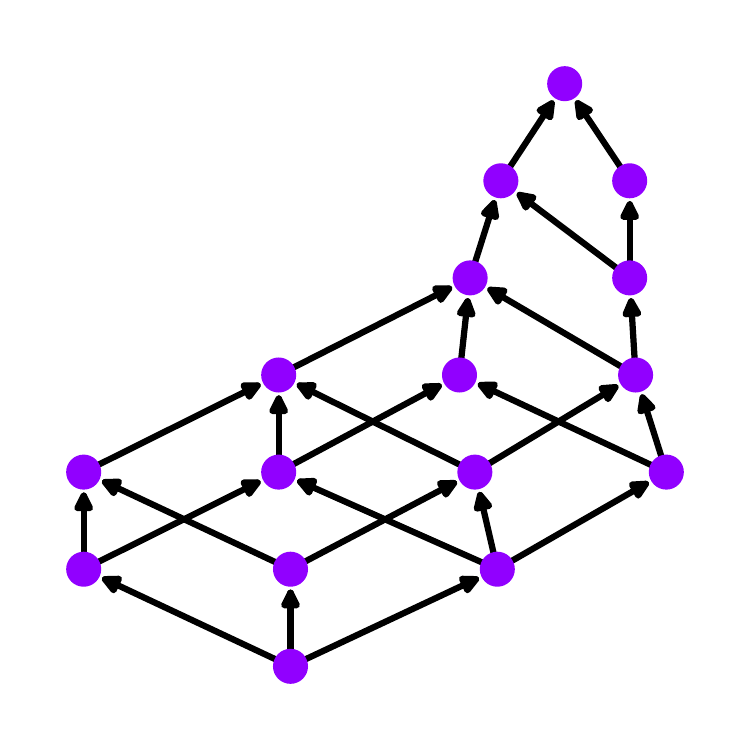} & 16 & 18 \\
 &  \includegraphics[height = 0.9cm, align = c]{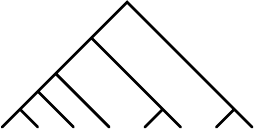} & \includegraphics[height = 1.6cm, align = c]{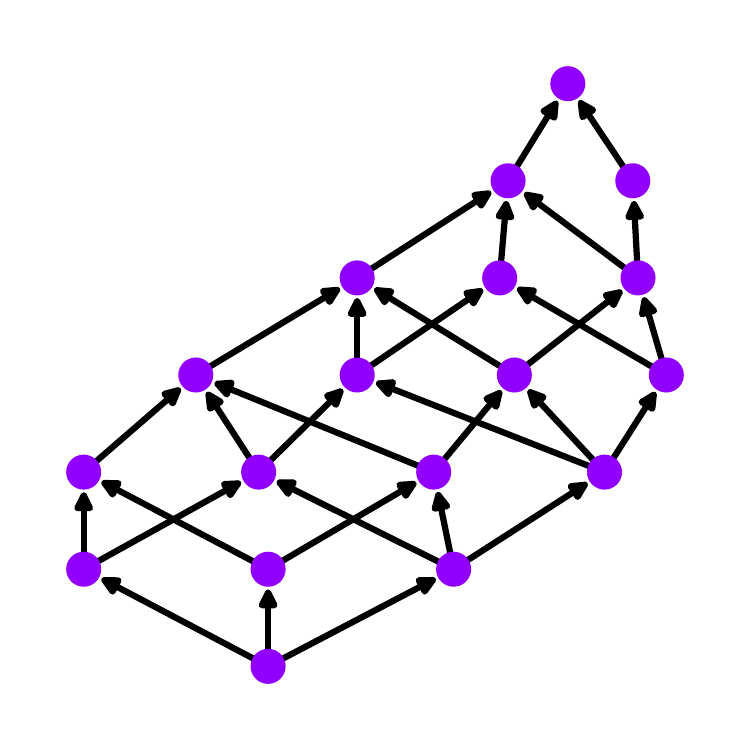} & 18 & 24 \\
  &  \includegraphics[height = 0.9cm, align = c]{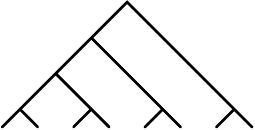} & \includegraphics[height = 1.6cm, align = c]{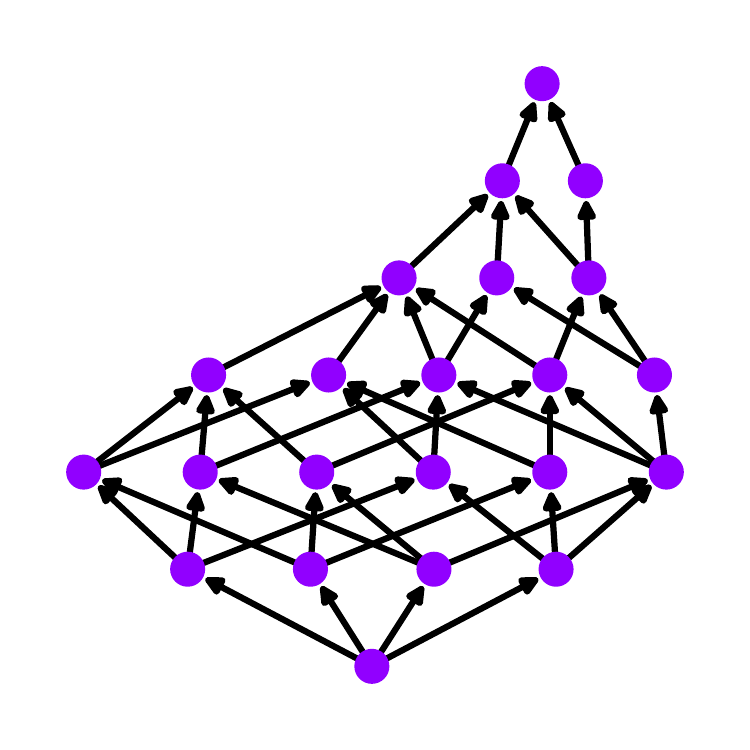} & 22 & 48 \\
 &  \includegraphics[height = 0.9cm, align = c]{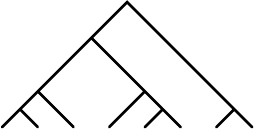} & \includegraphics[height = 1.6cm, align = c]{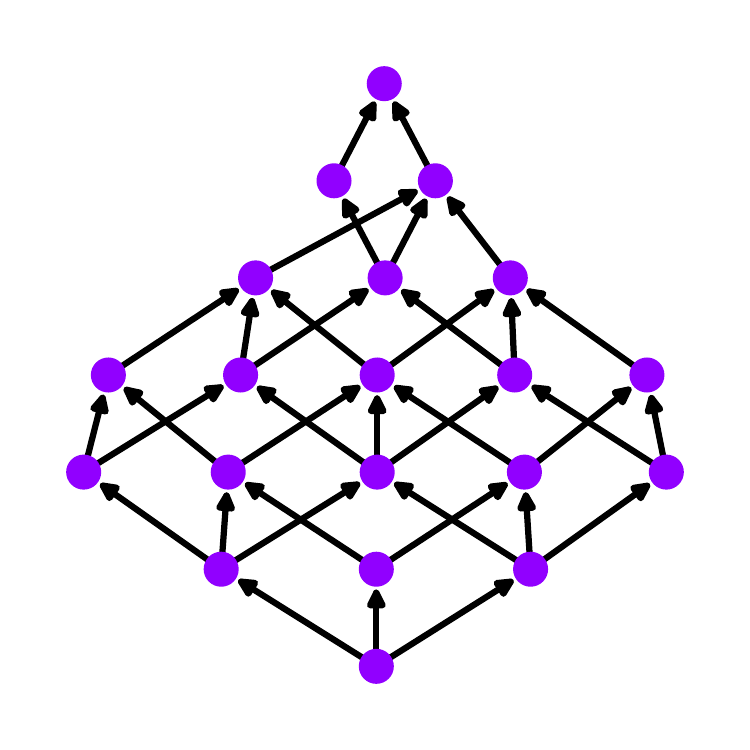} & 20 & 36 \\
 &  \includegraphics[height = 0.9cm, align = c]{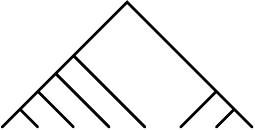} & \includegraphics[height = 1.6cm, align = c]{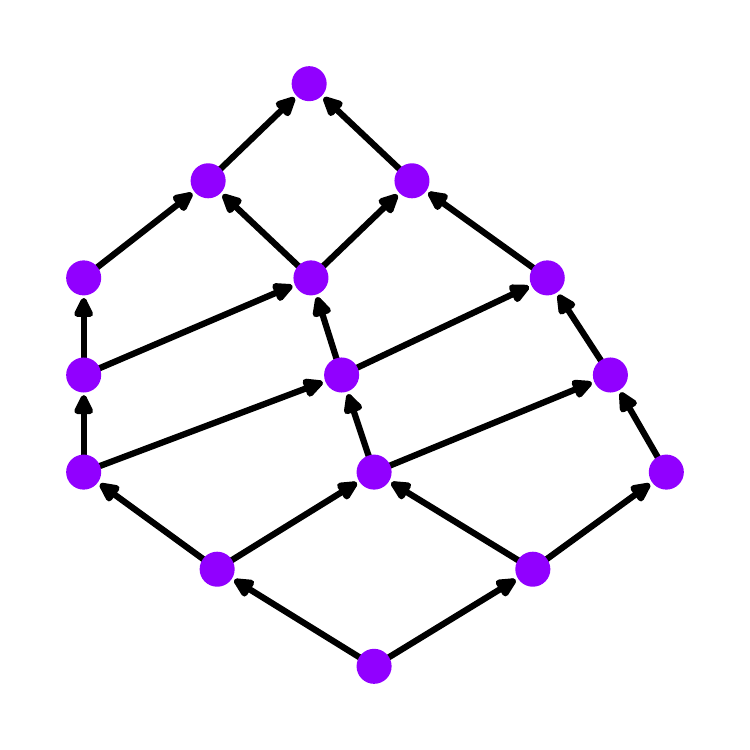} & 15 & 15 \\
 &  \includegraphics[height = 0.9cm, align = c]{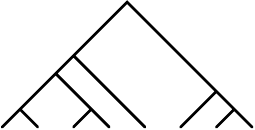} & \includegraphics[height = 1.6cm, align = c]{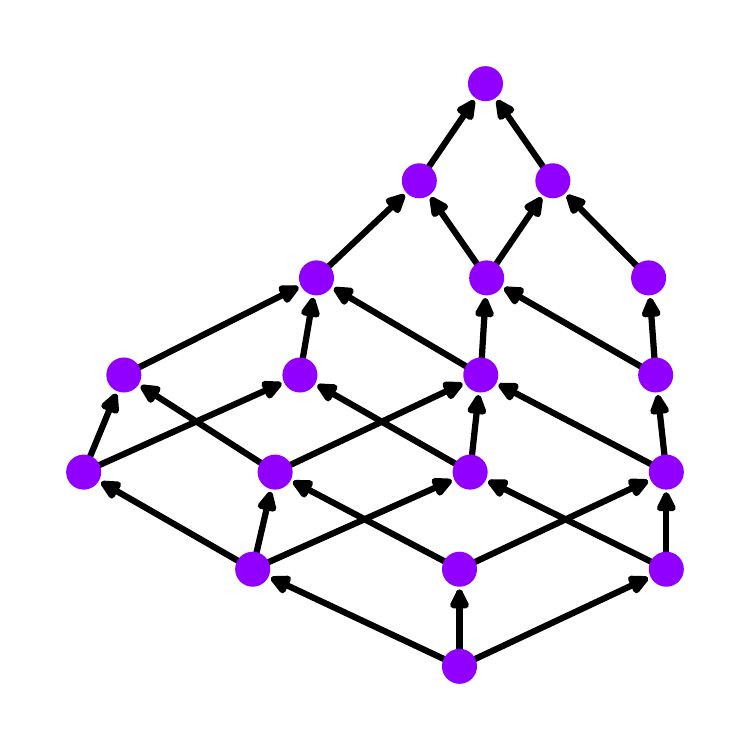} & 18 & 30 \\
 &  \includegraphics[height = 0.9cm, align = c]{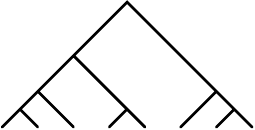} & \includegraphics[height = 1.6cm, align = c]{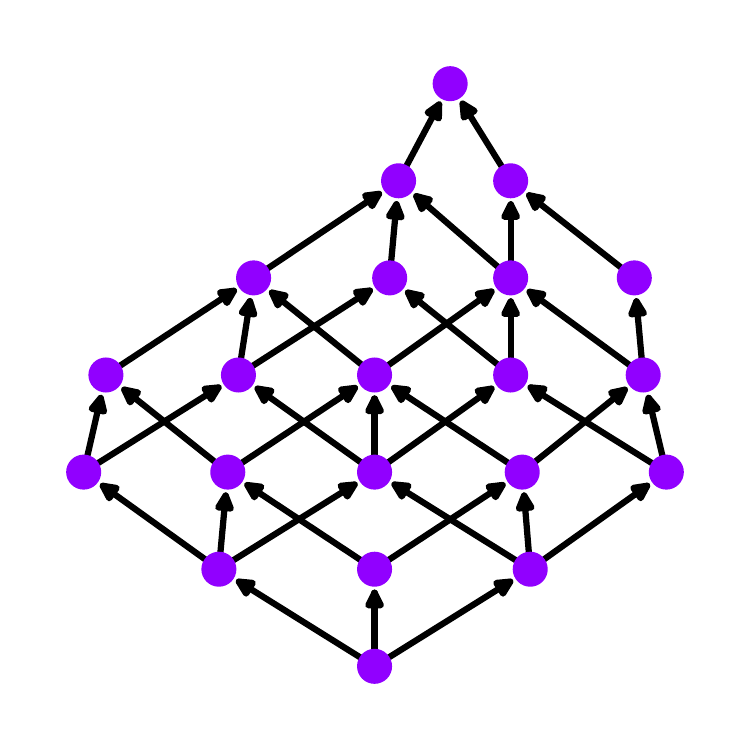} & 21 & 45 \\
 &  \includegraphics[height = 0.9cm, align = c]{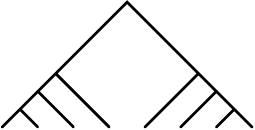} & \includegraphics[height = 1.6cm, align = c]{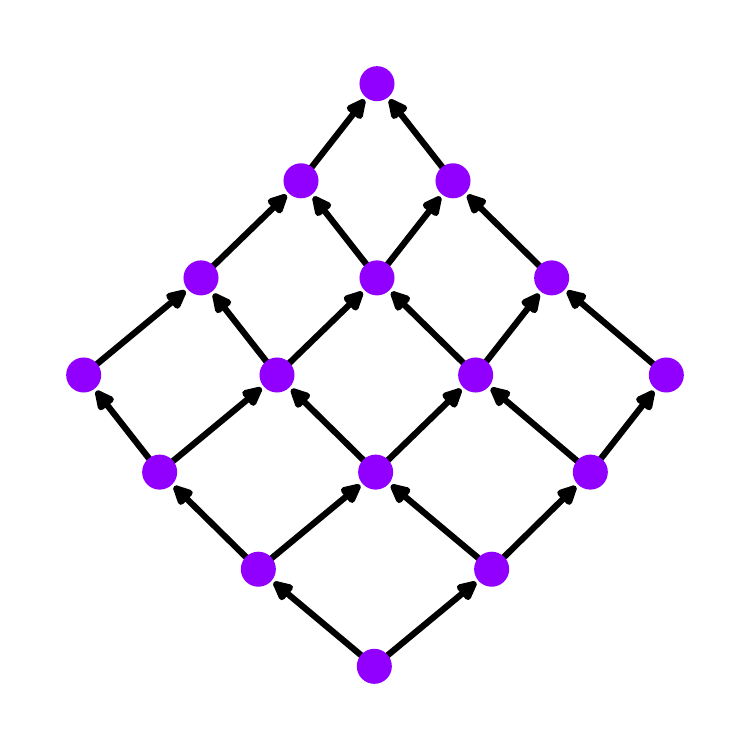} & 16 & 20 \\
 &  \includegraphics[height = 0.9cm, align = c]{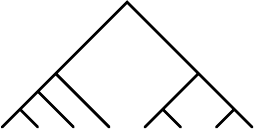} & \includegraphics[height = 1.6cm, align = c]{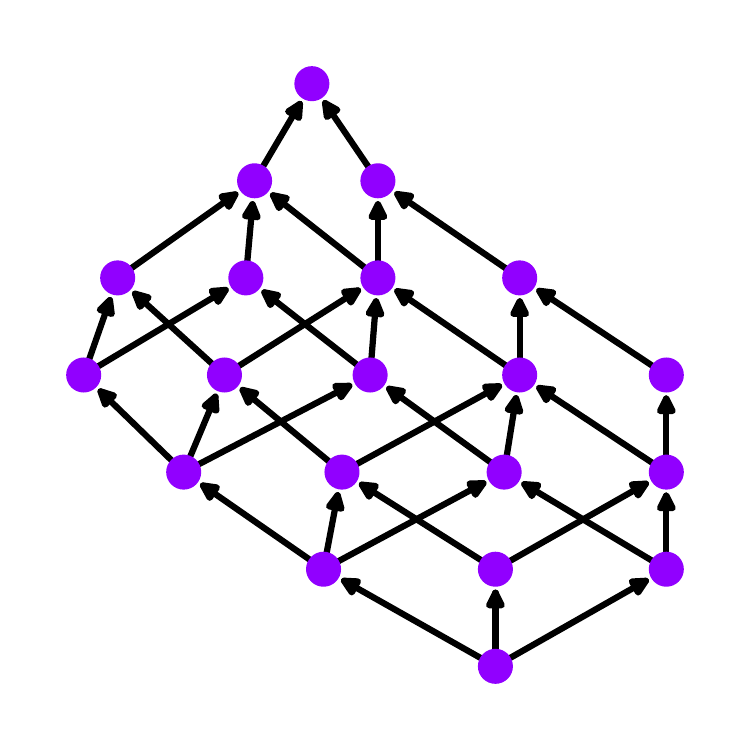} & 20 & 40 \\
 &  \includegraphics[height = 0.9cm, align = c]{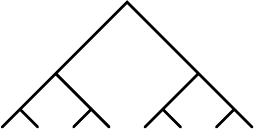} & \includegraphics[height = 1.6cm, align = c]{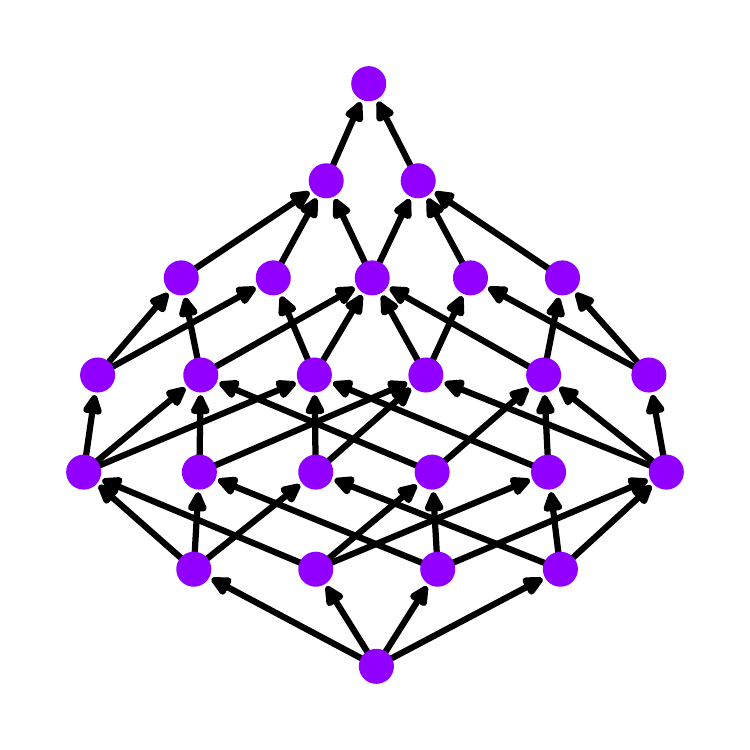} & 25 & 80 \\
\bottomrule
\end{tabular}
\end{minipage}}
\caption{Hasse diagrams $D(S)$ of the lattice of root ancestral configuations of $S$, for all trees $S$ of size $n=8$. As in Table \ref{tbl:examples_n_3_to_7}, a representative labeling for each unlabeled topology can be assumed, with the labels omitted for convenience. $c(S)$ is the number of root ancestral configurations of $S$, equal to the number of vertices in the Hasse diagram $D(S)$. $h(S)$ is the number of labeled histories of $S$, equal to the number of paths from the minimal element to the maximal element of $D(S)$.}
\label{tbl:examples_n_8}
\end{table}

\end{adjustwidth}

\clearpage
\section{A structure theorem}\label{sec:structure_description}

Based on the examples in Section~\ref{sec:examples}, we see that the Hasse diagrams $D(S)$ possess a recursive structure that mimics a recurrence for $C(S)$. Recall that a recurrence exists for root ancestral configurations of a given tree $S$ \citep[Proposition~1]{disanto2017enumeration}. First, for some sets of sets $A$ and $B$, define a binary (associative, commutative) operation $\otimes$ by
\[ A \otimes B = \{ a \cup b \mid a \in A, b\in B\}.\]
For a tree $S$ whose two nodes immediately descended from the root are labeled $r_\ell$ and $r_r$, the root ancestral configurations of $S$ can be written 
\begin{equation}
\label{eq:recur_enum}
	C(S) = \{ 
	\left[C(S|_{r_\ell}) \otimes C(S|_{r_r})\right] \cup 
	\left[\{\{r_\ell\}\} \otimes C(S|_{r_r})\right] \cup
    \left[C(S|_{r_\ell}) \otimes \{\{r_r\}\}\right] \cup 
	\{r_\ell,r_r\}\},
\end{equation}
where $S|_{r_\ell}$ and $S|_{r_r}$ represent the subtrees of $S$ rooted at $r_{\ell}$ and $r_r$, respectively. The cardinalities satisfy 
\begin{equation}
\label{eq:recur_count}
	c(S) = \left[c(S|_{r_\ell})+1\right] \left[c(S|_{r_r})+1\right].
\end{equation}

We prove an analogous recurrence for the construction of the lattices $D(S)$. First, a definition of the cartesian product of graphs \citep{sabidussi1959graph} is required.

\begin{definition}\label{def:cart_product}
	Given two directed graphs $G$ and $H$, their \emph{cartesian product} is a graph $\cp G H$ such that
	\begin{enumerate}
		\item The vertex set $V(\cp G H)$ is a cartesian product $V(G) \times V(H)$, and
		\item There is an edge from a vertex $(u,v)$ in $\cp G H$ to a vertex $(u',v')$ if either (i) $u=u'$ and there is an edge from $v$ to $v'$ in $H$ \emph{or} (ii) $v=v'$ and there is an edge from $u$ to $u'$ in $G$.
	\end{enumerate} 
\end{definition}

We also need some language for describing digraphs in the context  of ancestral configurations. Let $\mathcal{D}$ be the set of all directed graphs that arise as $D(S)$ for some labeled topology $S$. Recall that for all trees $S$, the graph $D(S)$ has a unique \emph{maximal node}: the node with no outgoing edges. This maximal node of $D(S)$ corresponds to the unique maximal root configuration in the lattice $C(S)$, consisting of two lineages immediately descended from the root of $S$. If one of these root lineages of $S$ is a leaf, then the maximal node of $D(S)$ has only one incoming edge. Hence, we have the following definition.
\begin{definition}
	We say that a digraph $D \in \mathcal D$ has a \emph{caterpillar-like maximal node} if the maximal node of $D$ has exactly one incoming edge.
\end{definition}

We introduce notation for appending caterpillar-like maximal nodes. For a digraph $D\in \mathcal D$, let $\widehat D$ be a digraph obtained by adding a new node $m'$ and an edge from the maximal node $m$ of $D$ to $m'$. Naturally, $m'$ is now a caterpillar-like maximal node of $\widehat D$. If $D$ is an empty graph, then we let $\widehat{D}$ be the graph with a unique node $m'$ and no edges.
For trees $S$, we interpret $\widehat{D(S)}$ as $D(S_1)$, where $S_1$ is the first tree in the caterpillar-like family with a seed tree $S$, as defined in Section~\ref{sec:chains}. With these definitions, the main result of the section can now be stated.

\begin{theorem}\label{thm:graph_decomp}
Consider a tree $S$ with $n\geq 2$ leaves. Let $r_\ell$ and $r_r$ be the two immediate descendants of its root, and let $S|_{r_\ell}$ and $S|_{r_r}$ be the subtrees of $S$ rooted at $r_\ell$ and $r_r$, respectively. Then the digraph $D(S)$ is isomorphic to a cartesian product of $\widehat{D(S|_{r_\ell})}$ and $\widehat{D(S|_{r_r})}$,
\[
	D(S) \cong \cp{\widehat{D(S|_{r_\ell})}}{\widehat{D(S|_{r_r})}}.
\]
\end{theorem}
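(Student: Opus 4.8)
The plan is to exhibit an explicit isomorphism of digraphs, building the vertex map directly from the set-recurrence eq.~\ref{eq:recur_enum} and then checking that it carries the edge relation of the cartesian product onto the covering relation of $D(S)$. Write $\mathcal L = C(S|_{r_\ell})$ and $\mathcal R = C(S|_{r_r})$, with respective maximal elements $m_\ell$ and $m_r$, each being the two-children configuration of its subtree root; these are the unique two-lineage configurations of the respective subtrees, since a coalescence lowers the lineage count by exactly one and the top element has two lineages. The vertex set of $\widehat{D(S|_{r_\ell})}$ is then $\mathcal L \cup \{\top_\ell\}$, where $\top_\ell$ is the appended node, which I identify with the fully coalesced left state $\{r_\ell\}$; likewise on the right. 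I then define $\varphi$ on a pair $(x,y)$ of $\cp{\widehat{D(S|_{r_\ell})}}{\widehat{D(S|_{r_r})}}$ by $\varphi(x,y) = \tilde x \cup \tilde y$, where $\tilde x = x$ if $x \in \mathcal L$ and $\tilde x = \{r_\ell\}$ if $x = \top_\ell$ (symmetrically for $\tilde y$). Splitting into the four cases $x \in \mathcal L$ or $x=\top_\ell$ against $y \in \mathcal R$ or $y = \top_r$, the image of $\varphi$ is exactly the fourfold union in eq.~\ref{eq:recur_enum}, so $\varphi$ is onto $C(S)$; it is injective because the two subtrees carry disjoint label sets, so $\tilde x$ and $\tilde y$ are recovered from $\tilde x \cup \tilde y$ as its parts descended from $r_\ell$ and from $r_r$. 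Thus $\varphi$ is a vertex bijection, consistent with the count $c(S) = (c(S|_{r_\ell})+1)(c(S|_{r_r})+1)$ of eq.~\ref{eq:recur_count}.

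The structural fact that makes the product appear is that no covering move in $C(S)$ mixes the two subtrees. A single coalescence merges two sibling lineages into their common parent; in a root ancestral configuration the only sibling pair whose parent is the root of $S$ is $\{r_\ell, r_r\}$, and merging it would pass above the root and hence not yield a root configuration. So every coalescence available within $C(S)$ acts entirely inside the left subtree or entirely inside the right subtree, fixing the other side. I would also record that covering in $C(S)$ coincides with a single coalescence: one coalescence drops the lineage count by one, admitting no strict intermediate, hence is a cover, while a relation realized by two or more coalescences factors through an intermediate and is not. Consequently each covering edge of $D(S)$ alters the left part or the right part of a configuration by one coalescence while fixing the other, which is precisely the axis-parallel shape of edges in Definition~\ref{def:cart_product}.

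With these ingredients the edge correspondence is a case check under $\varphi$. Forward: an edge $(x,y)\to(x',y')$ of the product has, say, $y=y'$ and $x\to x'$ an edge of $\widehat{D(S|_{r_\ell})}$. If $x,x'\in\mathcal L$ this is a cover in $\mathcal L$, one coalescence inside the left subtree, so $\tilde x\cup\tilde y \to \tilde{x'}\cup \tilde y$ fixes the right part and is a cover in $C(S)$; if instead $x = m_\ell$ and $x' = \top_\ell$ is the appended edge, then $\varphi$ sends it to $m_\ell \cup \tilde y \to \{r_\ell\}\cup \tilde y$, coalescing the two children of $r_\ell$ into $r_\ell$, again one coalescence. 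The case $x=x'$ and the right-hand factor are symmetric. Reverse: a cover $c\to c'$ of $C(S)$ is one coalescence inside one subtree, say the left, so the right part is unchanged, $c = \tilde x \cup \tilde y$ and $c' = \tilde{x'}\cup\tilde y$, and either the left part stays at $\geq 3$ lineages, giving a cover inside $\mathcal L$, or it drops from $m_\ell$ to $\{r_\ell\}$, giving the appended edge to $\top_\ell$; in both cases $\varphi^{-1}(c)\to\varphi^{-1}(c')$ is an edge of the product. This makes $\varphi$ a digraph isomorphism.

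The step I expect to demand the most care is the boundary between $\mathcal L$ and the appended node $\top_\ell$: I must verify that the unique coalescence carrying the two-lineage configuration $m_\ell$ up to the fully coalesced $\{r_\ell\}$ matches the single edge that $\widehat{\,\cdot\,}$ adds, and that no other edge of $D(S)$ is incident to $\top_\ell$ --- this is exactly where the uniqueness of $m_\ell$ as a two-lineage configuration is used. I would also dispatch the base case in which $r_\ell$ (or $r_r$) is a leaf: there $D(S|_{r_\ell})$ is empty and, by the stated convention, $\widehat{D(S|_{r_\ell})}$ is the single node $\{r_\ell\}$, so that factor contributes no interior and $\varphi$ still returns the correct configurations, including the cherry $n=2$, where both factors are single nodes and $D(S)$ is a single vertex.
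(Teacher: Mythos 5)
Your proof is correct and takes essentially the same route as the paper's: identify the vertex sets via eq.~\ref{eq:recur_enum}, interpreting the appended hat-nodes as the fully coalesced states $\{r_\ell\}$ and $\{r_r\}$, and then match edge sets by observing that every edge on either side corresponds to exactly one coalescence confined to a single root subtree (the root coalescence itself being excluded from $D(S)$). Your version is more explicit than the paper's --- notably the lemma that covers in $C(S)$ coincide with single coalescences, and the boundary check at the appended node --- but these fill in steps the paper asserts implicitly rather than constituting a different approach.
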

Taking cardinalities of the vertex sets of the diagrams, we obtain eq.~\ref{eq:recur_count}, as $|\widehat{D(S)}| = |D(S)| + 1 = c(S) + 1$ for all $S$, so that $c(S) = |D(S)|= |\widehat{D(S|_{r_\ell})}| |\widehat{D(S|_{r_r})}| = \left[c(S|_{r_\ell})+1\right] \left[c(S|_{r_r})+1\right]$. Theorem~\ref{thm:graph_decomp} thus expands the recurrences in eqs.~\ref{eq:recur_enum} and \ref{eq:recur_count} to the level of Hasse diagrams. The diagram for a tree $S$ is a cartesian product of the diagrams for its root subtrees $S|_{r_r}$ and $S|_{r_\ell}$ with an extra node added to each, to account for root coalescences of $S|_{r_r}$ and $S|_{r_\ell}$.

\begin{proof}[Proof of Theorem~\ref{thm:graph_decomp}]
Eq.~\ref{eq:recur_enum} gives a bijection between the vertex sets $V\big(D(S)\big)$ and $V\left(\cp{\widehat{D(S|_{r_\ell})}}{\widehat{D(S|_{r_r})}}\right)$ by interpreting caterpillar-like maximal nodes added by the ``hat''-operation as $r_r$ and $r_\ell$ in eq.~\ref{eq:recur_enum}. This bijection allows us to formally identify the vertices of the two digraphs.

To conclude that $D(S) = \cp{\widehat{D(S|_{r_\ell})}}{\widehat{D(S|_{r_r})}}$, we must show that the sets of edges of digraphs are equal. We first show containment of edge sets, and then prove by contradiction that containment cannot be strict.

We claim that edge sets satisfy $E\left(\cp{\widehat{D(S|_{r_\ell})}}{\widehat{D(S|_{r_r})}}\right) \subseteq E\big(D(S)\big)$ under the identification of vertices via eq.~\ref{eq:recur_enum}. The digraph $\cp{\widehat{D(S|_{r_\ell})}}{\widehat{D(S|_{r_r})}}$ has an edge between $(u,v)$ and $(u',v')$ if either (1) $u=u'$ and $\widehat{D(S|_{r_r})}$ has an edge from $v$ to $v'$ or (2) if $v = v'$ and $\widehat{D(S|_{r_\ell})}$ has an edge from $u$ to $u'$. This means that each edge of $\cp{\widehat{D(S|_{r_\ell})}}{\widehat{D(S|_{r_r})}}$, viewed as an edge between vertices of $D(S)$, corresponds to precisely one coalescence, happening either in the right subtree $S_r$ (case 1), or in the left subtree $S_\ell$ (case 2), including the root coalescences of the two subtrees. Such an edge is in $E\big(D(S) \big)$ by definition of $D(S)$.

Finally, if there is an edge $e \in E\big(D(S)\big) - E\left(\cp{\widehat{D(S|_{r_\ell})}}{\widehat{D(S|_{r_r})}}\right)$, then it corresponds to a coalescence that happens in neither of the root subtrees $S_\ell$, $S_r$. There is only one such coalescence, namely the root of $S$. However, by definition of $D(S)$, root coalescences are not reflected in the digraph, so we have a contradiction. We conclude that $D(S) = \cp{\widehat{D(S|_{r_\ell})}}{\widehat{D(S|_{r_r})}}$.
\end{proof}

To simplify notation, we define a binary operation on digraphs $D_1$ and $D_2$, $\widehat\square$, by \[\cphat{D_1}{D_2} = \cp{\widehat{D_1}}{\widehat{D_2}}.\]
This operation combines the operations of adding ancestral nodes and taking cartesian products. Hence, Theorem~\ref{thm:graph_decomp} allows us to encode a graph as a $\widehat\square$-product of simpler graphs. 

First, recall that for a tree consisting of only one leaf, we define the set of ancestral configurations, $C(S)$, to be empty. Correspondingly, $D(S)$ is an empty graph. For a tree with two leaves, $S = (a,b)$, the set of ancestral configurations $C(S)$ has only one element, and $D(S)$ is a graph with one node and no edges. 

If we recursively decompose a tree $S$ into left and right subtrees $S|_{r_\ell}$ and $S|_{r_r}$, then we eventually arrive at trees with one leaf and trees with two leaves (cherry nodes). Of course, a tree with two leaves decomposes into two one-leaf trees, but it is convenient not to make this decomposition.

Repeated application of Theorem~\ref{thm:graph_decomp} gives us the following result.
\begin{proposition}\label{prop:D_desc}
	Let $\emptyset$ denote an empty graph, and let $I$ denote the graph with a single node and no edges. Then any digraph $D \in \mathcal D$ can be obtained by repeated application of the $\widehat\square$ operator to $\emptyset$ and $I$.
\end{proposition}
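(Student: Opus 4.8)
The plan is to prove the statement by induction on the number of leaves $n$ of a tree $S$ with $D = D(S)$, using Theorem~\ref{thm:graph_decomp} as the inductive engine. Since every $D \in \mathcal{D}$ is by definition $D(S)$ for some labeled topology $S$, and each such $S$ has a well-defined leaf count $n$, induction on $n$ covers all of $\mathcal{D}$.

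For the base cases, I would record that the two declared generators are precisely the digraphs of the smallest trees: a one-leaf tree has $C(S)$ empty and hence $D(S) = \emptyset$, while a two-leaf cherry $(a,b)$ has $|C(S)| = 1$ and hence $D(S) = I$. One may note in passing that $I$ is in fact redundant, since the convention fixed just before Theorem~\ref{thm:graph_decomp} gives $\widehat{\emptyset} = I$, so that $\cphat{\emptyset}{\emptyset} = \cp{I}{I} = I$ (the cartesian product of edgeless single-vertex graphs is again an edgeless single vertex). Nonetheless, keeping $I$ as a generator lets us halt the recursion at cherries rather than descending all the way to single leaves, which is notationally cleaner.

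For the inductive step, take $S$ with $n \geq 2$ leaves and let $r_\ell, r_r$ be the immediate descendants of its root, with subtrees $S|_{r_\ell}$ and $S|_{r_r}$. Because $S$ is binary with $n \geq 2$ leaves, each subtree has between $1$ and $n-1$ leaves, so both have strictly fewer leaves than $S$. By the induction hypothesis, each of $D(S|_{r_\ell})$ and $D(S|_{r_r})$ can be built from $\emptyset$ and $I$ by repeated application of $\widehat\square$. Theorem~\ref{thm:graph_decomp} then gives $D(S) = \cphat{D(S|_{r_\ell})}{D(S|_{r_r})}$, so a single further application of $\widehat\square$ exhibits $D(S)$ as a $\widehat\square$-expression in $\emptyset$ and $I$, completing the induction.

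All of the mathematical content is carried by Theorem~\ref{thm:graph_decomp}, so there is no serious obstacle here. The only points requiring any care are (a) verifying that the recursion terminates --- guaranteed because each decomposition strictly decreases the leaf count, so the process reaches the one- and two-leaf base cases after finitely many steps --- and (b) confirming that the base-case digraphs coincide with the declared generators $\emptyset$ and $I$. Both are immediate from the definitions, so the proposition is a direct corollary of the structure theorem.
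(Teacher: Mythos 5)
Your proof is correct and takes essentially the same route as the paper, which presents the proposition as an immediate consequence of repeated application of Theorem~\ref{thm:graph_decomp}; you have simply made the implicit induction on leaf count explicit, with the correct base cases $D(\text{single leaf}) = \emptyset$ and $D(\text{cherry}) = I$. Your aside that $I$ is redundant (since $\cphat{\emptyset}{\emptyset} = \cp{I}{I} = I$) is also correct and consistent with the paper's convention $\widehat{\emptyset} = I$.
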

The decomposition can be ``read'' directly from a tree by writing $I$ for each cherry node and $\emptyset$ for each single edge, including $\widehat\square$ products accordingly. For example, for a caterpillar tree $\mathcal C_n$ with $n\geq 2$, we write 
\begin{equation}\label{eq:decomp_cat}
    D(\mathcal C_n) = (((I \house \emptyset) \ldots) \house \emptyset) \house \emptyset,
\end{equation}
with the $\widehat \square$-product applied $n-2$ times. As $\widehat{\emptyset} = I$, $D(\mathcal C_n)$ is equal to $I$ with $n-2$ caterpillar-like nodes added, giving us a linear path graph, as seen in Section~\ref{sec:examples}.

In the next section, we present a series of examples of the application of Theorem \ref{thm:graph_decomp} and Proposition \ref{prop:D_desc}.

\section{Specific families}\label{sec:families}

With the connection between sets of ancestral configurations and lattices established, we now examine the diagrams produced by special families of trees. As noted in Section \ref{sec:chains}, the diagrams for the caterpillar trees are trivial path graphs, with a unique maximal chain. Here, we consider $p$-pseudocaterpillar trees, bicaterpillar trees, two types of lodgepole trees, and balanced trees. In the figures of this section, we omit leaf and node labels on trees and diagrams for visual clarity.

\subsection{Pseudocaterpillar trees}\label{sec:pcat_ex}

First, we consider $p$-pseudocaterpillar trees, as defined by \cite{alimpiev2021enumeration}. For $p \geq 4$, a $p$-pseudocaterpillar tree has two cherries; it is encoded by integers $(n,p)$, where $n$ is the number of leaves, and $p$ is an index that denotes the location of one of the cherries. More precisely, we label the leaves from left to right by labels 1 to $n$; one cherry contains leaves 1 and 2, and the other contains leaves $p-1$ and $p$. For $p=3$, a $p$-pseudocaterpillar tree has only one cherry, consisting of leaves labeled 2 and 3. The $p$-pseudocaterpillar trees encoded by $(n,n)$ are $(a,(b,c))$ for $n=3$, $((a,b),(c,d))$ for $n=4$, and $(((a,b),c),(d,e))$ for $n=5$. We denote the $p$-pseudocaterpillar tree encoded by $(n,n)$ with the symbol $\mathcal P_n$ (Figure~\ref{fig:families2}).

Hasse diagrams for the sets of root configurations for the pseudocaterpillar family $\mathcal P_n$ appear in Figure~\ref{fig:pcat}. Recall that the number of nodes in a Hasse diagram is the number of root configurations; the number of paths in a Hasse diagram is the number of labeled histories.

The diagrams have a structure in which a path from bottom to top on the left side of the diagram represents coalescence of the caterpillar branches prior to coalescence of the cherry containing leaves $n-1$ and $n$, and a path from bottom to top on the right side of the diagram represents coalescence of the caterpillar branches after coalescence of the cherry. The edges connecting the path on the left and the path on the right represent coalescence of the cherry. Using Theorem~\ref{thm:graph_decomp}, this pattern can be summarized by 
\begin{equation}\label{eq:decomp_pcat}
	D(\mathcal P_n) = \cphat{D(\mathcal C_{n-2})}{I}.
\end{equation}

The number of root ancestral configurations $c(\mathcal{P}_n)$, or the number of nodes in the Hasse diagram of $\mathcal{P}_n$, is computed by eq.~\ref{eq:recur_count}, and we obtain $c(\mathcal{P}_n) = 2(n-2)$. 
Similarly, the number of labeled histories $h(\mathcal{P}_n)$, or the number of paths in the Hasse diagram of $\mathcal{P}_n$, is calculated by eq.~\ref{eq:steel_exp}, equaling $n-2$. The values of $c(\mathcal{P}_n)$ and $h(\mathcal{P}_n)$ are given in Table~\ref{tbl:pcat_numbers}.

\begin{table}[tb]
\centering
	\begin{tabular}{ccccccc}
		\toprule 
            & \multicolumn{6}{c}{$n$} \\ \cline{2-7}
		    & 3 & 4 & 5 & 6 & 7 & 8 \\
		\hline
		$c(\mathcal P_n)$ & 2 & 4 & 6 & 8 & 10 & 12 \\
		$h(\mathcal P_n)$ & 1 & 2 & 3 & 4 & 5  & 6  \\
		\bottomrule
	\end{tabular}
	\caption{The number of root ancestral configurations $c(\mathcal P_n)$ and the number of paths $h(\mathcal P_n)$ for $p$-pseudocaterpillar trees $\mathcal P_n$.}
	\label{tbl:pcat_numbers}
\end{table}

\begin{figure}[ht]\centering
\includegraphics[width=0.65\linewidth]{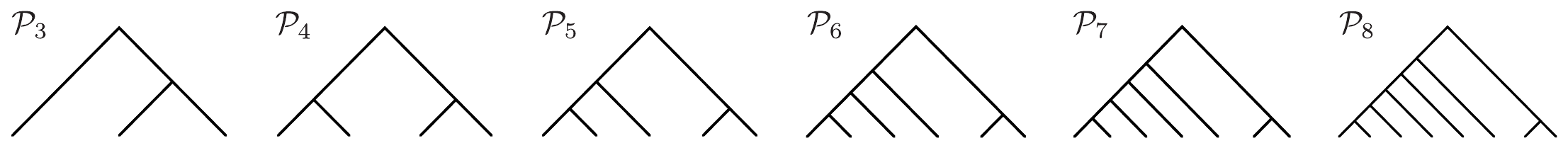}
\caption{$p$-pseudocaterpillar trees $\mathcal P_n$ for $n=3,4,5,6,7,8$.}
\label{fig:families2}
\end{figure}

\begin{figure}[ht]\centering
\includegraphics[width=0.65\linewidth, trim= 200 100 200 100, clip]{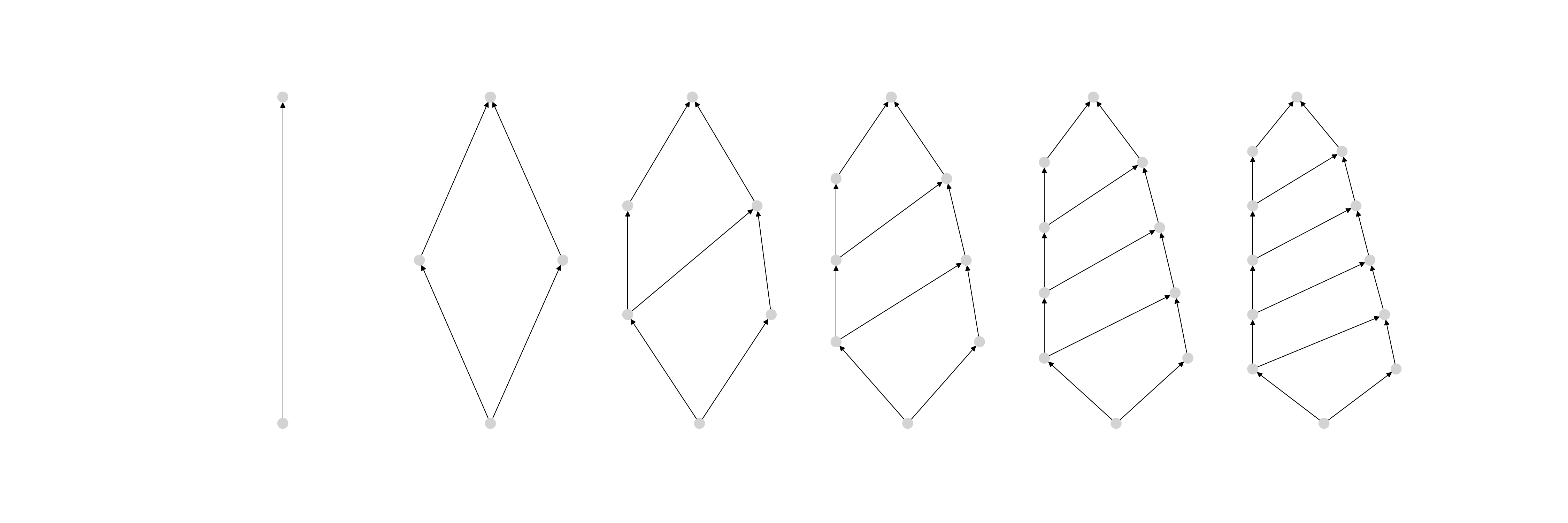}
\caption{Hasse diagrams corresponding to $p$-pseudocaterpillar trees $\mathcal P_n$ for $n = 3, 4, 5, 6, 7, 8$.}
\label{fig:pcat}
\end{figure}

\subsection{Bicaterpillars}
\label{subsec:bicat_ex}

For our next family, we consider symmetric bicaterpillar trees, defined as trees whose two root subtrees---trees induced by immediate descendants of the root---are caterpillar trees $\mathcal C_n$. We denote by $\mathcal{C}_{n,n}$ the symmetric bicaterpillar tree with two caterpillar subtrees $\mathcal{C}_n$ immediately descended from the root. In particular, $\mathcal C_{1,1} = (a,b)$, $\mathcal C_{2,2} = ((a,b),(c,d))$, $\mathcal C_{3,3} = (((a,b),c),(d,(e,f)))$, and so on, as shown in Figure~\ref{fig:families3}. 

\begin{table}[tb]
\centering
	\begin{tabular}{ccccccc}
		\toprule 
            & \multicolumn{6}{c}{$n$} \\ \cline{2-7}
		    & 1 & 2 & 3 & 4 & 5 & 6 \\
		\hline
		$c(\mathcal C_{n,n})$ & 1 & 4 & 9 & 16 & 25 & 36 \\
		$h(\mathcal C_{n,n})$ & 1 & 2 & 6 & 20 & 70 & 252 \\
		\bottomrule
	\end{tabular}
	\caption{The number of root ancestral configurations $c(\mathcal C_{n,n})$ and the number of paths $h(\mathcal C_{n,n})$ for bicaterpillar trees $\mathcal C_{n,n}$.}
	\label{tbl:bicat_numbers}
\end{table}

The Hasse diagrams for the sets of root configurations are presented in Figure~\ref{fig:bicat}. In the Hasse diagrams, edges moving up from left to right represent coalescences on say, the left side of the tree, and edges moving up from right to left represent coalescences on the right side of the tree.

The symmetry of the symmetric bicaterpillars generates a symmetry in the edges of the Hasse diagram, so that the Hasse diagram can be viewed as a square grid. The equivalence with paths from one corner of a square grid to an opposite corner is convenient for counting the paths. In particular, the number of nodes in the Hasse diagram, obtained by eq.~\ref{eq:recur_count}, is $c(\mathcal{C}_{n,n}) = n^2$. The number of paths, obtained by eq.~\ref{eq:steel_exp}, is $h(\mathcal{C}_{n,n})={2n-2 \choose n-1}$. The values of $c(\mathcal{C}_{n,n})$ and $h(\mathcal{C}_{n,n})$ appear in Table~\ref{tbl:bicat_numbers}.

Note that this example naturally extends to the case of general bicaterpillar trees $\mathcal C_{p,q}$, in which the two immediate subtrees of the root are caterpillars of size $p$ and $q$ leaves. The diagram is a rectangle with $pq$ nodes, and the number of paths on it is $\binom{p+q-2}{q-1}$.

Theorem~\ref{thm:graph_decomp} gives a quick proof of this result. By definition of $\mathcal C_{p,q}$, we have 
\begin{equation}\label{eq:decomp_bicat}
    D(\mathcal C_{p,q}) = \cphat{\mathcal C_p}{\mathcal C_q},
\end{equation}
so that $D(\mathcal C_{p,q})$ is a cartesian product of two path graphs with $p$ and $q$ vertices, respectively, or a ``$p \times q$ grid.'' 

The bicaterpillar example can be extended still further, to trees for which one of the two immediate subtrees of the root is a caterpillar.


\begin{proposition}\label{prop:house_with_caterpillar}
	Consider a tree whose two subtrees immediately descended from the root are $S$ and 
	$\mathcal{C}_p$, with $|S| \geq 1$ and $\mathcal{C}_p$ a caterpillar with $p$ leaves. Then $h(\cphat{D(S)}{D(\mathcal C_p)}) = \binom{n+p-1}{p-1} \,h\big(D(S)\big)$, where $n=|S|-1$, the number of coalescences of $S$.
\end{proposition}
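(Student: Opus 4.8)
The plan is to invoke Theorem~\ref{thm:graph_decomp} to rewrite $\cphat{D(S)}{D(\mathcal C_p)}$ as the cartesian product $\cp{\widehat{D(S)}}{\widehat{D(\mathcal C_p)}}$, and then to count its paths from the minimal to the maximal element by a shuffle (interleaving) argument. First I would record two structural facts about the factors. By eq.~\ref{eq:decomp_cat} the diagram $D(\mathcal C_p)$ is a linear path graph, so $\widehat{D(\mathcal C_p)}$ is a directed path on $p$ vertices with exactly $p-1$ edges; in particular it possesses a unique path from its minimal to its maximal element. For the other factor, recall from Section~\ref{sec:chains} that appending a caterpillar-like maximal node adds a single edge to the top of a Hasse diagram without changing the number of paths from bottom to top, so that $h(\widehat{D(S)}) = h(D(S))$. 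Moreover, every such path in $\widehat{D(S)}$ has the \emph{same} number of edges, namely $n = |S|-1$: the lattice $C(S)$ is graded, since every maximal chain reduces $|S|$ lineages down to $2$ and hence has $|S|-2$ edges, and the hat operation contributes one more. (The degenerate cases $|S|=1$ or $p=1$, where one factor is the single-node graph $I$, are consistent under the convention that $I$ carries a unique trivial path of length $0$.)

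Next I would set up the bijection. Writing $A = \widehat{D(S)}$ and $B = \widehat{D(\mathcal C_p)}$, the minimal and maximal elements of $\cp A B$ are $(a_{\min},b_{\min})$ and $(a_{\max},b_{\max})$. By Definition~\ref{def:cart_product}, any directed path in $\cp A B$ is a sequence of steps, each of which is either an \emph{$A$-step} (an edge of $A$ with the $B$-coordinate held fixed) or a \emph{$B$-step} (an edge of $B$ with the $A$-coordinate held fixed). Projecting a bottom-to-top path of $\cp A B$ onto the two factors yields a bottom-to-top path in $A$ together with a bottom-to-top path in $B$: the $A$-projection is a directed walk from $a_{\min}$ to $a_{\max}$ using only cover edges, so it is forced to be a maximal chain, and likewise for $B$. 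The original path is recovered uniquely from this pair of projections together with the interleaving datum recording which of the total steps are $B$-steps. Conversely, because the two coordinate directions of a cartesian product are independent---an $A$-edge is available no matter the current $B$-coordinate and vice versa---every choice of an $A$-path, the $B$-path, and an interleaving reconstructs a legitimate directed path in $\cp A B$.

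Finally I would count. There are $h(A) = h(D(S))$ choices of $A$-path, each with exactly $n$ edges; there is a unique $B$-path, with $p-1$ edges; and the number of interleavings of $n$ $A$-steps with $p-1$ $B$-steps is $\binom{n+p-1}{p-1}$. Multiplying gives $h(\cp A B) = \binom{n+p-1}{p-1}\,h(D(S))$, which is the claim.

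The hard part will be the interleaving bijection, and it is worth isolating two points there. One must check that every shuffle of a fixed $A$-path with the $B$-path actually traces a valid directed path in the product, which rests squarely on the independence of the two edge directions in Definition~\ref{def:cart_product}; and one must justify collapsing the count into a single binomial coefficient, which is legitimate precisely because gradedness forces all $A$-paths to share the common length $n$. Without that common length, the answer would be a sum of interleaving numbers over the distinct $A$-paths rather than the clean product $\binom{n+p-1}{p-1}\,h(D(S))$.
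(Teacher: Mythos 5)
Your proposal is correct and takes essentially the same approach as the paper: the paper's proof also counts paths in the product by interleaving the $p-1$ linearly ordered caterpillar edges among the $n+p-1$ total steps of each of the $h(D(S))$ bottom-to-top chains, yielding $\binom{n+p-1}{p-1}\,h(D(S))$. Your write-up simply makes explicit what the paper's two-sentence argument leaves implicit---that the shuffle correspondence is a bijection in both directions, and that gradedness forces every chain of $\widehat{D(S)}$ to have the common length $n$ so the count collapses to a single binomial coefficient.
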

\begin{proof}
    Given a path in $h\big(D(S)\big)$ (of length $n-1$), we can construct $\binom{n+p-1}{p-1}$ distinct paths in $h(\cphat{D(S)}{D(\mathcal C_p)})$, each with length $n+p-1$, by choosing the location of $p-1$ (linearly ordered) edges of $D(\mathcal C_p)$ among the $n+p-1$ total edges. Of course, paths constructed from distinct elements of $h\big(D(S)\big)$ are distinct as well, which allows us to conclude that $h(\cphat{D(S)}{D(\mathcal C_p)}) = \binom{n+p-1}{p-1} \,h\big(D(S)\big)$.
\end{proof}

\begin{figure}[ht]\centering
\includegraphics[width=0.65\linewidth]{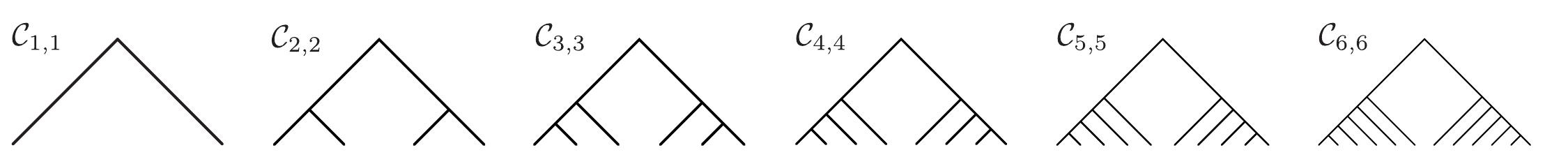}
\caption{Bicaterpillar trees $\mathcal C_{n,n}$ for $n=1,2,3,4,5,6$.}
\label{fig:families3}
\end{figure}

\begin{figure}[ht]\centering
\includegraphics[width=0.5\linewidth]{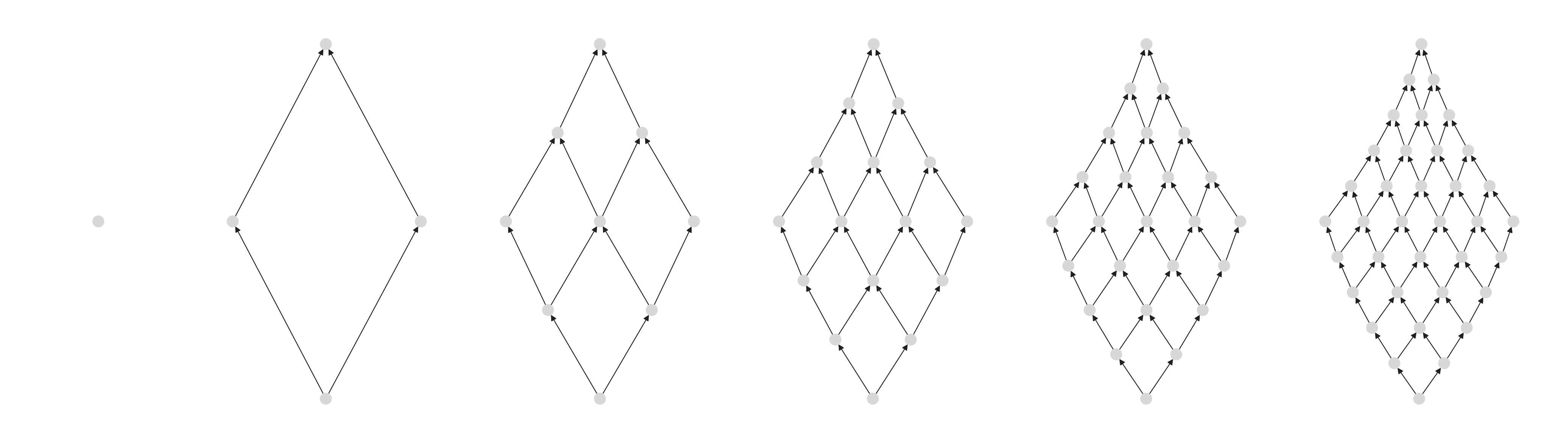}
\caption{Hasse diagrams corresponding to symmetric bicaterpillar trees $\mathcal C_{n,n}$ for $n = 1, 2, 3, 4, 5, 6$. The diagram for $\mathcal C_{1,1}$ consists of a single node and no edges.}
\label{fig:bicat}
\end{figure}

\clearpage

\subsection{Lodgepole trees}
\label{subsec:lodg1_ex}

We now consider two families of ``lodgepole'' trees. First, let $\mathcal L_n$ be the family of lodgepole trees defined as in \cite{disanto2015coalescent}. In particular, $\mathcal L_1 = (a,(b,c))$, $\mathcal L_2 = ((a,(b,c)),(e,f))$, $\mathcal L_3 = (((a,(b,c)),(e,f)),(g,h))$, and so on, as shown in Figure~\ref{fig:families4}.

Hasse diagrams for the sets of root configurations for the lodgepole trees appear in Figure~\ref{fig:lodg1}. Using eq.~\ref{eq:recur_count}, we notice that the number of nodes in the Hasse diagrams, or the number of root ancestral configurations, is $c(\mathcal{L}_n)=2^{n+1} - 2$. The number of paths, or number of labeled histories, is the double factorial $h(\mathcal{L}_{n})=(2n-1)!!$, as obtained by eq.~\ref{eq:steel_exp}. The number of root ancestral configurations $c(\mathcal{L}_n)$ and the number of labeled histories $h(\mathcal{L}_{n})$ appear in  Table~\ref{tbl:lodgepole1_numbers}.

\begin{table}[tb]
\centering
	\begin{tabular}{cccccc}
		\toprule 
            & \multicolumn{5}{c}{$n$} \\ \cline{2-6}
		    & 1 & 2 & 3 & 4 & 5 \\
		\hline
		$c(\mathcal L_{n})$ & 2 & 6 & 14 & 30  & 62 \\
		$h(\mathcal L_{n})$ & 1 & 3 & 15 & 105 & 945 \\
		\bottomrule
	\end{tabular}
	\caption{The number of root ancestral configurations $c(\mathcal L_n)$ and the number of paths $h(\mathcal L_n)$ for lodgepole trees $\mathcal L_n$.}
	\label{tbl:lodgepole1_numbers}
\end{table}

Using Theorem~\ref{thm:graph_decomp}, we can define the class of diagrams coming from lodgepole trees as
\begin{equation}\label{eq:decomp_lodg1}
    D(\mathcal L_n) = ((\emptyset \house I)\house \ldots)\house I,
\end{equation}
where the $\widehat \square$-product is applied $n$ times. Because the right root subtree of $\mathcal L_n$ is a caterpillar $\mathcal C_2$, Proposition~\ref{prop:house_with_caterpillar} can be applied to show that 
\[
    h(\mathcal L_n) = (2n-1) \, h(\mathcal L_{n-1}).
\]
An inductive argument then gives a proof of the fact that $h(\mathcal{L}_{n})=(2n-1)!!$ purely in the language of lattice diagrams, without referencing eq.~\ref{eq:steel_exp}.

\begin{figure}[ht]\centering
\includegraphics[width=\linewidth]{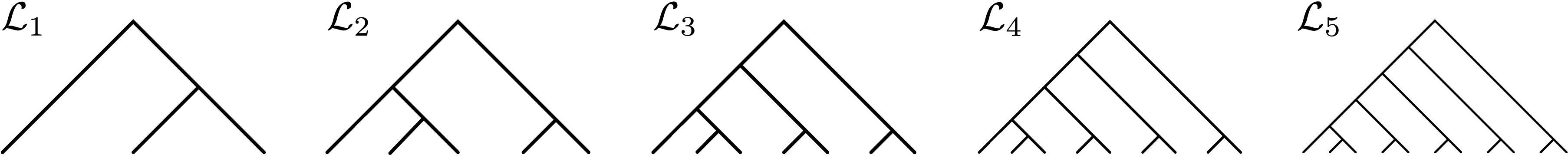}
\caption{Lodgepole trees $\mathcal L_n$ for $n=1,2,3,4,5$.}
\label{fig:families4}
\end{figure}

\begin{figure}[ht]\centering
\includegraphics[width=\linewidth, trim= 200 100 200 100, clip]{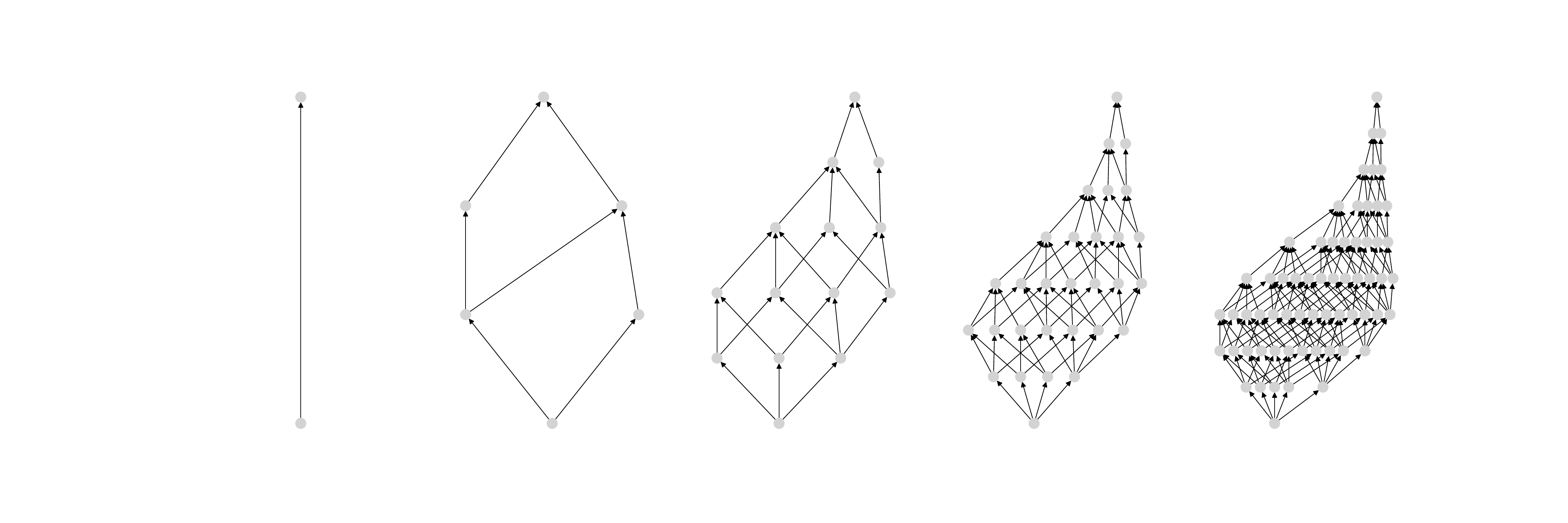}
\caption{Hasse diagrams corresponding to lodgepole trees $\mathcal L_n$ for $n = 1,2,3,4,5$. }
\label{fig:lodg1}
\end{figure}

\newpage

\subsection{A modified lodgepole family}\label{subsec:lodg2_ex}

For our second family of lodgepole-like trees, let $\mathcal M_n$ be a family akin to lodgepoles, but with a cherry instead of the leftmost leaf. In particular, $\mathcal M_1 = ((a,b),(c,d))$, $\mathcal M_2 = (((a,b),(c,d)),(e,f))$, $\mathcal M_3 = ((((a,b),(c,d)),(e,f)),(g,h))$, and so on, as in Figure~\ref{fig:families5}. 

Hasse diagrams for the sets of root configurations for the modified lodgepole family appear in Figure~\ref{fig:lodg2}. The number of nodes in the diagrams, or the number of root ancestral configurations, as calculated by eq.~\ref{eq:recur_count}, is $h(\mathcal M_n) = 3\times 2^{n} - 2$. The number of paths, or number of labeled histories, is $h(\mathcal M_n) = (2n)!!$ by eq.~\ref{eq:steel_exp}, again a double factorial, but now including only even numbers. We provide the number of root ancestral configurations $c(\mathcal{M}_n)$ and the number of labeled histories $h(\mathcal{M}_{n})$ in Table~\ref{tbl:lodgepole2_numbers}.

\begin{table}[tb]
\centering
	\begin{tabular}{cccccc}
		\toprule 
            & \multicolumn{5}{c}{$n$} \\ \cline{2-6}
		    & 1 & 2 & 3 & 4 & 5 \\
		\hline
		$c(\mathcal M_{n})$ & 4 & 10 & 22 & 46  & 94 \\
		$h(\mathcal M_{n})$ & 2 & 8  & 48 & 384 & 3840 \\
		\bottomrule
	\end{tabular}
	\caption{The number of root ancestral configurations $c(\mathcal M_n)$ and the number of paths $h(\mathcal M_n)$ for modified lodgepole trees $\mathcal M_n$.}
	\label{tbl:lodgepole2_numbers}
\end{table}

Similarly to Section~\ref{subsec:lodg1_ex}, Theorem~\ref{thm:graph_decomp} gives us the following general description of the diagrams for the modified lodgepole family:
\begin{equation}\label{eq:decomp_lodg2}
    D(\mathcal M_n) = ((I \house I)\house \ldots)\house I,
\end{equation}
where the $\widehat \square$-product is applied $n$ times. Analogously, Proposition~\ref{prop:house_with_caterpillar} can be applied to show that 
\[
    h(\mathcal M_n) = (2n) \, h(\mathcal M_{n-1}),
\]
 so that the fact that $h(\mathcal M_{n})=(2n)!!$ is established without reference to eq.~\ref{eq:steel_exp}.

\begin{figure}[ht]\centering
\includegraphics[width=\linewidth]{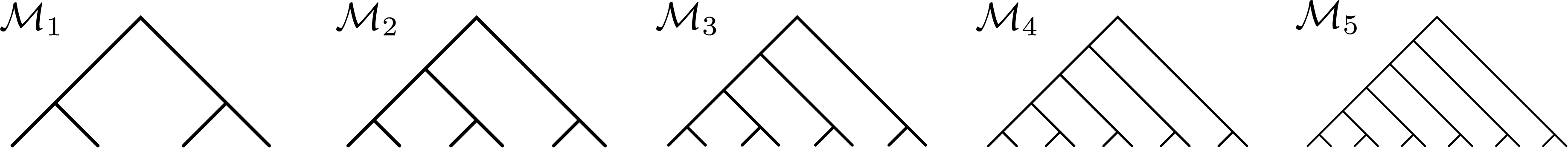}
\caption{Modified lodgepole trees $\mathcal M_n$ for $n=1,2,3,4,5$.}
\label{fig:families5}
\end{figure}

\begin{figure}[ht]\centering
\includegraphics[width=\linewidth, trim= 200 100 200 100, clip]{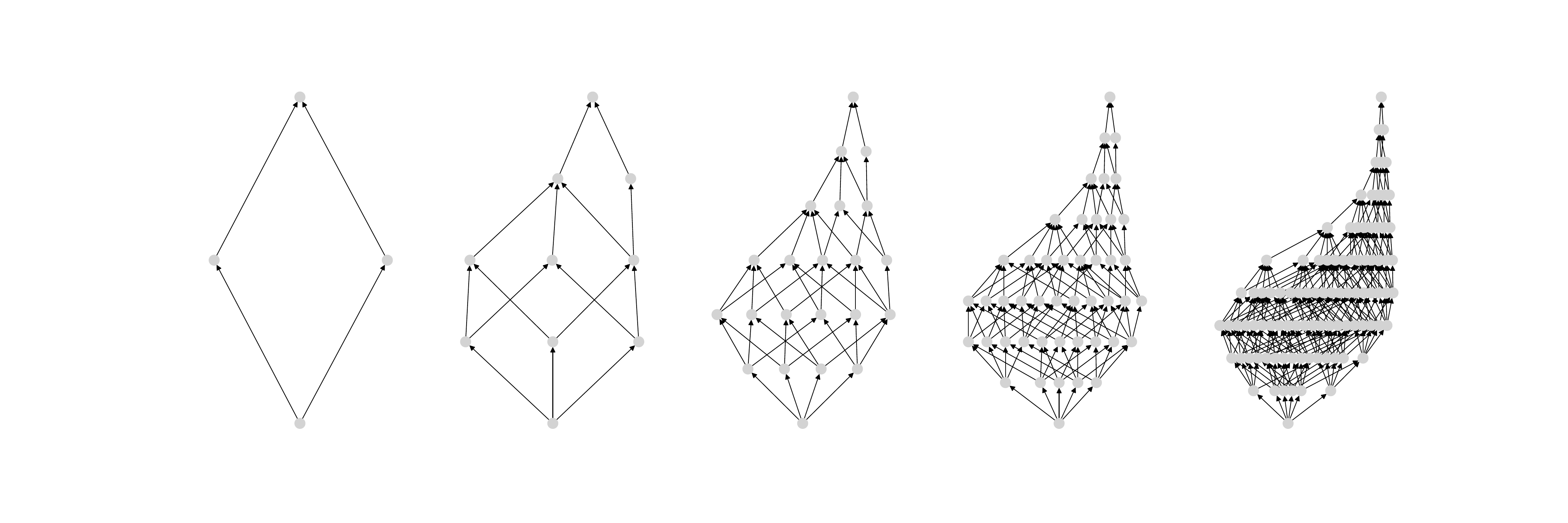}
\caption{Hasse diagrams corresponding to modified lodgepole trees $\mathcal M_n$ for $n = 1,2,3,4,5$.}
\label{fig:lodg2}
\end{figure}

\newpage

\subsection{Balanced trees}
\label{subsec:bal_ex}

We define the family of fully balanced trees $\mathcal B_n$ by $\mathcal B_1 = (a,b)$, $\mathcal B_2 = ((a,b),(c,d))$, $\mathcal B_3 = (((a,b),(c,d)),((e,f),(g,h)))$, and so on, where the tree $\mathcal B_n$ has $2^n$ leaves (Figure~\ref{fig:families1}). 

The Hasse diagrams for the sets of root ancestral configurations for $\mathcal B_1$, $\mathcal B_2$, $\mathcal B_3$, and $\mathcal B_4$ appear in Figure~\ref{fig:balanced}. The number of root ancestral configurations for $\mathcal B_n$ is obtained from eq.~\ref{eq:recur_count}, and it is described by a quadratic recurrence $c(\mathcal B_n) = [1 + c(\mathcal B_{n-1})]^2$ with $c(\mathcal B_1)=1$ (A004019 in OEIS). It is also the number of nodes in the diagrams in Figure~\ref{fig:balanced}. Proposition~4 in \cite{disanto2017enumeration} showed that balanced trees $\mathcal B_n$ have the most root ancestral configurations among trees with a fixed size equal to a power of 2; hence, the digraphs $D(\mathcal B_n)$ are the largest digraphs for fixed $n$ when $n$ is a power of two.

\begin{table}[tb]
\centering
	\begin{tabular}{ccccc}
		\toprule 
            & \multicolumn{4}{c}{$n$} \\ \cline{2-5}
		    & 1 & 2 & 3 & 4 \\
		\hline
		$c(\mathcal B_n)$ & 1 & 4 & 25 & 676 \\
		$h(\mathcal B_n)$ & 1 & 2 & 80 & 21964800 \\
		\bottomrule
	\end{tabular}
	\caption{The number of root ancestral configurations $c(\mathcal{B}_n)$ and the number of paths $h(\mathcal{B}_n)$ for balanced trees $\mathcal{B}_n$.}
	\label{tbl:balanced_numbers}
\end{table}

The number of labeled histories for  $\mathcal B_n$ is obtained from eq.~\ref{eq:steel_exp}:
\begin{equation}
    h(\mathcal{B}_n) = \frac{(2^n-2)!}{\prod_{k=1}^{n-2} (2^{n-k}-1)^{2^k}}.
\end{equation}
It is the number of paths on the diagrams in Figure~\ref{fig:balanced} (A056972 in OEIS).
We provide the number of root ancestral configurations $c(\mathcal{B}_n)$ and the number of labeled histories $h(\mathcal{B}_n)$ in Table \ref{tbl:balanced_numbers}.

Theorem~\ref{thm:graph_decomp} gives us a recurrent description of the diagrams $D(\mathcal B_n)$:
\[
    D(\mathcal B_1) = I,
\]
\begin{equation} \label{eq:decomp_balances}
    D(\mathcal B_n) = \cphat{D(\mathcal B_{n-1})}{D(\mathcal B_{n-1})}.
\end{equation}
Note that the example of balanced trees shows that the $\widehat \square$-operator is not associative, as 
\[
    (I\house I)\house(I\house I) = D(\mathcal B_3) \neq  D(\mathcal M_3) = ((I\house I)\house I)\house I.
\]

\begin{figure}[ht]\centering
\includegraphics[width=0.55\linewidth]{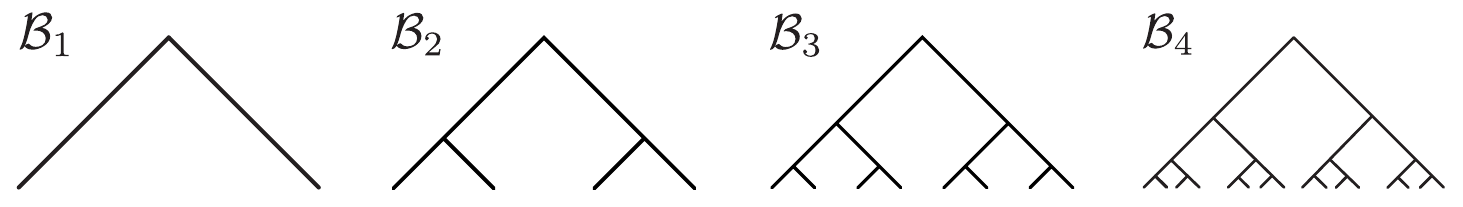}
\caption{Balanced trees $\mathcal B_n$ for $n=1,2,3,4$.}
\label{fig:families1}
\end{figure}

\begin{figure}[ht]\centering
\includegraphics[width=0.66\linewidth]{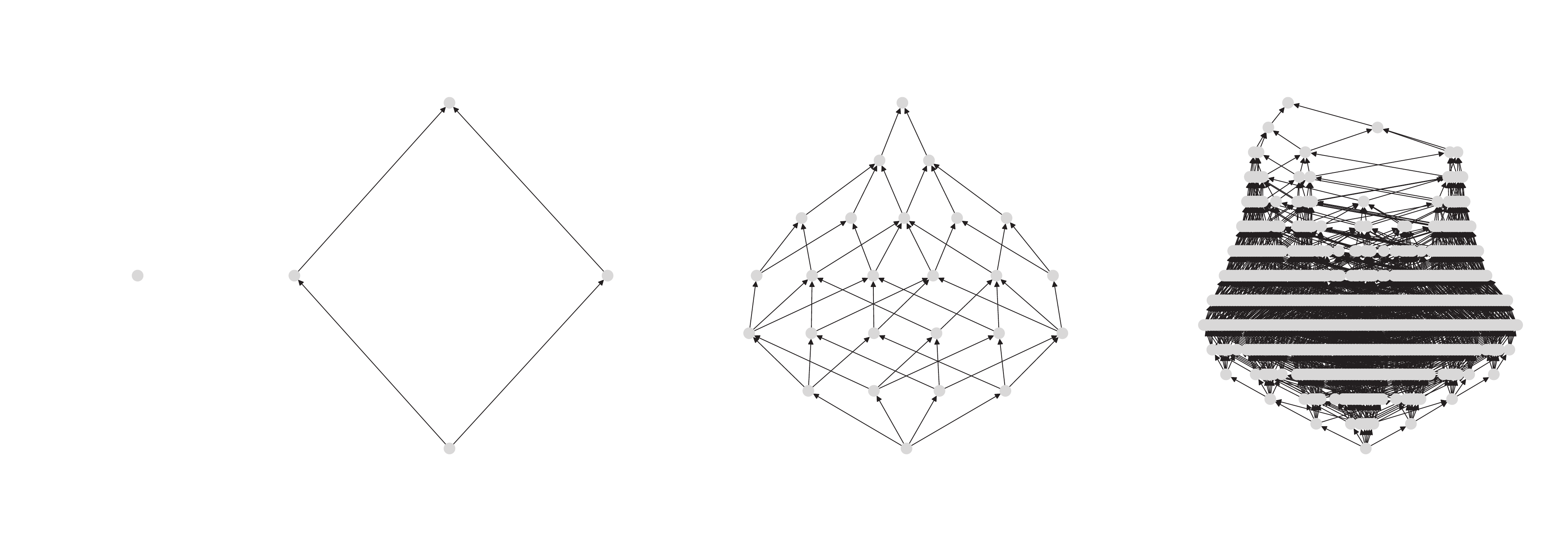}
\caption{Hasse diagrams corresponding to balanced trees $\mathcal B_n$ for $n=1,2,3,4$. The diagram for $\mathcal B_1$ consists of a single node and no edges.}
\label{fig:balanced}
\end{figure}

\clearpage 

\section{Nonmatching tree pairs}\label{sec:nonmatching}

We have been considering gene trees and species trees that have the same topology. However, a feature of the lattice construction for ancestral configurations is that the approach can be extended to cases in which the topologies of the gene tree and species tree do not necessarily match.

In biological settings involving tree pairs, the species tree $S$ is generally treated as \emph{fixed} and the gene tree $G$ is treated as \emph{varying}, as a gene tree can be viewed as a random quantity conditional on a species tree~\citep{degnanRosenberg2009}. The species tree is determined by a history of population subdivisions, and gene trees evolve conditionally on that history.

However, when considering lattices of ancestral configurations, the mathematically natural point of view is to instead treat $G$ as \emph{fixed} and to consider its realizations on many different species trees. We will show that in this setting, no ``anomalies'' happen; that is, for a fixed gene tree, the maximal number of ancestral configurations across all species tree topologies is attained by the matching pair. 

For a pair of trees $(G,S)$ with leaf labels bijectively assigned from the same label set, define $C(G,S)$ to be the lattice of root ancestral configurations that appear in realizations of a gene tree $G$ on a species tree $S$. Similarly, we use corresponding notation for other sets and counts we have associated to lattices---reverting to $c(G,S)$, $H(G,S)$, and $h(G,S)$ in place of $c(G)$, $H(G)$, and $h(G)$, and so on. 

We will need the concept of antipodal leaves, following \cite{rosenberg2019lonely}.

\begin{definition}
Consider a pair of trees $(G,S)$ with leaves bijectively labeled by the same label set $X = \{x_1,x_2,\ldots, x_n\}$. $S_\ell$ denotes the ``left'' subtree of the root of $S$, and $S_r$ denotes the ``right'' subtree of the root of $S$. We call two leaves $x_i$, $x_j$ of $G$ \emph{antipodal} if the corresponding leaves in $S$ belong to different subtrees of the root of $S$, with $x_i \in S_\ell$ and $x_j \in S_r$ or vice versa. Similarly, we call two edges $e_1,e_2$ of $G$ \emph{antipodal} if at least one pair of leaves $(v_1, v_2)$, with $v_1$ descended from $e_1$ and $v_2$ descended fom $e_2$, has its two constituent leaves in opposite subtrees of the root of $S$. 
\end{definition}

Now we characterize $C(G,S)$; an example appears in Figure~\ref{fig:nonmatching_example}. First, a trivial proposition states that no ``new'' ancestral configurations arise if we consider a non-matching rather than a matching species tree.
\begin{proposition}\label{prop:nonmatching_are_subsets}
	Consider a pair of trees $(G,S)$ with leaves bijectively labeled by the same label set $X = \{x_1,x_2,\ldots, x_n\}$. As a set, $C(G,S) \subseteq C(G,G)$. 
\end{proposition}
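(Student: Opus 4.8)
The plan is to prove the inclusion by recognizing both $C(G,S)$ and $C(G,G)$ as collections of \emph{cuts} of $G$ --- sets of lineages (edges) of $G$ whose associated leaf sets partition the label set $X$ --- and then showing that every cut arising from a realization on an arbitrary $S$ already arises from a realization on the matching tree. The whole argument is a short membership check, so I would organize it as three observations followed by a one-line conclusion.

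First I would record the elementary fact that, in any realization $R$ of $G$ on any species tree, the lineages of $G$ present at a fixed point always constitute a cut of $G$: coalescence never removes a leaf, each leaf lies below exactly one present lineage, and each present lineage is identified by a node $v$ of $G$ with associated leaf set $L_v$. Applying this at the point right before the root of $S$ shows that every $A \in C(G,S)$ is a cut of $G$.

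Next I would argue that such a cut never contains the root lineage of $G$. Because $S$ splits $X$ into the two nonempty leaf sets of $S_\ell$ and $S_r$, and because two lineages lying in opposite root subtrees of $S$ can coalesce only at or above the root of $S$ and never strictly before it, every lineage present right before the root of $S$ has its leaf set contained within a single root subtree of $S$; equivalently, no such lineage contains a pair of antipodal leaves. In particular the root lineage of $G$, whose leaf set is all of $X$ and hence spans both subtrees of $S$, cannot be present, so $A$ is a cut of $G$ that does not coalesce the root and therefore has at least two elements. Finally I would identify $C(G,G)$ with the full collection of cuts of $G$ that omit the root lineage, using the recurrence eq.~\ref{eq:recur_enum}: by induction on the number of leaves, the union of its four terms equals $\left(\{\{r_\ell\}\} \cup C(S|_{r_\ell})\right) \otimes \left(\{\{r_r\}\} \cup C(S|_{r_r})\right)$, which pairs every cut of the left subtree (including the fully coalesced lineage $\{r_\ell\}$) with every cut of the right subtree, i.e.\ it generates exactly every cut of $G$ not coalescing its root. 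Combining the three observations gives $A \in C(G,G)$ for each $A \in C(G,S)$, hence $C(G,S) \subseteq C(G,G)$.

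I expect the main obstacle to be the last step: making rigorous the claim that \emph{every} root-omitting cut of $G$ is in fact realizable in the matching case, so that $C(G,G)$ is the whole set of such cuts rather than a proper subset. The inductive unpacking of eq.~\ref{eq:recur_enum} is the cleanest route, and the only care needed is to track the two degenerate pieces --- fully coalescing one subtree down to $\{r_\ell\}$ or to $\{r_r\}$ --- so that the union of the four terms is correctly seen to be the full product of the left and right cut collections.
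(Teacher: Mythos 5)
Your proof is correct, but it takes a genuinely different route from the paper's. The paper argues membership $c \in C(G,G)$ directly and constructively: given $c \in C(G,S)$, it builds an explicit realization of $G$ on the species tree $G$ in which all coalescences needed to form each lineage $\ell_i \in c$ are placed immediately above the species-tree node corresponding to $\ell_i$ (the most recent common ancestor of its descendant leaves); this placement is legal because no non-root lineage of $G$ contains an antipodal leaf pair when the species tree is $G$ itself. You instead intersect two structural facts: (i) every element of $C(G,S)$, for \emph{any} $S$, is a cut of $G$ --- an antichain of lineages whose leaf sets partition $X$ --- that omits the root lineage, since a lineage formed strictly below the root of $S$ has all its leaves in a single root subtree of $S$; and (ii) $C(G,G)$ is \emph{exactly} the set of root-omitting cuts, obtained by unwinding eq.~\ref{eq:recur_enum} inductively into the product form $\left(\{\{r_\ell\}\}\cup C(G|_{r_\ell})\right) \otimes \left(\{\{r_r\}\}\cup C(G|_{r_r})\right)$. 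Both arguments are sound; the trade-off is this. Your route outsources the realizability content to the cited recurrence (eq.~\ref{eq:recur_enum}), and in exchange produces a stronger, reusable statement --- an exact combinatorial characterization of $C(G,G)$ --- after which the inclusion is a purely set-theoretic check; note that only the containment ``every root-omitting cut lies in $C(G,G)$'' is actually needed, and that your induction silently relies on the base cases (empty $C$ for a single leaf, $\{\{a,b\}\}$ for a cherry) and on disjointness of the left and right leaf sets so that root-omitting cuts of $G$ correspond bijectively to pairs of subtree cuts. The paper's route is self-contained, never invoking the recurrence, and its explicit construction is the template reused in the proof of Theorem~\ref{thm:nonmatching}; so while your characterization of $C(G,G)$ is arguably the more illuminating statement, the paper's realization-building technique is the one that pays off later.
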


\begin{proof}
    Let $c \in C(G,S)$ be an ancestral configuration, $c = \{\ell_1,\ell_2,\ldots, \ell_k\}$; $c$ is a set of lineages of $G$ that, looking backward in time, can appear right before the root in a realization of gene tree $G$ on species tree $S$. We prove $c \in C(G,G)$ by examining a specifically constructed realization, $R$, of gene tree $G$ on species tree $G$. 

    In the realization $R$ of gene tree $G$ on species tree $S=G$, for each $\ell_i$ in configuration $c$ that consists of a single leaf of $G$, the leaf simply does not engage in any coalescences in $R$. Next, consider each $\ell_i$ in configuration $c$ that consists of an internal node of $G$. In $R$, for each such $\ell_i$, each coalescence required for producing the lineage $\ell_i$ happens immediately ancestral to the corresponding node $\ell_i$ of a species tree $G$, i.e.~immediately ancestral to the most-recent-common-ancestor node of the descendant leaves of $\ell_i$. It is always possible to arrange this, because, with $G$ as a species tree, each lineage of gene tree $G$ with the exception of the root lineage (which does not appear in ancestral configurations) does not contain antipodal pairs among its descendant leaves.

    The constructed realization $R$ displays ancestral configuration $c \in C(G,G)$ at the root. 
\end{proof}

On the level of lattice diagrams, $D(G,S)$ has the following description:
\begin{theorem}\label{thm:nonmatching}
	Let $(G,S)$ be a tree pair. Let $\mathbf c \in C(G,G)$ be the unique ancestral configuration in which all non-antipodal pairs of leaves in $G$, and only those pairs, have coalesced. Then $D(G,S)$ is a subgraph of $D(G,G)$, defined as the subset of edges and vertices that appear in at least one path from a minimal element, the ancestral configuration consisting of all leaf lineages, to $\mathbf c$.
\end{theorem}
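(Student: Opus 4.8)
The plan is to realize $C(G,S)$, as a subposet of the lattice $C(G,G)$ under the coalescence order $\prec$, as an interval $[c_{\min},\mathbf c]$, where $c_{\min}$ is the all-leaves configuration, and then to observe that the subgraph of $D(G,G)$ swept out by the paths from $c_{\min}$ to $\mathbf c$ is exactly the Hasse diagram of this interval. I would first make $\mathbf c$ explicit: call a node $v$ of $G$ \emph{non-antipodal} if no two of its descendant leaves are antipodal, and let $\mathbf c$ be the configuration whose lineages are the \emph{maximal} non-antipodal nodes of $G$. A short argument shows that these nodes have descendant-leaf sets partitioning $X$ (each leaf lies under a unique maximal non-antipodal node) and form an antichain, so that $\mathbf c$ is a well-defined configuration; it is the configuration reached from $c_{\min}$ by performing every coalescence that does not merge an antipodal pair of lineages, which yields both the ``only those pairs have coalesced'' description and its uniqueness.

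Next I would dispatch the purely order-theoretic half. Because $C(G,G)$ is finite and graded by the number of lineages (each covering edge corresponds to exactly one coalescence, by the definition of $D$ in Section~\ref{sec:diagrams}), any configuration $a$ with $c_{\min}\prec a\prec\mathbf c$ lies on a saturated chain from $c_{\min}$ to $\mathbf c$: concatenate a saturated chain from $c_{\min}$ up to $a$ with one from $a$ up to $\mathbf c$. Hence the vertices appearing on some $c_{\min}$-to-$\mathbf c$ path are exactly the elements of the interval $[c_{\min},\mathbf c]$, and, since an interval is convex (if $a\prec z\prec b$ with $b\prec\mathbf c$ then $z\prec\mathbf c$), the covering relations among them are inherited from $C(G,G)$; thus the edges lying on some such path are exactly the covers inside the interval. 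This identifies the subgraph in the statement with the Hasse diagram of the induced subposet on $[c_{\min},\mathbf c]$.

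It then remains to prove the set equality $C(G,S)=[c_{\min},\mathbf c]$, which is the heart of the theorem. For the inclusion $C(G,S)\subseteq[c_{\min},\mathbf c]$, I would argue that in any realization of $G$ on $S$ every lineage present immediately before the root of $S$ is formed by coalescences occurring strictly below the root, hence within a single root subtree $S_\ell$ or $S_r$; inductively its descendant leaves all lie on one side of the root of $S$, so the lineage is non-antipodal and is therefore contained in some block of $\mathbf c$. Coalescing the lineages of such a configuration $a$ within each block of $\mathbf c$ yields $\mathbf c$, so $a\prec\mathbf c$. For the reverse inclusion, given $a\prec\mathbf c$ (so that every lineage of $a$ is non-antipodal), I would build a realization on $S$ exactly as in the proof of Proposition~\ref{prop:nonmatching_are_subsets}: each lineage of $a$, having all of its leaves on one side of the root of $S$, can be assembled by the requisite coalescences within that root subtree (all the relevant leaf lineages meet in the root edge of that subtree), while distinct lineages of $a$ are simply never coalesced with one another. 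The configuration present immediately before the root of $S$ is then $a$, so $a\in C(G,S)$; taking $a=\mathbf c$ shows $\mathbf c\in C(G,S)$ and hence $\mathbf c=\max C(G,S)$. Combining this with the preceding paragraph, $D(G,S)$ is the Hasse diagram of $C(G,S)=[c_{\min},\mathbf c]$, which is the stated subgraph.

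The main obstacle is the reverse inclusion together with the correct reading of $\mathbf c$: one must verify both that a non-antipodal lineage can be assembled entirely within one root subtree of $S$ in the topological order dictated by $G$, and that the unwanted coalescences between distinct blocks can always be declined within a valid realization. A secondary subtlety is that ``all non-antipodal pairs of leaves have coalesced'' must be interpreted as coalescing maximally without ever forming an antipodal lineage --- in general a leaf pair lying on the same side of $S$ is merged in $\mathbf c$ only when its entire $G$-MRCA subtree is non-antipodal --- and I would check the partition and antichain claims underlying well-definedness with this interpretation in mind.
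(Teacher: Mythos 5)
Your proposal is correct and takes essentially the same route as the paper: both halves of your argument --- that every configuration realizable on $S$ consists of non-antipodal lineages and hence lies below $\mathbf c$, and that every configuration below $\mathbf c$ is realizable on $S$ via the explicit construction from Proposition~\ref{prop:nonmatching_are_subsets} --- are exactly the paper's two inclusions $D(G,S) \subseteq D_{\mathbf c}$ and $D_{\mathbf c} \subseteq D(G,S)$. The extra material you supply (the explicit description of $\mathbf c$ via maximal non-antipodal nodes of $G$, and the graded-poset argument identifying the path-swept subgraph with the Hasse diagram of the interval $[c_{\min}, \mathbf c]$) fills in steps the paper treats as implicit, so it constitutes added rigor rather than a different method.
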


It is clear that $\mathbf c$ is unique: by specifying what coalescences of lineages lead to $\mathbf c$ being displayed at the root, we specify what lineages are present in $\mathbf c$.

\begin{proof}
First, applying Proposition~\ref{prop:nonmatching_are_subsets}, we claim that $C(G,S)$ forms a sublattice of $C(G,G)$ or, equivalently, $D(G,S)$ is a subgraph of $D(G,G)$. We can see this by noting that an edge $e$ of $D(G,S)$ corresponds to a coalescence of certain lineages of $G$, so it can be found in $D(G,G)$.

We finish the proof by showing that $D(G,S)$ is precisely the subgraph $D_{\mathbf{c}}$, where $D_{\mathbf c}$ is defined as the subset of edges and vertices that appear in at least one path from a minimal element, the ancestral configuration consisting of all leaf lineages, to $\mathbf c$. We show $D(G,S) \subseteq D_{\mathbf c}$ and $D_{\mathbf c} \subseteq D(G,S)$. 

Each ancestral configuration in $C(G,S)$ contains only lineages that can appear below (that is, ``descended from'') the root of $S$. Hence, each configuration $d$ in $C(G,S)$ precedes $\mathbf c$ with respect to the poset relation, because if $d \not\prec \mathbf c$, then $d$ contains a lineage belonging to an antipodal pair, and such a lineage cannot appear below the species tree root in a realization of $G$ on $S$. Thus, $D(G,S) \subseteq D_{\mathbf c}$.

For the reverse direction, each configuration $d$ with $d \prec \mathbf c$ must contain only lineages produced by coalescences of non-antipodal lineages of $G$. We claim that it therefore is a valid root configuration for $G$ realized in $S$. The proof of this claim is similar to the proof of Proposition~\ref{prop:nonmatching_are_subsets}: define a realization $R$ of $G$ on $S$ by letting each $\ell_i \in d = \{\ell_1, \ell_2, \ldots, \ell_k\}$ happen immediately above (that is, ``ancestral to'') the most recent node of $S$ ancestral to all descendants of $\ell_i$ in $G$. The fact that all lineages are produced by non-antipodal coalescences implies that each $\ell_i$ is placed below the root of $S$ in realization $R$. Hence, $D_{\mathbf c} \subseteq D(G,S)$, completing the proof.
\end{proof}

\begin{figure}
    \centering
    \includegraphics[width=0.82\textwidth]{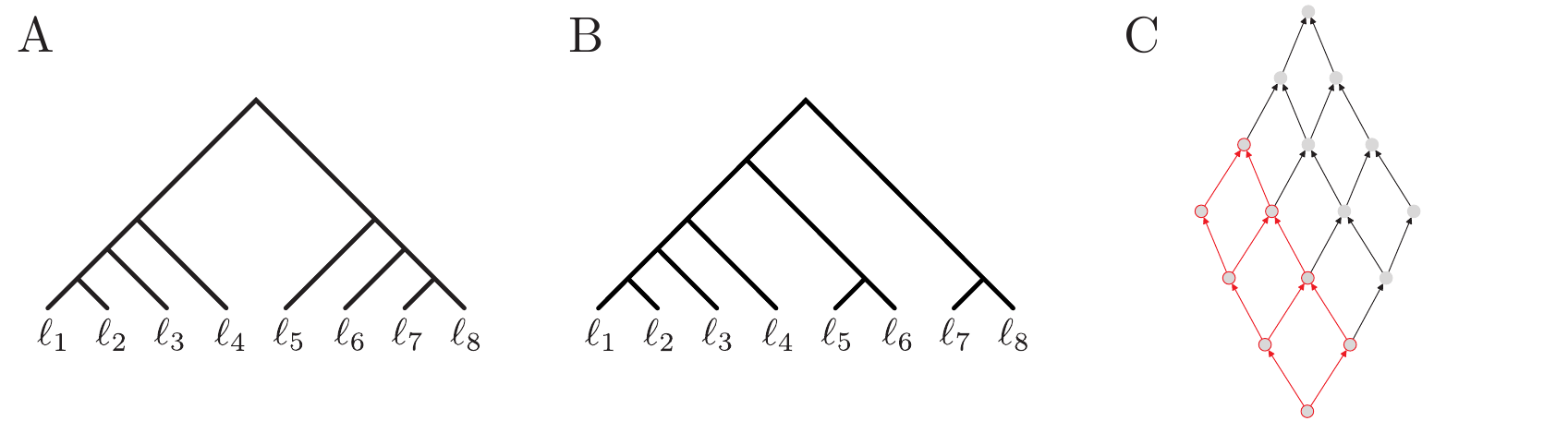}
    \caption{An example lattice of ancestral configurations in the case of nonmatching $G$ and $S$. (A) The gene tree $G = \mathcal C_{4,4}$. (B) A species tree $S$, different from $G$, but having the same leaf labels. (C) Diagrams of ancestral configurations. The digraph $D(G,G)$ is shown in black, as in Figure~\ref{fig:bicat}, and $D(G,S)$ is a subgraph shown in red. We observe that the coalescence $(\ell_7, \ell_8)$ is possible below the root of $S$, while $\ell_6$ and $\ell_7$ form an antipodal pair, forcing $(\ell_6,(\ell_7,\ell_8))$ to happen above the root and the corresponding edges of $D(G,G)$ to not be included in $D(G,S)$.}
    \label{fig:nonmatching_example}
\end{figure}

We now characterize pairs of trees having the maximal number of root ancestral configurations for fixed $G$. Because $c(G,S) \leq c(G,G)$ for all choices of $S$, the trees $S$ for which $c(G,S) = c(G,G)$ are the pairs that have the maximum possible number of root ancestral configurations.
\begin{corollary}
	 The species trees $S$ in pairs $(G,S)$ that for a fixed $G$ have the largest number of root ancestral configurations are exactly the species trees $S$ for which $(G,S)$ has the same set of pairs of antipodal leaves as $(G,G)$. 
\label{coro:12}
\end{corollary}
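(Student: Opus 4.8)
The plan is to upgrade the inequality already in hand. Proposition~\ref{prop:nonmatching_are_subsets} gives $c(G,S)\le c(G,G)$ for every $S$, with equality at the matching pair, so it suffices to prove that $c(G,S)=c(G,G)$ holds \emph{precisely} when $(G,S)$ and $(G,G)$ share the same antipodal leaf pairs. I would first translate the antipodal condition into a statement about root splits, then invoke Theorem~\ref{thm:nonmatching} to detect equality of the vertex counts through the distinguished configuration $\mathbf c$, and finally identify the condition $\mathbf c = c_{\max}$ with root-split equality. Chaining these equivalences gives the corollary.

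First I would record an elementary reformulation of the hypothesis. The antipodal pairs of $(G,S)$ are exactly the leaf pairs separated by the bipartition $\{S_\ell,S_r\}$ of the label set, and those of $(G,G)$ are the pairs separated by $\{G_\ell,G_r\}$, where $G_\ell,G_r$ denote the leaf sets of the two root subtrees of $G$. Two bipartitions of a finite set induce the same collection of separated pairs if and only if they are the same bipartition. Hence ``$(G,S)$ and $(G,G)$ have the same antipodal pairs'' is equivalent to ``$S$ and $G$ have the same root split,'' that is, $\{S_\ell,S_r\}=\{G_\ell,G_r\}$.

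Next I would reduce the enumerative equality to the survival of a single vertex. By Theorem~\ref{thm:nonmatching}, $D(G,S)$ is the subgraph of $D(G,G)$ consisting of all vertices and edges on some path from the minimal element to $\mathbf c$. Since $c_{\max}$ is the unique maximum of $C(G,G)$, we have $c_{\max}\prec\mathbf c$ only when $\mathbf c=c_{\max}$; consequently $c_{\max}\in V(D(G,S))$ if and only if $\mathbf c=c_{\max}$. When $\mathbf c=c_{\max}$, every vertex and edge of $C(G,G)$ lies on a maximal chain through $c_{\max}$, so $D(G,S)=D(G,G)$; when $\mathbf c\ne c_{\max}$, the vertex $c_{\max}$ is absent and $V(D(G,S))\subsetneq V(D(G,G))$. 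As both vertex sets are finite with $V(D(G,S))\subseteq V(D(G,G))$, equality of cardinalities is equivalent to $D(G,S)=D(G,G)$, so $c(G,S)=c(G,G)$ if and only if $\mathbf c=c_{\max}$.

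Finally I would identify $\mathbf c=c_{\max}$ with root-split equality. The top configuration $c_{\max}=\{r_\ell,r_r\}$ consists of the two root lineages of $G$, whose descendant leaf sets are $G_\ell$ and $G_r$. By the construction of $\mathbf c$, a lineage of $G$ can appear below the root of $S$ exactly when its descendant leaves carry no antipodal pair, i.e.\ lie inside a single root subtree of $S$; so $\mathbf c=c_{\max}$ iff each of $G_\ell,G_r$ is contained in $S_\ell$ or in $S_r$. A short partition argument then forces $\{G_\ell,G_r\}=\{S_\ell,S_r\}$: both cannot lie in the same part (the complementary part would be empty), and since $G_\ell\sqcup G_r=S_\ell\sqcup S_r$, one root clade must equal $S_\ell$ and the other $S_r$; conversely, equal root splits make every proper clade of $G$ antipodal-free and hence give $\mathbf c=c_{\max}$. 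The main obstacle is the middle step: one must verify carefully that the inclusion $D(G,S)\subseteq D(G,G)$ fails to be strict only when the top configuration $\mathbf c$ coincides with $c_{\max}$, using that $c_{\max}$ is unreachable on any path terminating strictly below it and that every element of a finite bounded lattice lies on a maximal chain.
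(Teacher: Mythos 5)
Your proposal is correct and follows essentially the same route as the paper: Proposition~\ref{prop:nonmatching_are_subsets} for the inequality, Theorem~\ref{thm:nonmatching} to reduce equality of the counts to the condition $\mathbf c = c_{\max}$, and then translation of that condition into equality of root splits (equivalently, equality of the antipodal-pair sets). The only difference is that you spell out steps the paper leaves implicit, namely the equivalence between matching antipodal-pair sets and matching root bipartitions, and the fact that every vertex and edge of a finite bounded lattice lies on a maximal chain from bottom to top, which is what justifies $D(G,S)=D(G,G)$ when $\mathbf c = c_{\max}$ and strict containment otherwise.
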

In other words, decomposing the leaves of a tree at the root into two subsets---those in one subtree of the root and those in the other---the pairs $(G,S)$ that for a fixed $G$ have the largest number of root ancestral configurations are those for which $S$ produces the same decomposition as $G$.
\begin{proof}
    Proposition~\ref{prop:nonmatching_are_subsets} gives $c(G,S) \leq c(G,G)$. Suppose $G$ and $S$ have the same decomposition at the root,
    Theorem~\ref{thm:nonmatching} gives $c(G,S) = c(G,G)$, as the ancestral configuration $\mathbf c$ of $(G,G)$ in which all non-antipodal pairs of lineages have coalesced is the maximal element of both $C(G,G)$ and $C(G,S)$. Hence, $D(G,S) = D(G, G)$. 
    
    Now we show that these are the only $S$ for which $c(G,S) = c(G,G)$. Suppose $c(G,S) = c(G,G)$. Then we must have $D(G,S)=D(G,G)$; because $D(G,S)$ is a sublattice of $D(G,G)$, this equality can hold only if the maximal elements of the two lattices are the same. In other words, $\mathbf c$---the maximal element of $D(G,S)$---is also the maximal element of $D(G,G)$. We conclude that all coalescences of $G$ on $S$ involve only non-antipodal pairs of lineages with respect to $S$, so that $S$ and $G$ possess the same decomposition at the root.
\end{proof}

\begin{example}
	The pair $(G,S) = (\mathcal P_n, \mathcal C_n)$ with $n\geq 5$ satisfies the property in Corollary \ref{coro:12}, and therefore $c(\mathcal P_n, \mathcal C_n) = c(\mathcal P_n, \mathcal P_n)$. 
	\label{examp:13}
\end{example}

Note that although the lattice construction quickly enables the result $c(G,S) \leq c(G,G)$ (Proposition \ref{prop:nonmatching_are_subsets}), and Corollary \ref{coro:12} and Example \ref{examp:13} show that $c(G,S)=c(G,G)$ for some $S \neq G$, it does not follow from these results that $c(G,S) \leq c(S,S)$ for $G \neq S$; indeed, \cite{disanto2017enumeration} reported a counterexample, showing that $c(\mathcal P_n,\mathcal C_n) > c(\mathcal C_n, \mathcal C_n)$ for $n \geq 6$. In general, although the problem of finding $S$ that maximizes $c(G,S)$ is solved, with maximum at $S=G$ and potentially other $S$ as in Corollary \ref{coro:12}, the solution for the problem of finding $G$ that maximizes $c(G,S)$ for fixed $S$ is not currently known.

If \emph{both} $G$ and $S$ are allowed to vary, then the maximal number of ancestral configurations among pairs $(G,S)$ with $n$ leaves is obtained by combining Proposition 4 of \cite{disanto2017enumeration} with our Proposition~\ref{prop:nonmatching_are_subsets}. 
Proposition~\ref{prop:nonmatching_are_subsets} implies that $\max_{G,S} c(G,S) = \max_{G} c(G,G)$. The pair maximizing $c(G,G)$ over all possible $G$ was found by \cite{disanto2017enumeration} it is a certain recursively defined sequence of trees $T_n$. Our Proposition~\ref{prop:nonmatching_are_subsets} implies that $c(T_n,T_n)$ is also maximal among all pairs $(G,S)$.

\section{Discussion}
\label{sec:discussion}

We have demonstrated that the set of ancestral configurations $C(G,S)$ for a rooted binary gene tree $G$ and bijectively labeled species tree $S$ can be described by a lattice structure (Propositions~\ref{prop:acs_are_lattices} and \ref{prop:nonmatching_are_subsets}). For matching gene trees and species trees, we have constructed Hasse diagrams and  digraphs for lattices of ancestral configurations, finding a correspondence between the maximal chains of the lattice $C(S,S)$ and the labeled histories for $S$ (Theorem~\ref{thm:histories_paths}).

\subsection{Summary of results}

For a matching tree pair $(S,S)$, we provided the recursive decomposition of the associated digraph of its ancestral configurations (Theorem~\ref{thm:graph_decomp}). For a series of examples---$p$-pseudocaterpillars (Section \ref{sec:pcat_ex}), bicaterpillars (Section \ref{subsec:bicat_ex}), lodgepole trees (Section \ref{subsec:lodg1_ex}), a modification of lodgepole trees (Section \ref{subsec:lodg2_ex}), and fully balanced trees (Section \ref{subsec:bal_ex})---the recursive decomposition enabled us to describe the lattices associated with ancestral configurations of specific types of trees. For each tree family, we counted the associated numbers of root ancestral configurations and paths from the minimal element to the maximal element of the lattice. The recursive decomposition can be extended to nonmatching tree pairs (Theorem \ref{thm:nonmatching}), suggesting the possibility of extending our numerical evaluations for matching families to a variety of nonmatching cases.

Our work can be seen as an extension of the work of \cite{disanto2017enumeration} to define a recurrence for the \emph{number} (eq.~\ref{eq:recur_count}) of ancestral configurations and for the \emph{set} of ancestral configurations (eq.~\ref{eq:recur_enum}). Here, we have extended the recurrence to \emph{lattices}, as represented by digraphs, by using a cartesian product of graphs. This recurrence provides a quick way of building digraphs of ancestral configurations algorithmically. 

By uncovering a lattice structure for ancestral configurations, we have seen that the root ancestral configurations for a pair of matching trees encode information about labeled histories. Through path-counting on digraphs associated to ancestral configurations, we have obained visual proofs of special cases of the general eq.~\ref{eq:steel_exp} on the number of labeled histories associated with a labeled topology. For example, we proved that the number of labeled histories is $h(\mathcal C_{n,n}) = \binom{2n-2}{n-1}$ for bicaterpillars (Section~\ref{subsec:bicat_ex}), and $h(\mathcal L_n) = (2n-1)!!$ for lodgepole trees (Section~\ref{subsec:lodg1_ex}).  

\subsection{Connections within and beyond mathematical phylogenetics}

The study of ancestral configurations traces to the computation of \cite{wu2012coalescent} to evaluate gene tree probabilities conditional on species trees. The lattice structure facilitates enumerations of ancestral configurations that underlie computational complexity calculations in this evaluation. It further provides an algebraic perspective for understanding the how the complexity depends on tree topologies. Notably, constructions reminiscent of those in Section~\ref{sec:diagrams} have previously been used by \cite{song2006counting}, with a different sense of the term ``ancestral configuration,'' in enumerations that underlie computations related to genealogies with recombination.

Our work describes a connection between ancestral configurations and another key structure of mathematical phylogenetics, namely labeled histories. Under an evolutionary model in which each lineage is equally likely to be the next to bifurcate---the \emph{Yule} or \emph{Yule--Harding} model---the probability of obtaining a specified binary leaf-labeled rooted tree is proportional to its number of labeled histories~\citep{harding1971probabilities, brown1994probabilities, steel2001properties, wiehe2021, King2023}. Labeled histories then provide a convenient structure for evaluating probabilities of tree features under the Yule--Harding model, appearing throughout its associated calculations. By showing that labeled histories represent paths through a lattice whose vertices represent ancestral configurations, our work highlights the connection between the two structures: in some sense, it is enough to compute the set of ancestral configurations to also know the set of labeled histories, and vice versa.

The connection between labeled histories and root configurations emerged from the study of \emph{maximal chains} in the lattice $D(S)$. Approaches in which a bijection is defined between a combinatorial structure of interest, such as labeled histories in our case, and chains in some lattice are common in enumerative combinatorics \citep[Ch.~3]{stanley} and theoretical computer science~\citep[Ch.~14]{garg2015introduction}. Interestingly, lattices have recently appeared in a related mathematical phylogenetic study, in which \cite{palacios2021enumeration} formalized a sense of ``refinement'' for ``perfect phylogenies.'' The lattices of \cite{palacios2021enumeration} are similar to ours, in that in both cases, the ``covering relations,'' corresponding to edges of Hasse diagrams, arise from a ``collapse'' of objects on a tree --- tree edges collapse to a point for \cite{palacios2021enumeration} and lineages collapse to an ancestor in our case. \cite{palacios2021enumeration} used the lattice construction to count certain classes of perfect phylogenies; taken together with that study, our work illustrates that partial orders and lattices have much potential to contribute to enumerative analysis in mathematical phylogenetics.

One promising direction for further application of the lattice construction is in the analysis of phylogenetic networks. Certain classes of phylogenetic network that, like rooted trees, can be temporally embedded to represent biological processes occurring in time \citep{bienvenu2022rtcn, Mathur2023}, possess analogous structures to ancestral configurations: sets that describe the lineages present at a node of a network at a given moment in time. These analogous structures may be possible to understand by studying associated lattices of root configurations for trees displayed by the networks.

Our analyses of particular tree families illustrate the mathematical approach of examining caterpillars and related shapes as relatively simple examples in the study of combinatorial structures involving trees. Caterpillar-like trees can provide the simplest case of recursive tree problems, as one of the subtrees of the root is trivial. Indeed, our caterpillar-based recursive result in Proposition~\ref{prop:house_with_caterpillar} recalls analyses of coalescent histories, a combinatorial structure that, like ancestral configurations, is defined for gene trees and species trees~\citep{degnan2005gene, rosenberg2007counting}. For coalescent histories, caterpillar-like families have been among the first tree shapes to receive extensive analysis~\citep{rosenberg2013coalescent,disanto2015asymptotic,himwich2020,alimpiev2021enumeration}.

Beyond their role in mathematical phylogenetics, ancestral configurations contribute to broader problems in analytic combinatorics and the analysis of algorithms. For example, \cite{disanto2017enumeration} found that the growth of ancestral configurations for balanced families produced an elegant example of the \cite{Aho1973} technique for evaluating asymptotics of quadratic recursions---in a form that was then used in assessing the \cite{Colijn2018} bijection of unlabeled trees with positive integers~\citep{Rosenberg2021}. In \cite{disanto2022distributions}, the asymptotic growth of the coefficients of a generating function in a problem concerning ancestral configurations was derived without the generating function itself---using only a Riccati differential equation satisfied by the generating function. By reframing ancestral configurations in terms of new structures, the algebraic connections of the lattice formulation, ancestral configurations, and labeled histories have potential to inspire further results with broad mathematical connections in combinatorics.

\subsection{Extensions}

Several enumerative problems arise directly from the bijection between paths on digraphs $D(S)$ and labeled histories. For example, how many labeled histories of $S$ are possible such that the realization of gene tree $S$ on species tree $S$ can display root ancestral configuration $c$? In the language of digraphs, each labeled history corresponds to a path, and each ancestral configuration corresponds to a vertex, so that this enumerative problem is equivalent to a problem of counting paths from the minimal to the maximal element that pass through a chosen node. As a path counting problem, the problem is similar to some considered for coalescent histories~\citep{himwich2020, alimpiev2021enumeration}.

Similarly, suppose partial information about the temporal ordering of \emph{some} (possibly all) of the coalescences of $S$ on $S$ is assumed: for example, suppose coalescence $\ell_i$ precedes $\ell_j$. In this case, we ask for the number of root ancestral configurations $c$ such that there exists a realization of $S$ on $S$ that displays $c$ and that satisfies given conditions on the temporal ordering of coalescences. The lattice interpretation of root ancestral configurations then allows us to reinterpret this problem as counting the number of vertices that are encountered by paths in a given subset of paths from the minimal to the maximal element.

\vskip .3cm
\noindent {\small
{\bf Acknowledgments.} We acknowledge support from NIH grant R01 HG005855.
}

{\small
\bibliographystyle{tpb}
\bibliography{sources.bib}
}

\end{document}